\newcommand{\R}{\mathbb{R}}%
\renewcommand{\vec}[1]{\boldsymbol{#1}}
\theoremstyle{plain}%
\newtheorem{theorem}{Theorem}%
\newtheorem{lemma}[theorem]{Lemma}%
\newtheorem{corollary}[theorem]{Corollary}%
\newtheorem{proposition}[theorem]{Proposition}%
\theoremstyle{definition}%
\newtheorem{definition}[theorem]{Definition}%
\newtheorem{algorithm}[theorem]{Algorithm}%
\newtheorem{example}[theorem]{Example}%
\theoremstyle{remark}%
\numberwithin{equation}{section}%
\numberwithin{theorem}{section}%
\begin{document}

\title[State transition graph structure and hysteresis]%
{The structure of state transition graphs in systems with return point memory:  I. General Theory}

\author[M.~Mungan]{Muhittin Mungan}%
\address{%
  Institut f\"{u}r angewandte Mathematik \\
  Bonn University \\
  Endenicher Allee 60 \\
  53115 Bonn, Germany}%
\email{mungan@iam.uni-bonn.de}%

\thanks{MMT acknowledges partial support through NSF grant CHE \#1464926, MM  was supported by the
    German Research Foundation (DFG) in the Collaborative Research Centre 1060 "The Mathematics
    of Emergent Effects" and DFG Project No. 398962893. This research was also supported in part through NSF 
    grant NSF PHY 17-48958.}

\author[M.M.~Terzi]{M. Mert Terzi}%
\address{%
  Physics Department \\
  Carnegie Mellon University \\
  Pittsburgh PA 15213, USA}%
\email{mterzi@andrew.cmu.edu}

\subjclass{Primary 82D30; Secondary 82C44} 
 
\keywords{Disordered systems, Hysteresis, Return point memory}

\begin{abstract}
We consider the athermal quasi-static dynamics (AQS) of disordered systems driven by an external field. 
Our interest is in an automaton description (AQS-A) that represents the AQS dynamics via a 
graph of state transitions triggered by the field and in the presence of    
return point memory (RPM) property, a tendency for the system to return to the same microstate 
upon cycling driving. The existence of three conditions, (1) a partial order on the set of 
configuration; 
(2) a no-passing property; and (3) an adiabatic response to monotonously changing fields, implies RPM. 
When periodically driven, such systems settle into a cyclic response after a transient of at most one period. 
While sufficient, conditions (1) - (3) are not necessary. We show that the AQS dynamics  provides 
a more selective partial order which, due to its explicit connection to hysteresis loops,  is a natural choice for 
establishing the RPM property. This enables us to 
consider AQS-A exhibiting RPM without   
necessarily possessing the no-passing property. We call such automata $\ell$AQS-A and work out the structure of their state transition graphs.
We find that the RPM property constrains 
the {\em intra-loop} structure of hysteresis loops, namely its hierarchical organization into  sub loops, but 
not the {\em inter-loop} structure. We prove that the topology of the intra-loop structure can be represented as 
an ordered tree and show that the corresponding state transition graph is planar. On the other hand, the RPM property does not 
significantly restrict the  inter-loop transitions. A system exhibiting RPM and 
subject to periodic forcing can undergo a large number of transient cycles before settling into a periodic response. Such systems can 
even exhibit subharmonic response. 
\end{abstract}

\maketitle


\section{Introduction}
\label{sec:intro}

When a deformable structure in a disordered medium is subjected 
to athermal, time-periodic forcing such as shearing, its dynamical state can  
evolve into a limit-cycle. The system subjected to repeated 
cycling, explores the energy landscape by traversing energy barriers 
into nearby states. The wandering around the energy landscape can 
last many cycles of irreversible changes, the ``training phase'', 
before a dynamic state is reached where the system is locked into a limit cycle, thereafter responding 
periodically to the driving. 
Such phenomena have been experimentally observed for a  
variety of systems, such as charge-density 
waves \cite{Fleming86}, magnetic multilayers \cite{Pierce2003, Pierce2005, Pierce2007, Hauet2008}, 
particle diffusions in sheared viscous  
suspensions \cite{Pine2005, Chaikin2008, Keimetal2014}, granular matter \cite{toiya2004transient,mueggenburg2005behavior,ren2013reynolds}
and amorphous solids \cite{haw1998colloidal, petekidis2002rearrangements, Lundberg2008}. 
Many of these systems involve particles with sizes of the order of $10^{-6}$ to $10^{-3}$m and thermal 
effects are often negligible. 
Minimal models have been formulated  and simulated    
\cite{Tangetal87,SNCLittlewood87, Chaikin2008, Dahmenetal2009, Keimetal2011, Libaletal2012,
Keimetal2013, Fioccoetal2014, Fiocco2015, Regev2015, RossoWyart2015,Sastry2017}
which qualitatively reproduce the experimental findings.   

One can think of the asymptotic response to the 
cyclic forcing as some form of memory that 
encodes features of the forcing, such as its 
amplitude\cite{Keimetal2011}. Indeed, experiments as well as numerical 
simulations suggest that the control parameter $\gamma$ 
of the forcing  can impose a type of order on the 
set of dynamically accessible cycle-states, 
organizing these into a hierarchy.  Thus for example 
in the experiments and models of diffusion in sheared 
viscous suspensions 
\cite{Chaikin2008,Keimetal2011, Keimetal2013,Keimetal2014} 
a periodic state reached 
upon driving at some amplitude $\gamma$, also retains its periodicity  
for smaller amplitudes $\gamma' < \gamma$,  
meaning that once the limit cycle at amplitude $\gamma$ 
is reached, subsequently 
reducing the amplitude to $\gamma' < \gamma$ causes the 
system to remain on a limit cycle, which now 
 traverses a subsegment of the original 
$\gamma$-cycle. On the other hand, subjecting 
the $\gamma$ limit-cycle to a forcing at an amplitude 
$\gamma' > \gamma$, a transient to a 
new limit-cycle sets in. This is the phenomenon of {\em marginal stability}, well-known 
within the context of charge-density-waves \cite{Tangetal87}. A recent comprehensive review on memory formation in 
matter can be found in \cite{keim2018memory}.

A strict hierarchical organization of limit-cycles and marginal stability
are features emerging in the hysteretic response of certain  
ferromagnets to oscillations of an applied magnetic 
field. They are generally associated with  
{\em return point memory} (RPM) \cite{Barker1983, Sethna93, SethnaNoiseReview2001, SethnaHyst2006}, 
in which a system periodically returns to a set of microstates and hence ``remembers'' these. 

It was shown theoretically  
that the presence of three conditions is sufficient for the 
RPM property to emerge \cite{Sethna93}: (1) existence of 
a partial ordering of states; (2) a no-passing (NP) property that 
implies that two ordered configurations will retain their 
ordering 
under a monotonous 
change in the forcing \cite{NoPassing}; (3) 
adiabaticity, a rate-independent response to 
any monotonous change in the forcing.  
Assumptions (1) - (3)  emerge naturally from the athermal 
relaxational dynamics of many disordered systems. Condition (3) in particular implies that 
the dynamics is athermal quasi-static (AQS) \cite{AQSMalandroLacks1999,AQSMaloney2006}. 
All of these conditions are not necessary however for RPM. 
Indeed, RPM has been observed in systems where the no-passing property (2)  does not hold, 
such as in the presence of anti-ferromagnetic interactions\cite{Deutschetal2004,KatzgraberZimanyi2006}. 
RPM effects have also been observed experimentally at the microscopic level in 
systems such as magnetic multilayers \cite{Pierce2003, Pierce2005, Pierce2007, Hauet2008},  
artificial spin ice\cite{Hanetal2008,Gilbert2016}, and recently colloidal suspensions \cite{Keim2018return}, where
due to kind of interactions, one again does not expect NP to hold. 
Given the possibility of experimentally and numerically observing the microscopic configurations of such 
systems,  it is of theoretical interest to pursue minimal models 
(AQS automata) that exhibit RPM without the NP property. This is the main goal of the present paper. 

Key to our results is the observation that the no-passing property 
relies on the existence of a partial order on the space of configurations that is then 
preserved by the dynamics. The AQS dynamics  
provides an alternative and arguably more natural partial order, the {\em dynamic partial order}. 
This partial order is compatible with the intrinsic partial order, but generally stronger: 
any pair of configurations ordered with respect to the dynamic order is also ordered 
with respect to the intrinsic partial order, but not necessarily {\em vice versa}. Moreover, 
the dynamic partial order is closely connected with the hysteresis loops in the state transition graphs. 
It therefore is a suitable starting point to  
define the RPM property. To emphasize its dynamic origin and the connection with loops, we will refer to this type of 
RPM as {\em loop} RPM ($\ell$RPM). Roughly speaking, the $\ell$RPM property prescribes whether a state can return to 
a previously visited state or not, namely whether there {\em exists} some forcing that will achieve this. The NP 
property implies that such a forcing exists and in addition imposes constraints on what this forcing 
should be. 

The introduction of $\ell$RPM allows us to disentangle features of systems exhibiting 
RPM but not NP, from those where RPM follows as a result of NP. 
Our central finding is that $\ell$RPM  {\em does not} impose strong constraints on the 
{\em inter-loop} structure, namely the way transitions occur between configurations belonging to 
{\em different} hierarchies of loops and sub loops. As a consequence, systems exhibiting $\ell$RPM and subject to 
time-periodic forcing, can suffer long transients before eventually settling into a limit cycle. This is in sharp 
contrast to RPM as a result of NP, where due to the NP property  
the length of the transient response to periodic forcing is at most one cycle. 

The organization of the paper and our main results are as follows. We start in Section \ref{sec:damamodels} with 
a formal definition of AQS automata (AQS-A), Def.~\ref{def:AQSA}. In this section we also 
introduce the dynamic partial order and define the $\ell$RPM property, Defs.~\ref{def:RPM}--\ref{def:RPMII}. 
We motivate our approach using the random field Ising model (RFIM) as an example, 
whose dynamics we reformulate as an AQS-A. We return 
to the RFIM throughout this section for illustration purposes. 
In Section \ref{sec:damaloop} we then develop the necessary concepts to describe the {\em intra-loop structure} 
of AQS-A possessing the $\ell$RPM property, working out the structure of the state transitions graph. 
The main results we establish here are:

\begin{itemize}
 \item[(1)] There exists a well-defined partition of a hysteresis loop into sub loops. 
 This leads to the representation of the hierarchical organization of sub-loops as an ordered tree, Thm.~\ref{prop:mert}.
\item [(2)] The state transition graph associated with a hysteresis loop and its sub loops is planar, Thm.~\ref{thm:planar}.
\end{itemize}
As we will show in Section \ref{sec:maxloopetc}, the $\ell$RPM property by itself does not restrict strongly 
the transitions ``between'' loops belonging to different hierarchies, the {\em maximal loops}. The inter-loop 
transition graph describes the transitions between these maximal loops, Def.~\ref{def:interLoop}. 
We construct two AQS-A examples where the inter-loop transitions involve an arbitrarily 
large number of maximal loops that are transient,  
Examples \ref{ex:shortTrans} and \ref{ex:longTrans}. 
Such inter-loop structures influence the length of the transient response and 
in Section \ref{sec:PF} we turn to the AQS-A subject to time-periodic forcing. 
As we will demonstrate, 
when $\ell$RPM is present but the no-passing property does not hold, the length of the transient can 
become arbitrarily long. A subharmonic response where the period of the limit cycle is an integral multiple of the period of the forcing is 
possible as well \cite{Deutschetal2003,Nagel2017}, Example \ref{ex:subharmonic}. 
Under time-periodic forcing, a short transient is ensured in the following case, however:
\begin{itemize}
 \item[(3)] Under $\ell$RPM and the marginal stability condition, 
Defs.~\ref{def:loopmarginality} and \ref{def:lAQSAmarginality}, a one period transient is always realized 
once the dynamics has reached a confining loop and  irrespective of the state in the loop it started out, 
  Thm.~\ref{thm:trans}.
\end{itemize}
Prop.~\ref{prop:AQSmarginality} provides a necessary and sufficient condition for marginal stability. 
In the absence of marginality, a limit-cycle will be reached inside the confining loop but the transients can be longer 
than one period. 
We next turn to the role of the no-passing property. NP is a statement about what can happen when the confinement assumption 
of Theorem \ref{thm:trans} does not hold and 
the state trajectory has to eventually leave the loop. It implies that the loops entered and left become increasingly more 
confining so that a confining loop is guaranteed to be found, Thm.~\ref{thm:NPInterloop}. For the case of periodic forcing, this  
means that such a confining loop is found within one period. Thus one role of NP in systems exhibiting the $\ell$RPM property is 
to constrain the inter-loop transitions so that the destination loop is always more confining then the source loop. 
This provides an alternative way to arrive at the result of \cite{Sethna93}.   
We conclude with a summary and discussion of our results in Section \ref{sec:discussion}, where we will also revisit the question of 
testing for RPM in the presence or absence of NP. In order not to interrupt the flow of the argument in the main 
text, the proofs of the Lemmas, Propositions and Theorems are collected  in Appendix \ref{app:proofs}. 

The description of  RPM by means of the state transitions on a countable state space is rather general, 
and hence applicable to a broader class of disordered systems, including those with continuous degrees of freedom, {\em e.g.}  the 
zero-temperature subthreshold behavior of CDW-type models such as the Fukuyama-Lee-Rice \cite{FuLee,LeeRice} or the 
toy model for depinning \cite{KMShort13,KMLong13}. In such systems the states of the automaton become  
equivalence classes of configurations that can be continuously deformed into each other. A quantitative analysis of the 
structure of the state transition graph of depinning toy model will be carried out in a companion paper \cite{TerziMungan2018}.  




\section{AQS automata and the return-point-memory property}
\label{sec:damamodels}

Before beginning with the formal set-up and definitions, it is useful to present a concrete example to motivate our approach. Throughout this section 
we will return to this example for illustration purposes. Readers interested in a more formal development of the subject can safely ignore these 
parts. We will explicitly indicate where an example begins and use  a $\maltese$ to mark where it ends. 
We will use this convention throughout the other sections as well.

\begin{example}[The random field Ising Model as an AQS automaton]
\label{ex:RFIM_I}

Consider the zero-temperature behavior of the random-field Ising model (RFIM) 
\cite{Sethna93, SethnaNoiseReview2001, SethnaHyst2006} with 
ferromagnetic interactions, a random impurity field and  
subject to a time-varying external magnetic field. 
Denote by $\mathcal{S}$ the set of configurations 
$\vec{\sigma}$ that are stable at {\em some} field $F$\footnote{
In order to be consistent with the notation used in the remainder of the paper, we will denote the 
magnetic field by $F$, instead of the customary $H$.}. 
A configuration $\vec{\sigma}$ is stable, if each spin $\sigma_i \in \pm 1$ is aligned with its  
local field, which is the total field experienced by $\sigma_i$, including also the contributions due to its neighboring spins. 
Hence there is an interval of external magnetic fields $F^-(\vec{\sigma}) < F <  F^+(\vec{\sigma})$ over which 
a configuration $\vec{\sigma}$ is stable. 
Given a stable configuration $\vec{\sigma}$ and increasing the magnetic field 
slowly (adiabaticity) to  $F^+(\vec{\sigma})$ will trigger an avalanche of spin flips. At the end of the avalanche 
a new stable configuration $\vec{\sigma'}$ is attained. Since the configuration $\vec{\sigma'}$ is stable 
at $F = F^+(\vec{\sigma})$, it follows that  $F^+(\vec{\sigma'}) > F^+(\vec{\sigma}) $. In a similar manner, 
one can consider a field decrease to $F^-(\vec{\sigma})$, leading to a stable configuration $\vec{\sigma''}$ and 
the relation $F^-(\vec{\sigma''}) < F^-(\vec{\sigma})$.

Given a stable state $\vec{\sigma}$ we can write the state transitions described above 
 in terms of a pair of maps $U$ and $D$ on $\mathcal{S}$ as
\begin{align*}
 \vec{\sigma'} &= U\vec{\sigma},   \\
 \vec{\sigma''} &= D\vec{\sigma}, 
\end{align*}
along with the inequalities
\begin{align*}
 F^+(\vec{\sigma}) &< F^+(\vec{U\sigma}), \\
 F^-(D\vec{\sigma}) &< F^-(\vec{\sigma}). 
\end{align*}
Thus $U\vec{\sigma}$ is the state that $\vec{\sigma}$ transits to when the field is increased to $F = F^+[\vec{\sigma}]$, 
and likewise $D\vec{\sigma}$ is the state that $\vec{\sigma}$ transits to upon a field decrease to $F = F^-[\vec{\sigma}]$.
With an initial configuration $\vec{\sigma}$ given, the AQS response of the 
system to an arbitrary field-history $F_t$ can therefore be translated into a sequence of $U$ and $D$ operations 
applied to $\vec{\sigma}$. The disorder enters this description through the specific form of the maps $U$ and $D$, so that  
the maps themselves are random objects. $\maltese$

\end{example}

We start with a formal definition of the AQS automaton (AQS-A).
We consider a finite set of configurations $\mathcal{S}$, and label its elements by Greek bold letters, $\vec{\sigma}, \vec{\mu},$ {\em etc}. 
We introduce a pair of maps $F^\pm: \mathcal{S} \to \R \cup \{-\infty,\infty\}$, such that for all $\vec{\sigma} \in \mathcal{S}$, 
\begin{equation}
 F^-[\vec{\sigma}] < F^+[\vec{\sigma}].
 \label{eqn:stabilityI}
\end{equation}
The pair $F^\pm[\vec{\sigma}]$ associated with a configurations will be called its {\em trapping fields}.  We shall say that a configuration 
$\vec{\sigma}$ is stable at field $F$, if $ F \in (F^-[\vec{\sigma}], F^+[\vec{\sigma}])$. The AQS dynamics is implemented through a pair 
of maps $U$ and $D$: given a configuration $\vec{\sigma}$, we denote the configuration into which it transits when 
$F = F^+[\vec{\sigma}]$ as $U\vec{\sigma}$. Likewise, the transition when $F = F^-[\vec{\sigma}]$ will be denoted by $D\vec{\sigma}$. 
If $F^+[\vec{\sigma}] = \infty$, or $F^-[\vec{\sigma}] = -\infty$, then we say that $\vec{\sigma}$ is absorbing under $U$, respectively 
$D$. In this case it  is convenient to define $U\vec{\sigma} = \vec{\sigma}$, respectively $D\vec{\sigma} = \vec{\sigma}$, so that $U$ and $D$ 
become maps from $\mathcal{S}$ into $\mathcal{S}$. 
The AQS dynamics also requires a set of compatibility conditions between the pair of maps $(U,D)$ and the trapping fields:
\begin{align}
   F^+[U\vec{\sigma}] &>  F^+[\vec{\sigma}], \quad (U\vec{\sigma} \neq \vec{\sigma}),  \label{eqn:stabilityII}\\
   F^-[U\vec{\sigma}] &< F^+[\vec{\sigma}]
   , \label{eqn:stabilityIII}\\
   F^-[D\vec{\sigma}] &<  F^-[\vec{\sigma}], \quad (D\vec{\sigma} \neq \vec{\sigma}), \label{eqn:stabilityIV}\\
   F^+[D\vec{\sigma}] &> F^-[\vec{\sigma}]
   . \label{eqn:stabilityV}
\end{align}
Condition \eqref{eqn:stabilityII} and \eqref{eqn:stabilityIII} assure that the state $U\vec{\sigma}$ is indeed stable at force $F=F^+[\vec{\sigma}]$, 
{\em cf.} \eqref{eqn:stabilityI}, while \eqref{eqn:stabilityIV} and \eqref{eqn:stabilityV} assure the same for the transition to 
$D\vec{\sigma}$ when $F=F^-[\vec{\sigma}]$. Given a finite set of configurations $\mathcal{S}$, we say that a pair of trapping fields $F^\pm$ 
and a pair of maps $(U,D)$ are compatible with each other, if for all $\vec{\sigma} \in \mathcal{S}$, conditions \eqref{eqn:stabilityII} -- 
\eqref{eqn:stabilityV} hold. 

\begin{definition}[AQS automaton (AQS-A)]
 \label{def:AQSA}
 An AQS automaton is the quintuple $(\mathcal{S},F^\pm, U, D)$, where $\mathcal{S}$ is a finite set of configurations, the trapping field 
 $F^\pm: \mathcal{S} \to \R$ satisfy \eqref{eqn:stabilityI} and the maps $U$ and $D$ satisfy the compatibility conditions \eqref{eqn:stabilityII} -- 
\eqref{eqn:stabilityV}.
\end{definition}

Note that the AQS dynamics associates with each state $\vec{\sigma}$ two transitions, $\vec{\sigma} \mapsto U\vec{\sigma}$ and 
$\vec{\sigma} \mapsto D\vec{\sigma}$. 
It is thus natural to consider the {\em functional graphs} associated with the maps $U$ and $D$. These are the graphs whose vertex set is given 
by $\mathcal{S}$ and whose directed edges are the pairs $(\vec{\sigma},U\vec{\sigma})$ and $(\vec{\sigma},D\vec{\sigma})$, respectively. 
We will also refer to such graphs as state transition graphs. The state transition graphs underlie the adiabatic dynamics under some arbitrary time varying field $F(t)$. The AQS nature of the dynamics and conditions by \eqref{eqn:stabilityII} -- \eqref{eqn:stabilityV} then imply that 
the functional graphs are acyclic, except for the fixed points which are mapped into themselves under $U$ or $D$. The particular way in which 
the directed  acyclic graphs $U$ and $D$ are laid out on the common 
vertex set $\mathcal{S}$, determines the hysteretic properties of the system: how and to what extent it can respond to 
periodic forcing.

For simplicity let us henceforth assume that 
each of the maps $U$ and $D$ has precisely one such fixed point, which we will denote 
by $\vec{\omega}$ and $\vec{\alpha}$, respectively, 
so that\footnote
{
  In the case of the RFIM model Example \ref{ex:RFIM_I}, these fixed points are the saturated states 
  $\vec{\alpha} = \vec{-1}$ and $\vec{\omega} = \vec{+1}$. 
}
\begin{align}
 U\vec{\omega} &= \vec{\omega}, \label{eqn:Uomega}\\
 D\vec{\alpha} &= \vec{\alpha},\label{eqn:Dalpha}
\end{align}
and the corresponding trapping fields thus are
\begin{equation}
 F^+[\vec{\omega}] = +\infty, \quad \quad F^-[\vec{\alpha}] = -\infty.
 \label{eqn:Fpnabsorbing}
\end{equation}
From \eqref{eqn:stabilityII} and \eqref{eqn:stabilityIV} it now 
immediately follows that the functional graphs of $U$ and $D$ are trees rooted at $\vec{\omega}$ and $\vec{\alpha}$. 

The states $\vec{\alpha}$ and $\vec{\omega}$ are {\em absorbing}, since  
for every $\vec{\sigma} \in \mathcal{S}$, there exists a pair of least positive integers 
$n_\sigma$ and $m_\sigma$, such that $U^{n_\sigma} \vec{\sigma} = \vec{\omega}$, and $D^{m_\sigma} \vec{\sigma} = \vec{\alpha}$. 
\begin{figure}[t!]
  \begin{center}
    \includegraphics[width=4.2in]{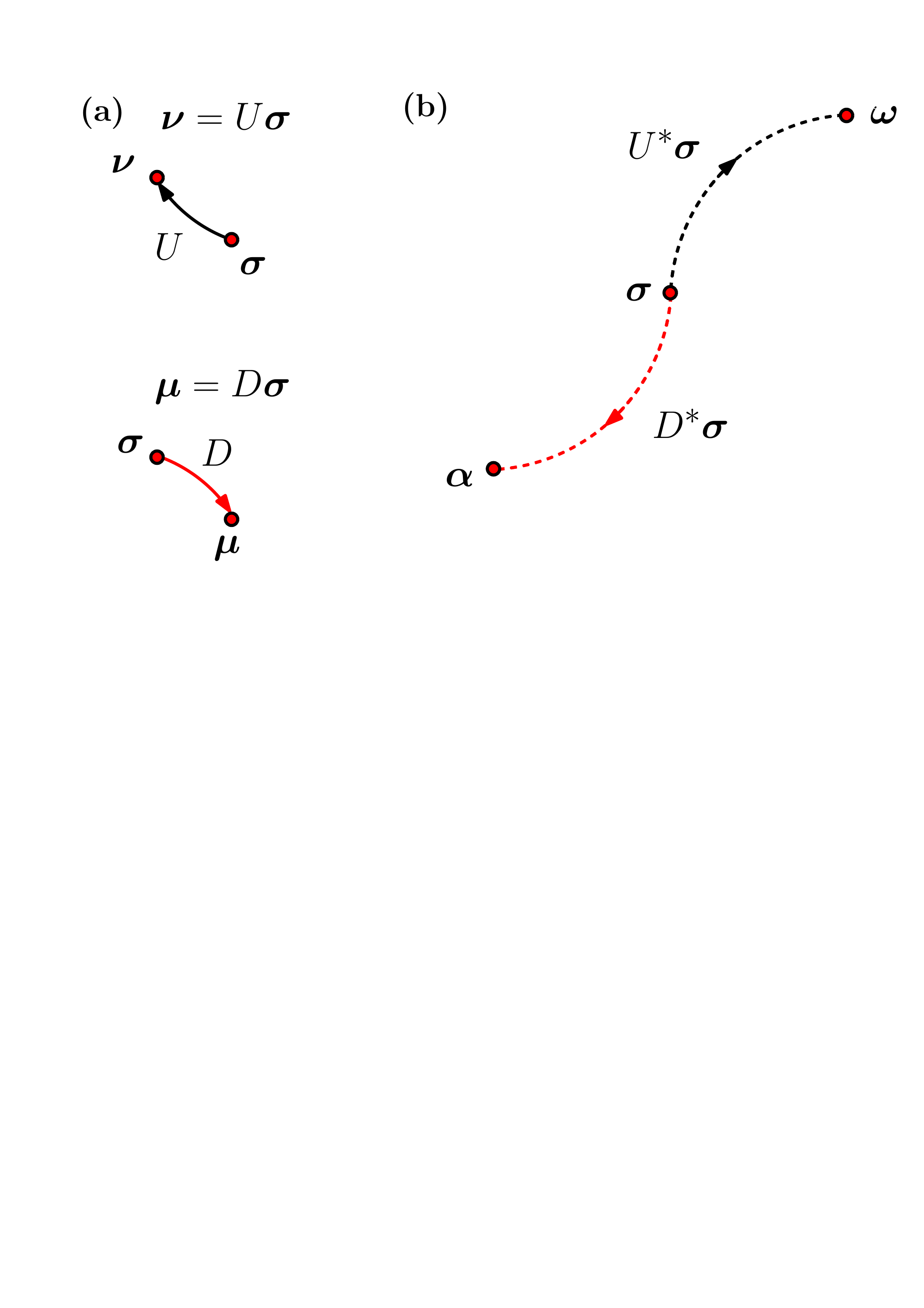} 
  \end{center}
  \caption{(a) Graphical representation of the relations $\vec{\nu} = U\vec{\sigma}$ and $\vec{\mu} = D\vec{\sigma}$. We use solid black and red arrows 
  to represent transitions under $U$ and $D$. (b) Graphical representation of the orbits $U^*\vec{\sigma}$ and $D^*\vec{\sigma}$. These orbits 
  necessarily terminate at the absorbing states $\vec{\omega}$ and $\vec{\alpha}$, respectively. We depict orbits by dashed black and red arrows.} 
  \label{fig:UDorbits}
\end{figure}
Given a state $\vec{\sigma}$, it is useful to define the sequence of states obtained by successive applications of $U$ and $D$, respectively.

\begin{definition}[$U$- and $D$-orbit]
 Given a state $\vec{\sigma}$ we define its $U$-orbit $U^*\vec{\sigma}$ as the sequence of distinct states
 \begin{equation}
  U^*\vec{\sigma} = (\vec{\sigma}, U\vec{\sigma}, U^2\vec{\sigma}, \ldots U^{n_\sigma}\vec{\sigma} = \vec{\omega}), 
 \end{equation}
 where, $n_\sigma$ is the smallest positive integer for which $U^{n_\sigma} \vec{\sigma} = \vec{\omega}$. 
Likewise, the $D$-orbit of $\vec{\sigma}$ is defined as
 \begin{equation}
  D^*\vec{\sigma} = ( \vec{\sigma}, D\vec{\sigma}, D^2\vec{\sigma}, \ldots D^{m_\sigma} \vec{\sigma} = \vec{\alpha}), 
 \end{equation}
with $m_\sigma$ being the smallest positive integer for which $D^{m_\sigma} \vec{\sigma} = \vec{\alpha}$.
\end{definition}

Fig.~\ref{fig:UDorbits} (a) and (b) depict the graphical representation of the actions of $U$ and $D$ on a state $\vec{\sigma}$ and the associated 
orbits $U^*\vec{\sigma}$ and $D^*\vec{\sigma}$. Sometimes it will be useful to think of the orbits as sets and we will write $\vec{\eta} \in U^*\vec{\sigma}$ 
to indicate that the configuration $\vec{\eta}$ is in the $U$-orbit of $\vec{\sigma}$, that is, $\vec{\eta} = U^i \vec{\sigma}$, for some integer 
$i \geq 0$. We next define a loop and its boundary states. 

\begin{figure}[t!]
  \begin{center}
    \includegraphics[width=3.2in]{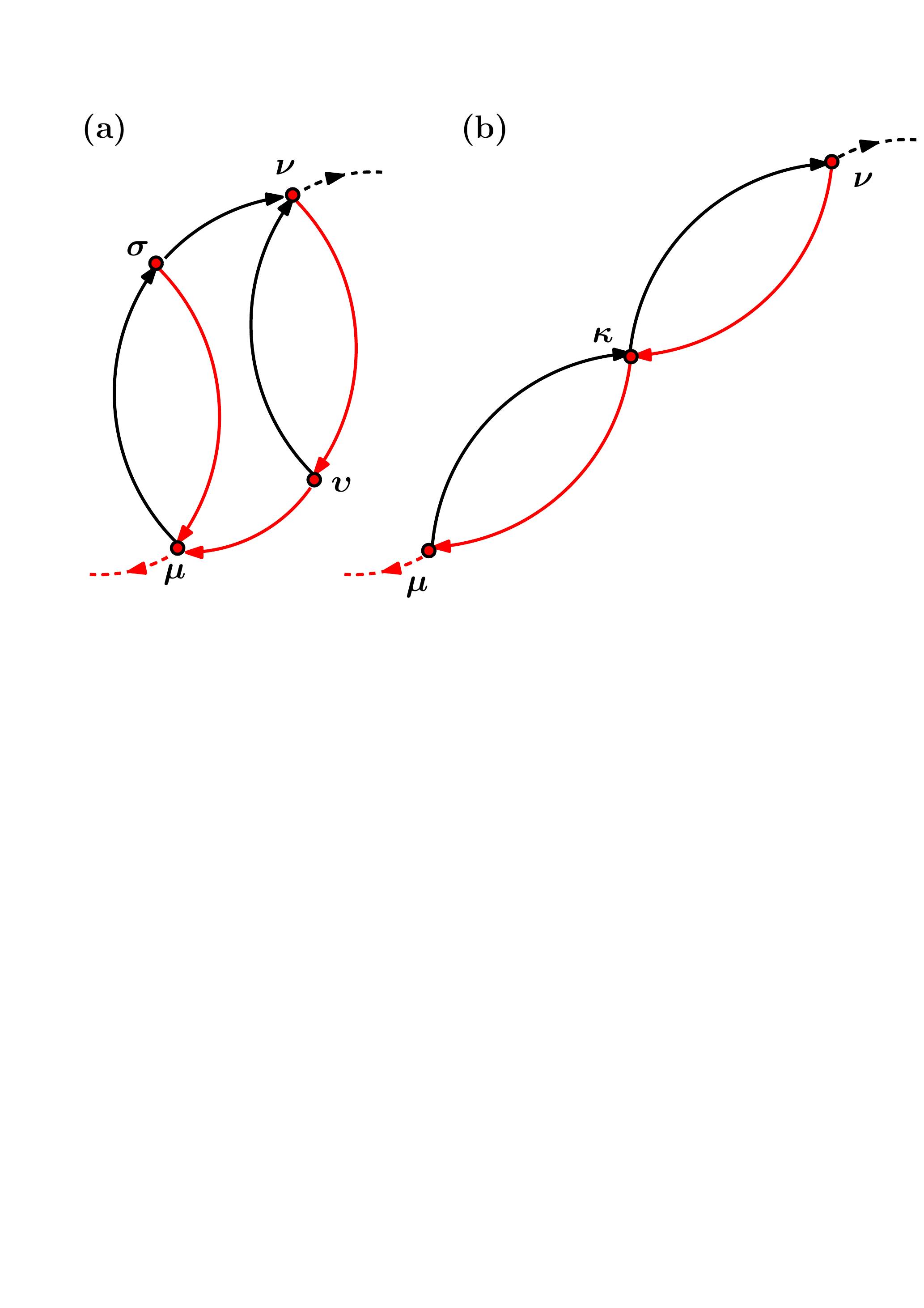} 
  \end{center}
  \caption{Examples for loops with and without intersection points, panels (a) and (b), respectively. Shown in each panel is a loop 
  $(\vec{\mu},\vec{\nu})$, where $\vec{\mu}$ and 
  $\vec{\nu}$ are its end points. (a) The $U$-boundary of the loop consists of the boundary states $\vec{\mu}, \vec{\sigma}$, and $\vec{\nu}$. 
  The state $\vec{\sigma}$ not being an 
  end point, is called an intermediate boundary state of $(\vec{\mu},\vec{\nu})$. (b) Example when the loop  contains an intersection point $\vec{\kappa}$. 
  The intermediate state $\vec{\kappa}$ is an intersection point, since it lies on both boundaries of the loop.} 
  \label{fig:loopDef}
\end{figure}

\begin{definition}[$(\vec{\mu},\vec{\nu})$-loop, loop boundaries, and boundary states]
\label{def:loopdef}
 A pair of states $(\vec{\mu},\vec{\nu})$ forms a $(\vec{\mu},\vec{\nu})$-loop, if $\vec{\nu} \in U^*\vec{\mu}$ and $\vec{\mu} \in D^*\vec{\nu}$. In 
 this case we call $\vec{\mu}$ and $\vec{\nu}$ the {\em endpoints} of the loop. 
 We will refer to the subsets $\{\vec{\mu}, U\vec{\mu}, \ldots U^n\vec{\mu} = \vec{\nu} \} \subset U^*\vec{\mu}$ and 
 $\{\vec{\nu}, D\vec{\mu}, \ldots D^m\vec{\nu} = \vec{\mu} \} \subset D^*\vec{\nu}$ as the $U$-, respectively $D$-{\em boundaries} of the loop and 
 denote the states they contain boundary states. Boundary states that are not endpoints will be referred to as {\em intermediate 
 states}. An intermediate node belonging to both boundaries will be called an {\em intersection state}.  
\end{definition}
As an illustration, consider the two loops depicted in Fig.~\ref{fig:loopDef}. The loop in (a) has endpoints $\vec{\mu}$ and $\vec{\nu}$. 
The $U$- and $D$-boundaries of this loop are formed by the nodes 
$(\vec{\mu},\vec{\sigma},\vec{\nu})$ and $(\vec{\mu},\vec{\upsilon},\vec{\nu})$. The boundary nodes $\vec{\sigma}$ and $\vec{\upsilon}$ are intermediate states. 
In the loop depicted in panel (b), the state $\vec{\kappa}$ is an intersection node, as it lies on both boundaries of the loop.

Consider next the dynamics under a time-varying field $(F_t)_{0 \leq t \leq T}$ for some time $T > 0$. 
We will always assume 
that the forcing is such that the initial state is stable at $F_0$\footnote{
This is consistent with AQS dynamics which assumes that for almost all times the system 
is in a quasi-static configuration $\vec{\sigma} \in \mathcal{S}$, punctuated by abrupt 
transitions between these. 
 }.
Suppose first that $(F_t)_{0 \leq t \leq T}$ is non-decreasing. Given an initial state 
$\vec{\sigma}$, we define the action of $(F_t)_{0 \leq t \leq T}$ as an evolution along the orbit $U^*\vec{\sigma}$ until the 
first state $\vec{\upsilon}$ is reached for which $F^+(\vec{\upsilon}) > F_T$. We write this as 
\begin{equation}
 \mathcal{U}[F_T]\vec{\sigma} = \min \{ \vec{\eta} \in U^*\vec{\sigma} : F^+(\vec{\eta}) > F_T \},
 \label{eqn:FUaction}
\end{equation}
and note that the action of $\min$ on the right hand side is supposed to extract the first element in the ordered sequence of states $U^*\vec{\sigma}$ for which 
the given condition is satisfied\footnote{The choice of the $\min$ and $\max$ notation in \eqref{eqn:FUaction} and \eqref{eqn:FDaction} alludes to a 
natural partial order on an orbit and will be discussed further below. }. 
Note that a
 state transition occurs only if $F_T \geq F^+(\vec{\sigma})$. From \eqref{eqn:FUaction} we also see that resulting state  
 only depends on the value of the force $F_T$ at the endpoint, as required by the adiabaticity assumption underlying our AQS dynamics \cite{Sethna93}.  
Supposing next that $(F_t)_{0 \leq t \leq T}$ is non-increasing, we define its action on an initial state $\vec{\sigma}$ as the first state 
$\vec{\eta}$ on the orbit $D^*\vec{\sigma}$ for which $F^-(\vec{\eta}) < F_T$ and write this 
as
\begin{equation}
 \mathcal{D}[F_T]\vec{\sigma} = \max \{ \vec{\eta} \in D^*\vec{\sigma} : F^-(\vec{\eta}) < F_T \}.
 \label{eqn:FDaction}
\end{equation}
Having defined the AQS-dynamics under monotone forces, the generalization to some force $(F_t)_{0 \leq t \leq T}$ with finitely many intervals of monotonicity is clear: 
We break the time interval $[0,T]$ into the disjoint intervals in which the forcing is non-decreasing or non-increasing and accordingly apply \eqref{eqn:FUaction} and \eqref{eqn:FDaction}.

We now turn to the no-passing (NP) property which requires a partial order\footnote
{Recall that a partial order $\preceq$ on a set $\mathcal{S}$ is a reflexive, anti-symmetric and transitive 
order relation: for all $a, b, c \in \mathcal{S}$, 
\begin{align*}
 a \preceq a, \quad &\mbox{(reflexivity)}, \\
 a \preceq b, \, b \preceq a  \,\, \Rightarrow a = b, \quad &\mbox{(anti-symmetry)}, \\
 a \preceq b, \, b \preceq c  \,\, \Rightarrow a \preceq c, \quad &\mbox{(transitivity)}. 
\end{align*}
The strict partial order $a \prec b$ is defined by  $a \preceq b$ and $a \neq b$.
}
%
%
on $\mathcal{S}$ . 
Suppose therefore that $\mathcal{S}$ admits a partial order $\preceq$ such that $\vec{\alpha}$ and $\vec{\omega}$ are 
the minimal, 
respectively maximal elements with respect to this partial ordering:  $\vec{\alpha} \preceq \vec{\sigma}$ and $\vec{\sigma} \preceq \vec{\omega}$, 
for all $\vec{\sigma} \in \mathcal{S}$. 
We say that the partial order $\preceq$ 
is compatible with $U$ and $D$ if the following hold:
\begin{align}
 \vec{\sigma} &\preceq U\vec{\sigma}, 
 \label{eqn:NCU}\\
 D\vec{\sigma} &\preceq \vec{\sigma}, 
 \label{eqn:NCD}
\end{align}
with the equality holding in \eqref{eqn:NCU} and \eqref{eqn:NCD}, if and only if $\vec{\sigma} = \vec{\omega}$  and 
$\vec{\sigma} = \vec{\alpha}$, respectively.

\begin{definition}[No-passing property for AQS dynamics]
\label{def:NP}
Let $(\mathcal{S}, F^\pm, U, D)$ be an AQS-A for which $\preceq$ is a compatible partial order. We say $(\mathcal{S}, F^\pm, U, D)$
has the no-passing property with respect to $\preceq$, if for any pair of states $\vec{\eta}_0 \preceq \vec{\sigma}_0$  
and their respective forcings $(F^{(1)}_t)_{0 \leq t \leq T}$ and  $(F^{(2)}_t)_{0 \leq t \leq T}$ such that $F^{(1)}_t \leq F^{(2)}_t$
for all times $0 \leq t \leq T$, 
we have 
\begin{equation*}
 \vec{\eta}_t \preceq \vec{\sigma}_t, \quad \quad 0 \leq t \leq T.
\end{equation*}
\end{definition}

Given a partial order $\preceq$ compatible with $U$ and $D$, Middleton's {\em no-passing property} \cite{NoPassing} implies the 
following:

\begin{lemma}[No-passing property for AQS-A orbits]\label{lem:NP}
 Let $(\mathcal{S}, F^\pm, U, D)$ be an AQS-A satisfying the no-passing property with the  compatible partial order $\preceq$. 
 Then for 
 any triplet of distinct states $\vec{\mu}, \vec{\sigma}, \vec{\nu}$ such that 
 \begin{equation}
  \vec{\mu} \prec \vec{\sigma} \prec \vec{\nu},
  \label{eqn:tripletOrder}
 \end{equation}
the following two conditions hold:
\begin{itemize}
 \item [(a)] if $\vec{\mu} \in D^*\vec{\nu}$ then $\vec{\mu} \in D^*\vec{\sigma}$,
 \item [(b)] if $\vec{\nu} \in U^*\vec{\mu}$ then $\vec{\nu} \in U^*\vec{\sigma}$.
\end{itemize}
\end{lemma}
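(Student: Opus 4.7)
The plan is to reduce each of (a) and (b) to a sandwich argument based on the no-passing property of Definition~\ref{def:NP}. The idea is to choose a single terminal field $F_T$ for which the action of $\mathcal{U}[F_T]$ (resp.\ $\mathcal{D}[F_T]$) sends the extreme state $\vec{\mu}$ (resp.\ $\vec{\nu}$) all the way along the orbit to $\vec{\nu}$ (resp.\ $\vec{\mu}$) while leaving the opposite extreme fixed; the order-preservation of $\mathcal{U}[F_T]$ implied by NP then pins down $\mathcal{U}[F_T]\vec{\sigma}$.

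For (b), write $\vec{\nu} = U^n\vec{\mu}$ and iterate \eqref{eqn:stabilityII} to obtain the strictly increasing chain
\begin{equation*}
F^+[\vec{\mu}] < F^+[U\vec{\mu}] < \cdots < F^+[U^{n-1}\vec{\mu}] < F^+[\vec{\nu}].
\end{equation*}
Choose any $F_T$ with $F^+[U^{n-1}\vec{\mu}] \leq F_T < F^+[\vec{\nu}]$. By the definition \eqref{eqn:FUaction} of $\mathcal{U}[F_T]$, one reads off $\mathcal{U}[F_T]\vec{\mu} = \vec{\nu}$ and $\mathcal{U}[F_T]\vec{\nu} = \vec{\nu}$. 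Now I would run three non-decreasing forcings ending at $F_T$, starting from $\vec{\mu}, \vec{\sigma}, \vec{\nu}$ with initial values $F^{(1)}_0 \leq F^{(2)}_0 \leq F^{(3)}_0$ each inside the respective stability interval; applying NP to the pairs $(\vec{\mu},\vec{\sigma})$ and $(\vec{\sigma},\vec{\nu})$ yields
\begin{equation*}
\vec{\nu} = \mathcal{U}[F_T]\vec{\mu} \;\preceq\; \mathcal{U}[F_T]\vec{\sigma} \;\preceq\; \mathcal{U}[F_T]\vec{\nu} = \vec{\nu}.
\end{equation*}
Antisymmetry of $\preceq$ forces $\mathcal{U}[F_T]\vec{\sigma} = \vec{\nu}$, and since $\mathcal{U}[F_T]\vec{\sigma} \in U^*\vec{\sigma}$ by construction, this gives $\vec{\nu} \in U^*\vec{\sigma}$. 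Part (a) is the mirror image: with $\vec{\mu} = D^m\vec{\nu}$, pick $F_T$ in $(F^-[\vec{\mu}], F^-[D^{m-1}\vec{\nu}]]$ using \eqref{eqn:stabilityIV}, so that $\mathcal{D}[F_T]\vec{\nu} = \vec{\mu}$ and $\mathcal{D}[F_T]\vec{\mu} = \vec{\mu}$, and apply NP to three non-increasing forcings ordered in the opposite direction to get the reversed sandwich $\vec{\mu} \preceq \mathcal{D}[F_T]\vec{\sigma} \preceq \vec{\mu}$.

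The only delicate point is the bookkeeping in the middle step: Definition~\ref{def:NP} requires each forcing to start at a field within its initial state's stability interval, and the three intervals $(F^-[\vec{\mu}],F^+[\vec{\mu}])$, $(F^-[\vec{\sigma}],F^+[\vec{\sigma}])$, $(F^-[\vec{\nu}],F^+[\vec{\nu}])$ need neither pairwise overlap nor contain $F_T$. I expect this to be dealt with by prepending to the ramp up to $F_T$ a short auxiliary segment in which each of the three forcings is independently moved from an admissible starting value inside its own stability interval to a commonly reachable level, while keeping $F^{(1)}_t \leq F^{(2)}_t \leq F^{(3)}_t$ pointwise; since no trapping field is crossed during this preliminary segment, no state transitions occur there, and the sandwich at time $T$ is unaffected. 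With that technicality sorted out, the argument above is essentially routine from NP and antisymmetry.
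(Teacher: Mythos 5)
Your argument is essentially the paper's own proof: in each part one fixes a terminal field $F_T$ at which the two extreme states both evolve to the target endpoint under suitably ordered monotone forcings, and no-passing together with antisymmetry of $\preceq$ then pins $\vec{\sigma}$'s image to that endpoint, which lies on the appropriate orbit of $\vec{\sigma}$ by construction. The initial-field bookkeeping you flag is glossed over in the paper as well (it simply applies a constant force $F_T$ to $\vec{\mu}$ and $\vec{\sigma}$ and a decreasing ramp to $\vec{\nu}$), so your version is, if anything, slightly more explicit about the choice of $F_T$ and the admissible starting values.
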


\begin{figure}[t!]
  \begin{center}
    \includegraphics[width = 3in]{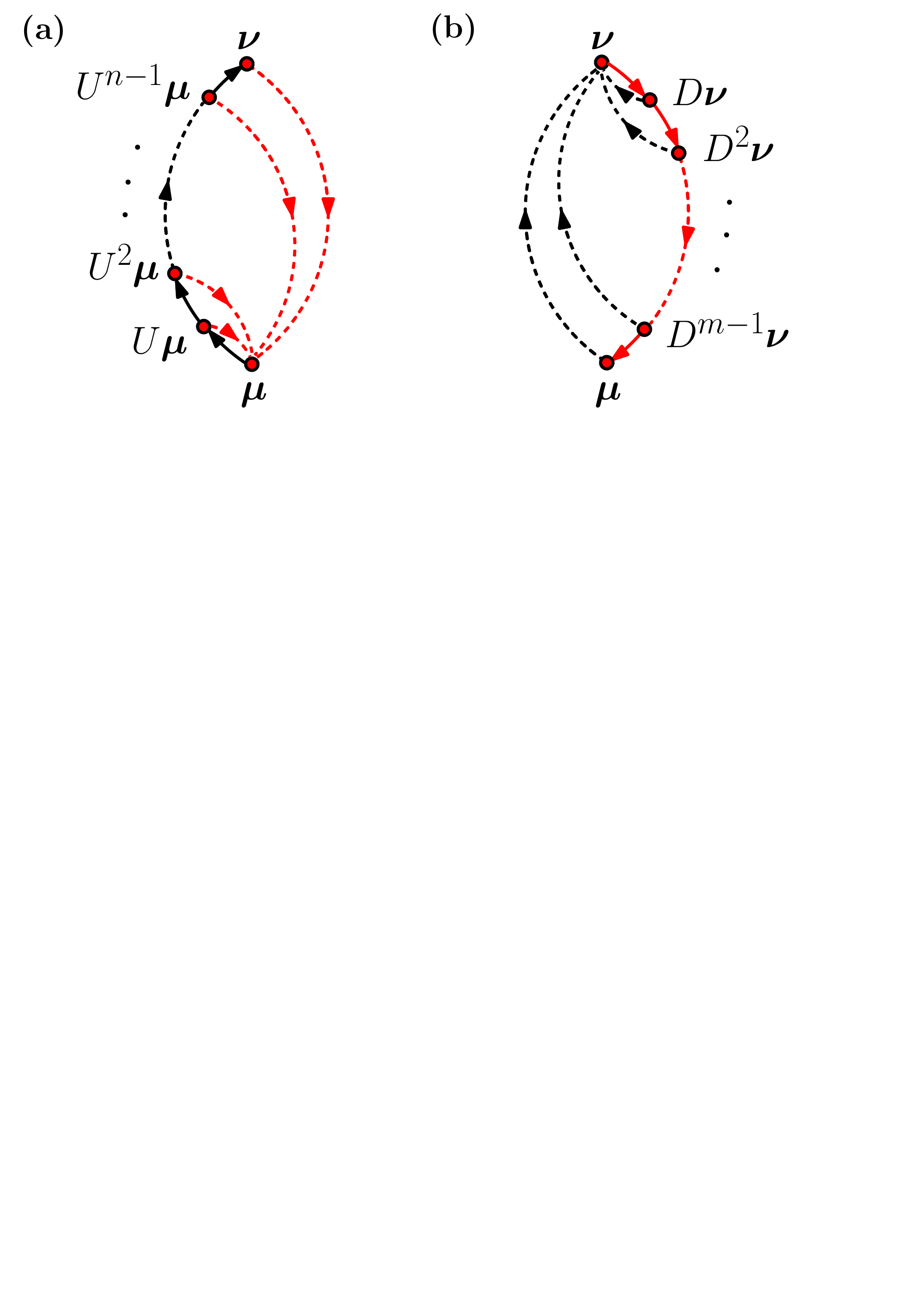} 
  \end{center}
  \caption{Illustration of the return-point memory (RPM) property Def.~\ref{def:RPM} for the loop formed by   
  states $\vec{\mu}$ and $\vec{\nu}$. The panels (a) and (b) corresponds to the two cases of Def.~\ref{def:RPM}. (a) 
  For any  state $\vec{\sigma} = U^i\vec{\mu}$ located on the orbit $U^*\vec{\mu}$ between $\vec{\mu}$ and $\vec{\nu}$, 
    its $D$-orbit $D^*\vec{\sigma}$ has to pass through $\vec{\mu}$. (b) For any state 
   $\vec{\sigma} = D^j\vec{\nu}$ located on the orbit $D^*\vec{\nu}$ and between 
  $\vec{\mu}$ and $\vec{\nu}$, its orbit $U^*\vec{\sigma}$ has to pass through  $\vec{\nu}$.
  } 
  \label{fig:RPMUD}
\end{figure}

\begin{example}[The random-field Ising Model -- Partial order and no-passing]
 \label{ex:RFIM_PO}

The following partial order $\preceq$
can be defined on the set of 
configurations: we say that $\vec{\sigma} \preceq \vec{\sigma'}$, if  $\sigma_i \leq \sigma'_i$ for all spin 
components.
The AQS dynamics of the RFIM is compatible with this partial order: the transition $\vec{\sigma'} = U\vec{\sigma}$ 
triggered by the minimal field increment to 
$F = F^+(\vec{\sigma})$ produces a configuration $\vec{\sigma'}$  with $\vec{\sigma} \preceq \vec{\sigma'}$. An analogous 
result holds for minimal field decrements to $F = F^-(\vec{\sigma})$, resulting in a  state with 
$\vec{\sigma''} \preceq \vec{\sigma}$. As can be easily shown  \cite{Sethna93}, the AQS dynamics of the RFIM has the no-passing property, 
Def.~\ref{def:NP}, with respect to this partial order. $\maltese$ 
\end{example}

The return point memory (RPM) property emerges as a special case 
of Lemma \ref{lem:NP} in which the ordering \eqref{eqn:tripletOrder} is achieved by 
assuming that all three states are either on the orbit $U^*\vec{\mu}$ or $D^*\vec{\nu}$. To be more precise, let us assume 
that all three states are on $U^*\vec{\mu}$. Lemma \ref{lem:NP}(a) then asserts that starting at $\vec{\nu}$, decreasing the field to reach $\vec{\mu}$, 
and then performing a {\em switch-back} at $\vec{\mu}$ by increasing the field  to reach the state $\vec{\sigma}$, a subsequent switch-back by decreasing the field  
will cause the system to eventually reach the state $\vec{\mu}$ of the first switch-back. In other words, the first switch-back $\vec{\mu}$ 
is remembered, since a second switch-back at $\vec{\sigma}$, satisfying \eqref{eqn:tripletOrder} causes the state trajectory to return to it. A similar 
argument applies when all three states are on $D^*\vec{\nu}$, we start in $\vec{\mu}$, increase the field to reach $\vec{\nu}$ and perform a switch-back to 
reach-state $\vec{\sigma}$. Lemma \ref{lem:NP}(b) then asserts that upon a second switch-back the first switch-back state $\vec{\nu}$ is eventually reached. 
We now define the RPM property in the form that will be useful in what follows. 
\begin{definition}[loop RPM property $\ell$RPM]\label{def:RPM}
 Let $(\mathcal{S}, F^\pm, U, D)$ be an AQS-A
 We say that the AQS-A 
 possesses the (loop) {\em return point memory property}, if the following two hold:
 \begin{itemize}
  \item [(a)] For any two states $\vec{\sigma}$ and $\vec{\nu}$ on the $U$-orbit of $\vec{\mu}$, such that 
  $\vec{\sigma}$ is reached before $\vec{\nu}$: 
  if $\vec{\mu} \in D^*\vec{\nu}$ then $\vec{\mu} \in D^*\vec{\sigma}$. 
  \item [(b)] For any two states $\vec{\upsilon}$ and $\vec{\mu}$ on the $D$-orbit of $\vec{\nu}$, such that 
  $\vec{\upsilon}$ is reached before $\vec{\mu}$:  
  if $\vec{\nu} \in U^*\vec{\mu}$ then $\vec{\nu} \in U^*\vec{\upsilon}$. 
  \end{itemize}
\end{definition}

It is clear that the NP property, Def.~\ref{def:NP}, implies via Lemma \ref{lem:NP} the $\ell$RPM property of Def.~\ref{def:RPM}, 
since 
any three distinct states on a $U$ or $D$ orbit are necessarily ordered with respect to $\prec$. The $\ell$RPM property 
is illustrated in Fig.~\ref{fig:RPMUD}. 

Observe that the $\ell$RPM property as given by Def.~\ref{def:RPM} does not require the presence of a partial order on $\mathcal{S}$. It is 
a statement about a pair of states $(\vec{\mu},\vec{\nu})$ forming the endpoints of a loop, Def.~\ref{def:loopdef}, 
as depicted in Fig.~\ref{fig:RPMUD}. A loop can be naturally associated with a partial order that is moreover based entirely 
on the maps $U$ and $D$ and hence induced by the dynamics. 

\begin{definition}[Dynamic partial orders, $\preceq_U$, $\preceq_D$ and $<$]\label{def:dynPO}
 Let $(\mathcal{S}, F^\pm, U,D)$ be an AQS-A. The orders $\preceq_U$ and $\preceq_D$ associated with the maps $U$ and $D$ are defined as 
 \begin{align}
  \vec{\nu} \in U^*\vec{\mu} \quad &\Leftrightarrow \quad \vec{\mu} \preceq_U \vec{\nu}, \label{eqn:dynU}\\ 
  \vec{\mu} \in D^*\vec{\nu} \quad &\Leftrightarrow \quad \vec{\mu} \preceq_D \vec{\nu}. \label{eqn:dynD} 
 \end{align}

The intersection of $\prec_U$ and $\prec_D$, namely the set of pairs of states $(\vec{\mu}, \vec{\nu})$ for which 
both $\vec{\mu} \prec_U \vec{\nu}$ and $\vec{\mu} \prec_D \vec{\nu}$, is another partial order which will be called the {\em loop partial order} and 
denoted by $<$. Thus 
\begin{equation}
 \vec{\mu} < \vec{\nu} \quad \Leftrightarrow \quad \vec{\mu} \prec_U \vec{\nu} \,\, \mbox{and} \,\, \vec{\mu} \prec_D \vec{\nu}.  \label{eqn:loopOrder}
\end{equation}
\end{definition}

It is readily checked that by nature of their acyclic functional graphs, the orders $\preceq_U$ and $\preceq_D$, induced by $U$ and $D$, are indeed partial orders. 
Hence the intersection $<$ of these partial orders defined in \eqref{eqn:loopOrder} is a partial order, too\footnote{We prefer to keep $<$ as a strict partial order, since the case of equality 
implies a loop whose endpoints coincide, a {\em singleton loop}.}. 
Recall that 
given a pair of partial orders defined on a set $\mathcal{S}$ one is said to be {\em stronger} than the other, if 
the pair of states ordered with respect to the former is a subset of those ordered with respect to the latter order. 
Thus the partial order $<$ is stronger than 
 $\prec_U$ and $\prec_D$. Moreover, the partial order $<$ is necessarily stronger than any partial order $\prec$ on $\mathcal{S}$ that is compatible with $U$ and $D$, {\em i.e.} a partial order 
that satisfies both \eqref{eqn:NCU} and \eqref{eqn:NCD}, since $\prec_U$ and $\prec_D$ are by definition each stronger than $\prec$. 

\begin{example}[The random-field Ising Model -- Partial orders $\prec$ and $<$]
 \label{ex:RFIM_PO_II}
 For the RFIM the partial order  $\prec$ has been defined in Example \ref{ex:RFIM_PO}. Here we will show that in general the partial 
 order $<$ is stronger than $\prec$.  
 First note that $\vec{\mu} < \vec{\nu}$ implies that $\vec{\mu} \prec \vec{\nu}$. However the reverse is generally not true. 
 There are states that are comparable with respect to $\prec$ but not $<$. To see this, observe that not every  pair of states such that 
 $\vec{\mu} \prec \vec{\nu}$ has to lie on a $U$- or $D$-orbit. Hence such pairs of states need  not be 
 comparable with respect to $\prec_U$ or $\prec_D$. Consequently such pairs need not be comparable with respect to $<$. Thus the 
 partial order $<$ is generally stronger than $\prec$. This is not surprising, since the partial $\prec$ of the RFIM is based entirely on 
 the spin configurations and  therefore is defined without 
 the dynamics, while the partial order $<$ is a consequence of the dynamics.  
 $\maltese$ 
\end{example}

It is apparent now that the $\ell$RPM property in Def.~\ref{def:RPM} is a statement about the dynamic 
partial orders $<$, $\prec_U$ and $\prec_D$ and can therefore equivalently be defined in terms of these as:
\begin{definition}[$\ell$RPM via dynamic partial orders]\label{def:RPMII}
 The  AQS-A $(\mathcal{S}, F^\pm, U, D)$ 
 possesses the (loop) {\em return point memory property}, if the following two hold: 
 
 For any pair of states $\vec{\mu} < \vec{\nu}$ and some state $\vec{\sigma}$, 
 \begin{align}
  \vec{\mu} \prec_U \vec{\sigma} \prec_U \vec{\nu} \quad &\Rightarrow \quad \vec{\mu} < \vec{\sigma}, \\ 
  \vec{\mu} \prec_D \vec{\sigma} \prec_D \vec{\nu} \quad &\Rightarrow \quad \vec{\sigma} < \vec{\nu}.   
 \end{align}
\end{definition}
At the same time, this formulation makes clear why AQS dynamics with the NP property can only imply $\ell$RPM, but $\ell$RPM does not necessarily imply NP. 
The partial order $\prec$ associated with the NP property is only required to be compatible with $U$ and $D$ and therefore necessarily weaker than the partial order $<$ induced by the dynamics. 
Since on the other hand Definition \ref{def:RPMII} for $\ell$RPM only involves the dynamic partial orders, the presence of $\ell$RPM will not necessarily imply an NP property, or if it does, its 
scope will be restricted to loops. 

\begin{definition}[$\ell$AQS-A]\label{def:lAQSA}
An $\ell$AQS-A is an AQS automaton $(\mathcal{S},F^\pm,U,D)$ having the $\ell$RPM property, as defined by  
Defs.~\ref{def:dynPO} and \ref{def:RPMII}. 
\end{definition}

Unless otherwise noted, for the rest of this paper we shall only consider $\ell$AQS-A. We will use 
the terms RPM and $\ell$RPM interchangeably, choosing the latter if we want to emphasize its origin in the dynamic partial orders $(\preceq_U, \preceq_D, <)$, 
as introduced in Def.~\ref{def:dynPO}. 
We conclude this section with some further definitions and turn in the following section to an analysis of the 
structure of the state transition graph associated with $\ell$AQS-A loops.

Given a state $\vec{\sigma}$, we can obtain a new state $\vec{\nu}$ by applying a sequence of $U$ and $D$ operations
\begin{equation}
 \vec{\nu} = U^{n_p} D^{m_p} U^{n_{p-1}} D^{m_{p-1}} \cdots   D^{m_1} U^{n_1} \vec{\sigma}, \label{eqn:fieldH}  
\end{equation}
for some positive integers $p$,  $n_1, m_1, \ldots, m_{p-1}, n_{p-1}, m_p$, and  
$n_p \geq 0$. We will denote the sequence of operations in \eqref{eqn:fieldH} as 
a {\em field-history} and denote these by capital Greek letters. 
The finiteness of $\mathcal{S}$ assures that any field-history can always be reduced to a finite number of $U$ and $D$ operations 
by eliminating multiply traversed intermediate states. If there 
exists a field-history $\Phi$ that maps $\vec{\sigma}$ to $\vec{\nu}$, we say that $\vec{\nu}$ is {\em field-reachable} from $\vec{\sigma}$.

Among the set of all configurations $\mathcal{S}$, there is a special subset $\mathcal{R}$ of states that is field-reachable from $\vec{\alpha}$ or $\vec{\omega}$. 
We call this set of states $(\vec{\alpha},\vec{\omega})$ -- reachable, or more simply, reachable states. 

\begin{definition}[Reachable states $\mathcal{R}$]
 Let $(\mathcal{S}, F^\pm,U, D)$ be an $\ell$AQS-A. The subset of reachable states $\mathcal{R} $ is defined as
 \begin{equation}
  \mathcal{R} = \left \{ \vec{\sigma} \in \mathcal{S} : \mbox{there exists a field-history } \Phi {:}\, \vec{\alpha} \mapsto \vec{\sigma} \right \}.
 \end{equation}

\end{definition}

Note that by their absorbing properties, $\vec{\alpha}, \vec{\omega} \in \mathcal{R}$. Thus if a state is field-reachable from 
$\vec{\alpha}$, it necessarily is also reachable from $\vec{\omega}$ and 
{\em vice versa}. Moreover, on $\mathcal{R}$ field-reachability is an equivalence 
relation \cite{MagniBasso2005,Bertottietal2007,Bortolottietal2008}. 

\begin{lemma}[Mutual reachability on $\mathcal{R}$ as an equivalence relation]
\label{lem:mutreach}
 Consider any pair of reachable states  $\vec{\sigma}_1, \vec{\sigma}_2 \in \mathcal{R}$. Then there exist field histories  
 that map $\vec{\sigma}_1$ and $\vec{\sigma}_2$ into each other and we say that $\vec{\sigma}_1$ and $\vec{\sigma}_2$ 
 are mutually reachable. Moreover, mutual reachability is an 
 equivalence relation and 
 $\mathcal{R}$ is closed under the actions of $U$ and $D$. 
\end{lemma}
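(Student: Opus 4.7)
The plan is to exploit the absorbing nature of $\vec{\alpha}$ and $\vec{\omega}$, which, together with the definition of $\mathcal{R}$, makes the three claims essentially routine: the $\ell$RPM property is not needed here, only the tree structure of the functional graphs of $U$ and $D$.

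First I would observe that every state $\vec{\sigma} \in \mathcal{S}$ can be mapped to $\vec{\alpha}$ by the field history $D^{m_\sigma}$, and to $\vec{\omega}$ by $U^{n_\sigma}$; this is just the definition of the $D$- and $U$-orbits together with \eqref{eqn:Uomega}--\eqref{eqn:Dalpha}. In particular, $\vec{\alpha}$ and $\vec{\omega}$ are mutually reachable from \emph{any} $\vec{\sigma}$.

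Next, given $\vec{\sigma}_1, \vec{\sigma}_2 \in \mathcal{R}$, by the definition of $\mathcal{R}$ there exist field histories $\Phi_1, \Phi_2$ with $\Phi_i : \vec{\alpha} \mapsto \vec{\sigma}_i$. The concatenation $\Phi_2 \circ D^{m_{\sigma_1}}$ is then a field history from $\vec{\sigma}_1$ to $\vec{\sigma}_2$; I would briefly verify that the concatenation can be written in the standard alternating form \eqref{eqn:fieldH} by collapsing any adjacent blocks of the same type. Exchanging the roles of $\vec{\sigma}_1$ and $\vec{\sigma}_2$ yields a field history in the other direction, establishing mutual reachability.

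For the equivalence relation statement, reflexivity follows from the empty field history (or equivalently $U^0$), symmetry is immediate from the construction above applied to any pair of mutually reachable states, and transitivity comes from concatenating two field histories and again reducing to alternating form. Finally, for closure under $U$ and $D$, if $\vec{\sigma} \in \mathcal{R}$ is reached by $\Phi : \vec{\alpha} \mapsto \vec{\sigma}$, then $U \circ \Phi$ and $D \circ \Phi$ are field histories from $\vec{\alpha}$ to $U\vec{\sigma}$ and $D\vec{\sigma}$ respectively, so both images lie in $\mathcal{R}$.

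The only mildly delicate point, and the one I would single out as the bookkeeping step most worth doing carefully, is checking that concatenations of field histories always admit the canonical alternating form of \eqref{eqn:fieldH}; this amounts to merging adjacent $U$- or $D$-blocks and absorbing empty ones, and carries no conceptual difficulty. No use is made of \eqref{eqn:stabilityII}--\eqref{eqn:stabilityV} beyond the acyclicity they enforce, and in particular neither the dynamic partial order nor the $\ell$RPM property enters the argument.
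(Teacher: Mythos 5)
Your proof is correct and follows essentially the same route as the paper's: exploit the absorbing states to splice field histories together (you route through $\vec{\alpha}$ where the paper routes through $\vec{\omega}$, an immaterial difference), then get symmetry and transitivity by concatenation and closure by appending a single $U$ or $D$ step. The paper likewise makes no use of $\ell$RPM or the compatibility conditions here, so nothing is missing.
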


\begin{figure}[t!]
  \begin{center}
    \includegraphics[width=4in]{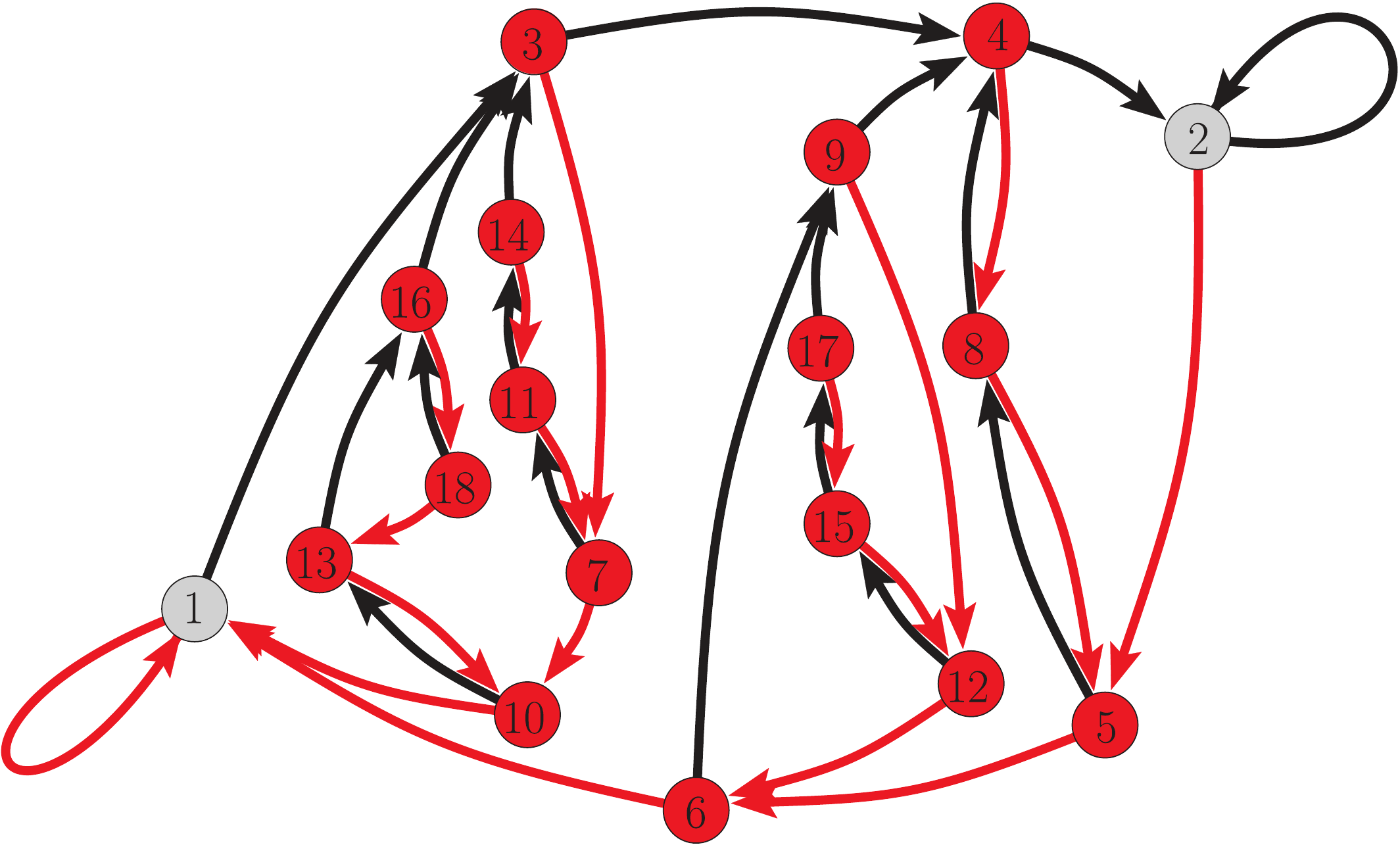} 
  \end{center}
  \caption{Sample reachability graph obtained from the toy model of depinning \cite{KMShort13,KMLong13}. Each vertex is a reachable configuration.  The transitions 
  under $U$ and $D$ correspond to  increments and decrements of the driving force. 
  They are indicated by black and red arrows, respectively.
  The $D$- and $U$-absorbing states are colored in gray and labeled as vertices 1 and 2. 
  } 
  \label{fig:reachabilitygraph}
\end{figure}

By restricting the vertex set to $\mathcal{R}$, we obtain the $\ell$AQS-A state transition graph on the set of reachable states, the 
{\em reachability graph}. Figure \ref{fig:reachabilitygraph} depicts a typical reachability graph 
that was actually obtained from a realization of the toy model for depinning \cite{KMShort13, KMLong13}. 
As in the RFIM example, the functional graphs obtained by considering only transitions under $U$ (black arrows), or transitions under $D$ (red arrows)
are trees rooted at the upper and lower absorbing states $\vec{\omega}$ and $\vec{\alpha}$, respectively. In the 
figure these absorbing states are labeled as $2$ and $1$, respectively. 
The state transition graph $\mathcal{R}$ has the a 
double-tree structure, {\em i.e.} two trees laid out on a common set of vertices. 
The intertwining of these two trees results in the loops and their nesting is a consequence of the RPM property.

\section{$\ell$AQS-A intra-loop structure}
\label{sec:damaloop}

In this section we derive the intra-loop structure of $\ell$AQS-A. 
The $\ell$RPM property allows us to describe the topology of loop substructure
and we start by defining field histories confined to a loop $(\vec{\mu},\vec{\nu})$. 

\begin{definition}[Loop-confined field-history]
\label{def:munuconfined}
 Let  $(\vec{\mu},\vec{\nu})$ be the endpoints of a loop and consider a field-history starting from $\vec{\mu}$ and leading to 
 $\vec{\sigma}$, 
 \begin{equation}
    \vec{\sigma} = U^{n_p} D^{m_{p-1}} U^{n_{p-1}} D^{m_{p-2}} \cdots   D^{m_1} U^{n_1} \vec{\mu},  
    \label{eqn:sigmaFH}
 \end{equation}
 with integers $p, n_p \geq 0$ and $n_1, m_1, n_2, \ldots, n_{p-1}, m_{p-1} > 0$. 
 The sequence of states 
 \begin{align*}
  U\vec{\mu}&, U^2\vec{\mu}, \ldots, U^{n_1}\vec{\mu}, DU^{n_1}\vec{\mu}, D^2U^{n_1}\vec{\mu}, \ldots D^{m_1}U^{n_1}\vec{\mu}, \\
  \ldots&, 
  U^{n_p-1} D^{m_{p-1}} U^{n_{p-1}} D^{m_{p-2}} \cdots   D^{m_1} U^{n_1} \vec{\mu}
 \end{align*}
 are the {\em transit states} associated with the field-history. We say that the field-history starting from an endpoint of the loop $(\vec{\mu}, \vec{\nu})$ is 
{\em loop}- or $(\vec{\mu}, \vec{\nu})$-confined, if the transit states of the trajectory avoid the endpoints, that is 
$\vec{\mu}$ and $\vec{\nu}$ are not transit states of the field-history. The subset of transit states given by  
\begin{align}
 \label{eqn:sigmaSB}
 \vec{\nu}_1 &= U^{n_1} \vec{\mu}, \nonumber \\
 \vec{\nu}_2 &= U^{n_2} D^{m_1} U^{n_1} \vec{\mu}, \nonumber \\
     &\cdots \nonumber \\
 \vec{\nu}_{p-1} &= U^{n_{p-1}} D^{m_{p-2}} \cdots U^{n_2} D^{m_1} U^{n_1} \vec{\mu}, \nonumber \\ 
 \\
 \vec{\mu}_1 &= D^{m_1} U^{n_1} \vec{\mu}, \nonumber \\
 \vec{\mu}_2 &= D^{m_2}U^{n_2} D^{m_1} U^{n_1} \vec{\mu}, \nonumber \\
     &\cdots \nonumber \\
 \vec{\mu}_{p-1} &= D^{m_{p-1}}U^{n_{p-1}} D^{m_{p-2}} \cdots U^{n_2} D^{m_1} U^{n_1} \vec{\mu}. \nonumber 
\end{align}
are called  the {\em switch-back states} of the field-history. 
For a field-history starting from $\vec{\nu}$, 
 \begin{equation}
    \vec{\sigma} = U^{n_p} D^{m_{p-1}} U^{n_{p-1}} D^{m_{p-2}} \cdots   D^{m_1} \vec{\nu},  
    \label{eqn:sigmaFH2}
 \end{equation}
transient states, switch-back states and $(\vec{\mu}, \vec{\nu})$-confined field histories are defined in a similar manner.
\end{definition}

Note that if a field-history \eqref{eqn:sigmaFH} is $(\vec{\mu},\vec{\nu})$-confined then so are field histories terminating at any of its 
transient states. We define next the set of $(\vec{\mu},\vec{\nu})$-reachable states $\mathcal{R}_{\vec{\mu}, \vec{\nu}}$. 

\begin{definition}[$(\vec{\mu},\vec{\nu})$-reachable states]
\label{def:munureachable}
 Let  $(\vec{\mu},\vec{\nu})$ be the endpoints of a loop. The set $\mathcal{R}_{\vec{\mu}, \vec{\nu}}$ of 
 $(\vec{\mu},\vec{\nu})$ -- reachable states is defined as  
\[
 \mathcal{R}_{\vec{\mu}, \vec{\nu}} = \{ \vec{\sigma} \in \mathcal{S} : 
\mbox{there is a  
 $(\vec{\mu}, \vec{\nu})$-confined field-history leading to $\vec{\sigma}$} 
\}.
\]
\end{definition}

\begin{proposition}[Orbits of $(\vec{\mu},\vec{\nu})$ -- reachable states]\label{prop:munuorbitLemma}
Let $\mathcal{R}_{\vec{\mu}, \vec{\nu}}$ be the set of reachable states associated with the loop  $(\vec{\mu},\vec{\nu})$. 
For any $\vec{\sigma} \in \mathcal{R}_{\vec{\mu}, \vec{\nu}}$, 
\[
\vec{\nu} \in U^*\vec{\sigma}\quad \mbox{and} \quad \vec{\mu} \in D^*\vec{\sigma}. 
\]
\end{proposition}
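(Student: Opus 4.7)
The plan is to prove the statement by induction on the length $L$ of the loop-confined field history that terminates at $\vec{\sigma}$. For the base case $L=0$, $\vec{\sigma}\in\{\vec{\mu},\vec{\nu}\}$ and both assertions are immediate from the loop definition ($\vec{\nu}\in U^*\vec{\mu}$ and $\vec{\mu}\in D^*\vec{\nu}$).

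For the inductive step I would write $\vec{\sigma}=X\vec{\tau}$ with $X\in\{U,D\}$, where the length-$(L-1)$ prefix is itself loop-confined and ends at $\vec{\tau}$, so the inductive hypothesis delivers $\vec{\nu}\in U^*\vec{\tau}$ and $\vec{\mu}\in D^*\vec{\tau}$. Take WLOG $X=U$; the $X=D$ case is symmetric under $U\leftrightarrow D$, $\vec{\mu}\leftrightarrow\vec{\nu}$. Loop-confinement (Def.~\ref{def:munuconfined}) forbids any transit state from equalling $\vec{\nu}$, and \eqref{eqn:sigmaFH2} requires any history starting at $\vec{\nu}$ to begin with $D$, so $\vec{\tau}\neq\vec{\nu}$ in all cases. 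Hence $\vec{\nu}=U^k\vec{\tau}$ for some $k\geq 1$, and $\vec{\nu}\in U^*(U\vec{\tau})=U^*\vec{\sigma}$, establishing the first assertion.

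The non-trivial half is $\vec{\mu}\in D^*\vec{\sigma}$, and here the bare inductive hypothesis is insufficient: a direct appeal to Def.~\ref{def:RPM}(a) on the pair $\vec{\sigma},\vec{\nu}\in U^*\vec{\tau}$ requires $(\vec{\tau},\vec{\nu})$ to be a loop, i.e., $\vec{\tau}\in D^*\vec{\nu}$, which is not yielded by $\vec{\mu}\in D^*\vec{\tau}$ alone. The remedy is to strengthen the induction by maintaining the nested hierarchy of sub-loops that the switch-backs of the history generate: Def.~\ref{def:RPM}(a) applied to $(\vec{\mu},\vec{\nu})$ promotes the first up-switch-back $\vec{\nu}_1$ to an upper endpoint of a sub-loop $(\vec{\mu},\vec{\nu}_1)$; Def.~\ref{def:RPM}(b) then promotes $\vec{\mu}_1$ to a lower endpoint of $(\vec{\mu}_1,\vec{\nu}_1)$; and the recursion continues as the history develops. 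At every stage $\vec{\tau}$ lies on a monotone segment originating at a switch-back of an innermost enclosing sub-loop $(\vec{\mu}^\flat,\vec{\nu}^\flat)$, which by construction satisfies $\vec{\mu}\in D^*\vec{\mu}^\flat$ and $\vec{\nu}\in U^*\vec{\nu}^\flat$. If $\vec{\sigma}$ satisfies $\vec{\sigma}\preceq_U\vec{\nu}^\flat$, Def.~\ref{def:RPM}(a) applied to $(\vec{\mu}^\flat,\vec{\nu}^\flat)$ returns $\vec{\mu}^\flat\in D^*\vec{\sigma}$, whence $\vec{\mu}\in D^*\vec{\sigma}$ by transitivity. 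If instead $\vec{\sigma}$ overshoots $\vec{\nu}^\flat$, loop-confinement still forces $\vec{\sigma}\prec_U\vec{\nu}$, and because the $U$-orbits emanating from successive enclosing sub-loops' lower endpoints merge at the corresponding upper endpoints, $\vec{\sigma}$ automatically joins the $U$-boundary of a larger enclosing sub-loop where the same argument reapplies one level up.

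The principal obstacle is the bookkeeping of this nested sub-loop hierarchy and the stays-inside-versus-overshoots dichotomy at each monotone step. Loop-confinement, by forbidding transit through $\vec{\mu}$ or $\vec{\nu}$, is precisely what guarantees that some enclosing sub-loop of $(\vec{\mu},\vec{\nu})$ is always available, so the appropriate $\ell$RPM invocation remains well-defined throughout the induction.
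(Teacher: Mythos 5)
Your proposal is correct and follows essentially the same route as the paper: the strengthened invariant you maintain --- the nested hierarchy of sub-loops generated by the switch-backs, with overshoots resolved by moving up to an enclosing sub-loop --- is precisely the paper's machinery of nested sub-loop sequences and their switch-back states (Lemmas \ref{lem:sworder} and \ref{lem:switchbackorbit}) combined with the reduction of an arbitrary loop-confined field-history to a reduced one (Lemma \ref{lem:switchbackreachable}). You have merely folded the paper's two-lemma structure into a single induction on the length of the field-history.
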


Note that the proposition is trivial 
in the special case when  the endpoints of the loop are $\vec{\mu} = \vec{\alpha}$ and $\vec{\nu} = \vec{\omega}$, since the result follows 
from the absorbing property of these states. The proof of the proposition for general endpoints $(\vec{\mu},\vec{\nu})$ relies on the 
$\ell$RPM property.  
Proposition \ref{prop:munuorbitLemma} asserts that the orbits of all $(\vec{\mu}, \vec{\nu})$-reachable states leave 
$\mathcal{R}_{\vec{\mu},\vec{\nu}}$ through the endpoints of the loop. Thus any pair of $(\vec{\mu}, \vec{\nu})$-reachable states 
are mutually reachable, {\em cf. } Lemma \ref{lem:mutreach} and its proof. At the same time Proposition \ref{prop:munuorbitLemma} 
provides a notion for the ''interior`` of a loop.

\begin{definition}[Interior states of a loop]\label{def:boundaryinterior}
 Consider the set $\mathcal{R}_{\vec{\mu}, \vec{\nu}}$ of reachable states associated with a loop $(\vec{\mu},\vec{\nu})$. A state 
 $\vec{\sigma} \in \mathcal{R}_{\vec{\mu}, \vec{\nu}}$ that is not a boundary state will be called an {\em interior state}. Thus elements of 
 $\mathcal{R}_{\vec{\mu}, \vec{\nu}}$ are either interior or boundary states. 
\end{definition}

We now have the following proposition for the nesting of interior loops, as illustrated in Fig.~\ref{fig:Nesting}:

\begin{figure}[t!]
  \begin{center}
    \includegraphics[width = 0.8\textwidth]{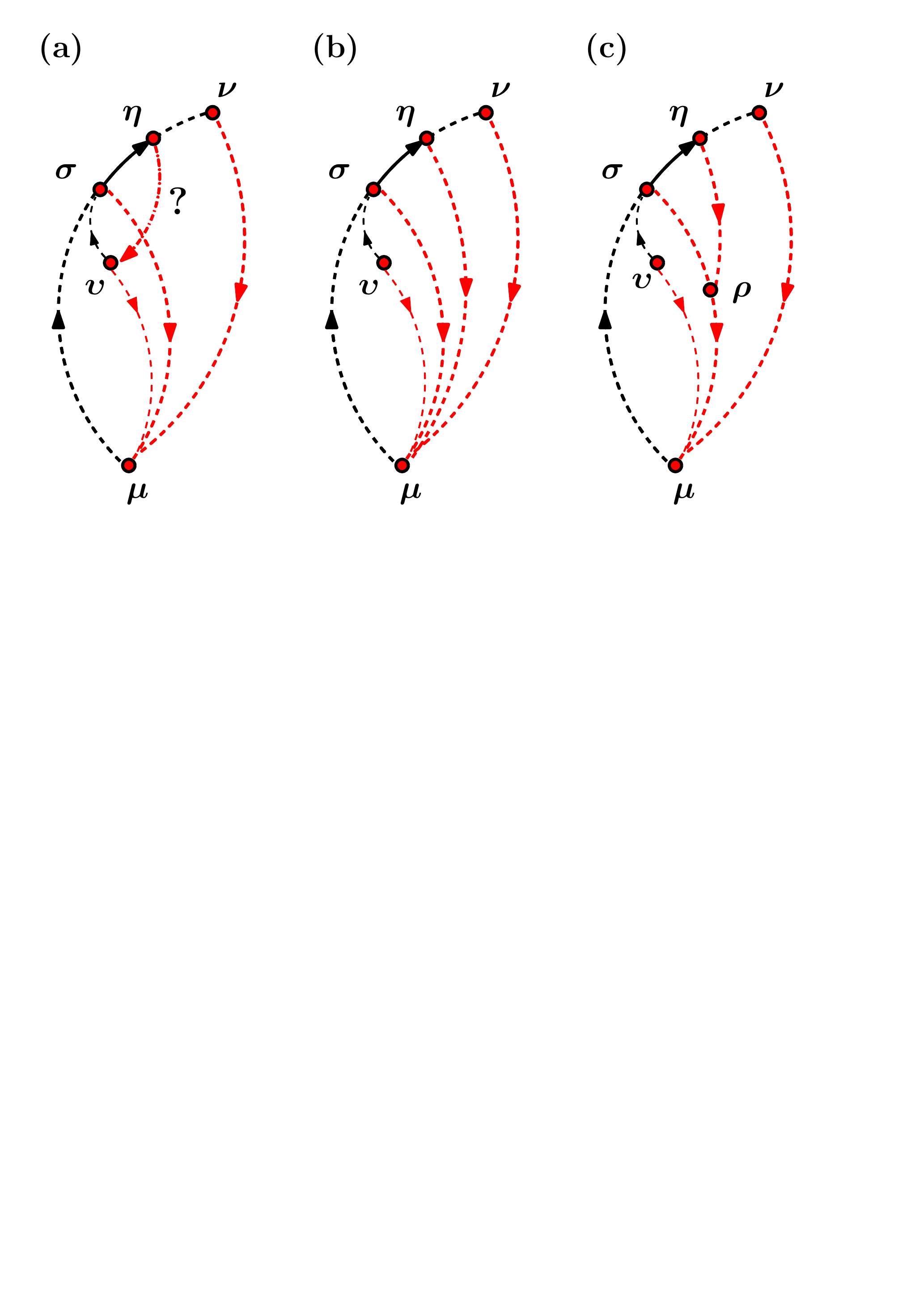} 
  \end{center}
  \caption{Nesting of two adjacent loops $(\vec{\mu},\vec{\sigma})$ and $(\vec{\mu},\vec{\eta})$, where $\vec{\eta} = U \vec{\sigma}$ and  
  $\vec{\upsilon}$ is an interior state of the $(\vec{\mu},\vec{\sigma})$ loop. The figure illustrates part (a) of Proposition \ref{prop:nestinglemma}
  which is about the orbit $D^*\vec{\eta}$. Panel (a): the orbit $D^*\vec{\eta}$ cannot reach the interior node $\vec{\upsilon}$, but instead 
  must intersect the $D$-boundary of the $(\vec{\mu},\vec{\sigma})$ loop. This can happen in two ways as shown in panels (b) and (c): either $D^*\vec{\eta}$ 
  intersects the $D$-boundary of $(\vec{\mu},\vec{\sigma})$ at its endpoint (b), or at an intermediate point $\vec{\rho}$, panel (c).
  } 
  \label{fig:Nesting}
\end{figure}

\begin{proposition}[Nesting of interior loops]\label{prop:nestinglemma}
 Let $(\vec{\mu},\vec{\nu})$ form a loop and let $\vec{\sigma}$ be an intermediate boundary state. Suppose
 \begin{itemize}
  \item [(a)] $\vec{\sigma} \in U^*\vec{\mu}$ and  let $\vec{\eta} = U\vec{\sigma}$. The following 
  holds for the orbit $D^*\vec{\eta}$:
\begin{itemize}
 \item [(i)] $\vec{\mu} \in D^*\vec{\eta}$,
 \item [(ii)] $D^*\vec{\eta}$ cannot contain any interior nodes of the loop $(\vec{\mu},\vec{\sigma})$, 
\end{itemize}  

\item[(b)] $\vec{\sigma} \in D^*\vec{\nu}$ and let $\vec{\eta} = D\vec{\sigma}$. The following holds for the orbit $U^*\vec{\eta}$:
\begin{itemize}
 \item [(i)] $\vec{\nu} \in U^*\vec{\eta}$,
 \item [(ii)] $U^*\vec{\eta}$ cannot contain any interior nodes of the loop $(\vec{\sigma}, \vec{\nu})$, 
\end{itemize}

 \end{itemize}

\end{proposition}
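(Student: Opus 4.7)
My plan is to establish parts (a)(i) and (a)(ii) by direct appeal to the $\ell$RPM property, Def.~\ref{def:RPM}(a). Part (b) will then follow by the evident symmetry in the roles of $U$ and $D$ (using Def.~\ref{def:RPM}(b) instead).

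For (a)(i), I will use that $\vec{\sigma}$ is an intermediate $U$-boundary state of $(\vec{\mu},\vec{\nu})$, so $\vec{\eta}=U\vec{\sigma}$ also lies on the $U$-boundary and satisfies $\vec{\sigma}\prec_U \vec{\eta}\preceq_U \vec{\nu}$. If $\vec{\eta}=\vec{\nu}$, then by the loop hypothesis $\vec{\mu}\in D^*\vec{\nu}=D^*\vec{\eta}$ and we are done. Otherwise $\vec{\eta}$ and $\vec{\nu}$ are distinct states on $U^*\vec{\mu}$ with $\vec{\eta}$ reached first, and applying Def.~\ref{def:RPM}(a) to this pair, together with $\vec{\mu}\in D^*\vec{\nu}$, directly yields $\vec{\mu}\in D^*\vec{\eta}$.

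For (a)(ii), I will argue by contradiction. Suppose some interior state $\vec{\upsilon}$ of the loop $(\vec{\mu},\vec{\sigma})$ lies on $D^*\vec{\eta}$. By Prop.~\ref{prop:munuorbitLemma} applied to this loop, $\vec{\sigma}\in U^*\vec{\upsilon}$ and $\vec{\mu}\in D^*\vec{\upsilon}$. Since $\vec{\eta}=U\vec{\sigma}$ then also lies on $U^*\vec{\upsilon}$, the pair $(\vec{\upsilon},\vec{\eta})$ forms a loop, and $\vec{\sigma}$ is strictly between its endpoints on $U^*\vec{\upsilon}$ (it is distinct from $\vec{\upsilon}$ because $\vec{\upsilon}$ is interior, and distinct from $\vec{\eta}$ by acyclicity of the $U$-tree). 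Applying Def.~\ref{def:RPM}(a) to this new loop and the assumption $\vec{\upsilon}\in D^*\vec{\eta}$, we conclude $\vec{\upsilon}\in D^*\vec{\sigma}$.

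It then remains to show that $\vec{\upsilon}\in D^*\vec{\sigma}$ actually places $\vec{\upsilon}$ on the $D$-boundary of $(\vec{\mu},\vec{\sigma})$, i.e.\ between $\vec{\sigma}$ and $\vec{\mu}$ in that orbit, which contradicts $\vec{\upsilon}$ being interior. This is the most delicate step. Writing $\vec{\upsilon}=D^{k}\vec{\sigma}$ and $\vec{\mu}=D^{m}\vec{\sigma}$, the deterministic functional-graph structure of $D$ combined with $\vec{\mu}\in D^*\vec{\upsilon}$ forces $\vec{\mu}=D^{m-k}\vec{\upsilon}$, so $k\le m$; equivalently, anti-symmetry of $\preceq_D$ rules out $k>m$ since then $\vec{\upsilon}\in D^*\vec{\mu}$ together with $\vec{\mu}\in D^*\vec{\upsilon}$ would give $\vec{\upsilon}=\vec{\mu}$, an endpoint and thus not interior. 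This places $\vec{\upsilon}$ on the $D$-boundary of $(\vec{\mu},\vec{\sigma})$, contradicting its interiority and completing the proof of (a)(ii). The hard part, as noted, is this last bookkeeping, which hinges on the orbit lemma Prop.~\ref{prop:munuorbitLemma} applied twice; everything else is a clean invocation of $\ell$RPM.
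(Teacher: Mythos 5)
Your proof is correct and follows essentially the same route as the paper's: part (a)(i) is a direct invocation of Def.~\ref{def:RPM}(a) on the $(\vec{\mu},\vec{\nu})$-loop, and for (a)(ii) both arguments use Prop.~\ref{prop:munuorbitLemma} to place $\vec{\sigma}$ and hence $\vec{\eta}$ on $U^*\vec{\upsilon}$, form the loop $(\vec{\upsilon},\vec{\eta})$, and apply $\ell$RPM to conclude $\vec{\upsilon}\in D^*\vec{\sigma}$, contradicting interiority. Your final bookkeeping step (using anti-symmetry of $\preceq_D$ to rule out $\vec{\upsilon}$ lying strictly below $\vec{\mu}$ on $D^*\vec{\sigma}$) is a detail the paper leaves implicit, and it is handled correctly.
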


Prop.~\ref{prop:nestinglemma} indicates how a (sub) loop $(\vec{\mu},\vec{\sigma})$ can be augmented to a loop $(\vec{\mu},U\vec{\sigma})$ 
and what happens to its boundaries under this transformation. 
Note that by the parts (i) of the proposition, there is at least one state of $D^*\vec{\eta}$ ($U^*\vec{\eta}$) that is also on 
the $D$-($U$)-boundary of the loop $(\vec{\mu},\vec{\sigma})$, respectively $(\vec{\sigma},\vec{\nu})$. 
We work out next some useful consequences of Prop.~\ref{prop:nestinglemma}.

 Consider a $(\vec{\mu},\vec{\nu})$ -- loop and let $\vec{\sigma}$ be a point on its boundary. Recall that if $\vec{\sigma}$ is not an endpoint, 
 we call it an intermediate boundary state. Label the intermediate boundary points of the $U$-boundary as $\vec{\nu}_1, \vec{\nu}_2, \ldots \vec{\nu}_{n-1}$, 
 \begin{equation}
  \vec{\nu}_i = U^i \vec{\mu}, \quad U^n\vec{\mu} = \vec{\nu}.
  \label{eqn:nuidef}
 \end{equation}
Likewise, denote the $D$-boundary intermediate points by  $\vec{\mu}_1, \vec{\mu}_2, \ldots \vec{\mu}_{m-1}$,
 \begin{equation}
  \vec{\mu}_i = D^{m-i} \vec{\nu}, \quad D^m\vec{\nu} = \vec{\mu}.
  \label{eqn:muidef}
 \end{equation}
 Note that\footnote{
 If we want to emphasize that a boundary node 
 is an intersection point, we will use the letter $\vec{\kappa}$.}
 \begin{align}
  \vec{\mu} \prec_U \vec{\nu}_1 \prec_U &\ldots \prec_U \vec{\nu}_{n-1} \prec_U \vec{\nu}, \nonumber \\
  \vec{\mu} \prec_D \vec{\mu}_1 \prec_D &\ldots \prec_D \vec{\mu}_{m-1} \prec_D \vec{\nu}. \label{eqn:numuiorder}
 \end{align}
For the rest of this paper, we will use the above labeling and indexing convention for states on the $U$ and $D$-boundaries of a loop.

Given a $(\vec{\mu},\vec{\nu})$ -- loop, the nesting property, Prop. \ref{prop:nestinglemma}, allows us to define two mappings from the 
states of one of its boundaries to those of the other. By the $\ell$RPM property, this mapping has certain monotonicity 
properties. 

\begin{definition}[Boundary-to-boundary maps $\psi_\pm$]
\label{def:psidef}
Let states $\vec{\nu}_i$ and $\vec{\mu}_j$ be the boundary states of the loop $(\vec{\mu}, \vec{\nu})$. We 
define the following two mappings from one boundary of the loop into the other: 
 \begin{align*}
  \psi_-(\vec{\nu}_i) &= {\max}_D \left \{  D^*\vec{\nu}_i \cap D^*\vec{\nu} \right \}, \\
  \psi_+(\vec{\mu}_j) &= {\min}_U \left \{ U^*\vec{\mu}_j \cap U^*\vec{\mu} :  \right \}.
\end{align*}
 
 \end{definition}

 In words, the maps $\psi_\pm$ return the points where the orbit originating from one boundary is incident on the opposite boundary for the first time. In particular,  
points $\vec{\kappa}$ for which $\psi_\pm(\vec{\kappa}) = \vec{\kappa}$ are intersection points. This is established in Lemma \ref{lem:iminter} below.

\begin{proposition}[Monotonicity of the maps $\psi_\pm$]
\label{prop:bondarymon}
Let $(\vec{\mu},\vec{\nu})$ be the endpoints of a loop.  The mappings $\psi_\pm$ have the following properties:  
\begin{itemize}
 \item [(i)] $\psi_\pm (\vec{\mu}) = \vec{\mu}$ and $\psi_\pm (\vec{\nu}) = \vec{\nu}$.
 \item [(ii)] $\psi_\pm$ are non-decreasing: 
 \begin{align*}
  \vec{\nu}_i \prec_U \vec{\nu}_j &\Rightarrow \psi_-(\vec{\nu}_i) \preceq_D \psi_-(\vec{\nu}_j), \\
  \vec{\mu}_i \prec_D \vec{\mu}_j &\Rightarrow \psi_+(\vec{\mu}_i) \preceq_U \psi_+(\vec{\mu}_j).
 \end{align*}
\end{itemize}
\end{proposition}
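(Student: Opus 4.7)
Part (i) is a direct unwinding of Definition \ref{def:psidef}. Since the loop property gives $\vec{\mu}\in D^*\vec{\nu}$ and $D$ is a function, the orbit $D^*\vec{\mu}$ is a tail of $D^*\vec{\nu}$; hence $D^*\vec{\mu}\cap D^*\vec{\nu}=D^*\vec{\mu}$, whose $\prec_D$-maximum is $\vec{\mu}$, giving $\psi_-(\vec{\mu})=\vec{\mu}$. The identities $\psi_-(\vec{\nu})=\vec{\nu}$, $\psi_+(\vec{\mu})=\vec{\mu}$ and $\psi_+(\vec{\nu})=\vec{\nu}$ follow by the same argument, using the loop property $\vec{\nu}\in U^*\vec{\mu}$ for the two $\psi_+$ cases.

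For part (ii) I treat the statement about $\psi_-$; the one about $\psi_+$ is obtained by interchanging the roles of $U$ and $D$ throughout. Write $\vec{\kappa}_i=\psi_-(\vec{\nu}_i)$ and $\vec{\kappa}_j=\psi_-(\vec{\nu}_j)$, and assume $i<j$. Because $\vec{\kappa}_j$ is by definition the $\prec_D$-maximum of $D^*\vec{\nu}_j\cap D^*\vec{\nu}$, it suffices to prove $\vec{\kappa}_i\in D^*\vec{\nu}_j\cap D^*\vec{\nu}$. The inclusion $\vec{\kappa}_i\in D^*\vec{\nu}$ is built into the definition of $\psi_-(\vec{\nu}_i)$, so everything reduces to showing $\vec{\kappa}_i\in D^*\vec{\nu}_j$.

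I would establish this in three short steps. First, Proposition \ref{prop:munuorbitLemma} applied to the big loop gives $\vec{\mu}\in D^*\vec{\nu}_i$, so $(\vec{\mu},\vec{\nu}_i)$ is itself a loop; since $\vec{\mu}$ lies in the intersection defining $\psi_-(\vec{\nu}_i)$ we have $\vec{\kappa}_i\succeq_D\vec{\mu}$, so the pair $(\vec{\kappa}_i,\vec{\mu})$ sits on $D^*\vec{\nu}_i$ with $\vec{\kappa}_i$ reached first, and $\ell$RPM(b) applied to this sub-loop, together with $\vec{\nu}_i\in U^*\vec{\mu}$, yields $\vec{\nu}_i\in U^*\vec{\kappa}_i$. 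Second, since $\vec{\nu}_i,\vec{\nu}_j\in U^*\vec{\mu}$ with $i<j$, transitivity of $\preceq_U$ propagates this to $\vec{\nu}_j,\vec{\nu}\in U^*\vec{\kappa}_i$; combined with $\vec{\kappa}_i\in D^*\vec{\nu}$, this shows that $(\vec{\kappa}_i,\vec{\nu})$ is also a loop. Third, applying $\ell$RPM(a) inside the loop $(\vec{\kappa}_i,\vec{\nu})$ to the pair $(\vec{\nu}_j,\vec{\nu})$ of states on $U^*\vec{\kappa}_i$ converts the hypothesis $\vec{\kappa}_i\in D^*\vec{\nu}$ into the desired conclusion $\vec{\kappa}_i\in D^*\vec{\nu}_j$.

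The main obstacle is organizational: one must keep track of the three distinct loops $(\vec{\mu},\vec{\nu})$, $(\vec{\mu},\vec{\nu}_i)$ and $(\vec{\kappa}_i,\vec{\nu})$ used at different stages and verify in each case that the loop structure and the precedence assumption in $\ell$RPM are genuinely satisfied. Degenerate edge cases where $\vec{\nu}_i=\vec{\mu}$ or $\vec{\kappa}_i=\vec{\mu}$ collapse to part (i) or to the trivial inequality $\vec{\kappa}_i=\vec{\mu}\preceq_D\vec{\kappa}_j$. Once the loop status of $(\vec{\kappa}_i,\vec{\nu})$ is secured via $\ell$RPM(b), the monotonicity drops out of a single application of $\ell$RPM(a).
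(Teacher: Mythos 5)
Your proof is correct, but it takes a genuinely different route from the paper's. For part (ii) the paper argues locally and inductively: it compares $\psi_-(\vec{\nu}_i)$ with $\psi_-(\vec{\nu}_{i+1})$ for \emph{adjacent} boundary states by invoking the nesting result, Prop.~\ref{prop:nestinglemma} (the orbit $D^*\vec{\nu}_{i+1}$ either meets the $D$-boundary of the sub loop $(\vec{\mu},\vec{\nu}_i)$ or lands strictly above its image point), and then chains these one-step inequalities together. You instead give a direct, non-inductive argument: you reduce the inequality $\psi_-(\vec{\nu}_i)\preceq_D\psi_-(\vec{\nu}_j)$ to the single membership statement $\psi_-(\vec{\nu}_i)\in D^*\vec{\nu}_j$, and you establish that by first upgrading $(\psi_-(\vec{\nu}_i),\vec{\nu})$ to a loop via Def.~\ref{def:RPM}(b) applied inside $(\vec{\mu},\vec{\nu}_i)$, and then applying Def.~\ref{def:RPM}(a) inside that new loop. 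This bypasses Prop.~\ref{prop:nestinglemma} entirely and works with arbitrary (not just adjacent) pairs $i<j$, at the cost of having to track three auxiliary loops and verify the precedence hypotheses of $\ell$RPM in each. One small point worth making explicit in your write-up: the step from ``$\vec{\kappa}_i$ lies in $D^*\vec{\nu}_j\cap D^*\vec{\nu}$'' to ``$\vec{\kappa}_i\preceq_D\vec{\kappa}_j$'' uses that this intersection is totally ordered under $\preceq_D$ (because $D$ is a function, the two orbits coincide once they meet), so its $\prec_D$-maximum really does dominate every element; you use this tacitly, and it is the same fact that makes $\psi_\pm$ well defined. The paper's inductive proof buys a tighter picture of \emph{how} the image moves as one walks up the $U$-boundary (which is then reused in Lemma~\ref{lem:invImages}), while yours is more self-contained and closer to the bare $\ell$RPM axioms.
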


Regarding the intersection points, the following result is useful.

\begin{lemma}[Images of intersection points]\label{lem:iminter}
 Let $\vec{\kappa}$ be an intermediate boundary point of a loop $(\vec{\mu},\vec{\nu})$. The following are equivalent:
 \begin{itemize}
  \item [(i)] $\vec{\kappa}$ is an intersection point.
  \item [(ii)] $\psi_-(\vec{\kappa}) = \vec{\kappa}$.
  \item [(iii)] $\psi_+(\vec{\kappa}) = \vec{\kappa}$. 
 \end{itemize}
\end{lemma}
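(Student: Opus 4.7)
The plan is to verify the biconditional (i) $\Leftrightarrow$ (ii) directly from the definitions, and note that (i) $\Leftrightarrow$ (iii) follows by the entirely symmetric argument with the roles of $U$ and $D$, and of $\vec{\mu}$ and $\vec{\nu}$, interchanged. The single driving observation is that an arbitrary state $\vec{\kappa}$ is trivially the $\preceq_D$-maximum of its own $D$-orbit $D^*\vec{\kappa}$ and, symmetrically, the $\preceq_U$-minimum of $U^*\vec{\kappa}$. This means that the ``$\max_D$'' appearing in the definition of $\psi_-$ is realized at the initial point of the orbit whenever that initial point itself belongs to the set being maximized, and this immediately translates the fixed-point condition into a membership condition on $D^*\vec{\nu}$.

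For (i) $\Rightarrow$ (ii), assume $\vec{\kappa}$ is an intersection point. By Def.~\ref{def:loopdef} it lies on both the $U$- and $D$-boundaries, so $\vec{\kappa} \in U^*\vec{\mu}$ (in particular $\psi_-(\vec{\kappa})$ is in scope) and $\vec{\kappa} \in D^*\vec{\nu}$. Since $\vec{\kappa} \in D^*\vec{\kappa}$ trivially, we have $\vec{\kappa} \in D^*\vec{\kappa} \cap D^*\vec{\nu}$, and as the $\preceq_D$-maximum of $D^*\vec{\kappa}$ it is a fortiori the $\preceq_D$-maximum of this subset. Hence $\psi_-(\vec{\kappa}) = \vec{\kappa}$.

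For (ii) $\Rightarrow$ (i), the assumption $\psi_-(\vec{\kappa}) = \vec{\kappa}$ presupposes, via Def.~\ref{def:psidef}, that $\vec{\kappa}$ is a $U$-boundary state, i.e.\ $\vec{\kappa} \in U^*\vec{\mu}$. The equality itself forces $\vec{\kappa}$ to belong to the set $D^*\vec{\kappa} \cap D^*\vec{\nu}$, so in particular $\vec{\kappa} \in D^*\vec{\nu}$ and $\vec{\kappa}$ lies on the $D$-boundary. Together with $\vec{\kappa} \in U^*\vec{\mu}$ this is exactly the definition of an intersection point, establishing (i).

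The only auxiliary point to check is that $\psi_-(\vec{\kappa})$ is well-defined in the first place, i.e.\ that $D^*\vec{\kappa} \cap D^*\vec{\nu}$ is nonempty: this is immediate from Prop.~\ref{prop:munuorbitLemma} applied to the $(\vec{\mu},\vec{\nu})$-reachable state $\vec{\kappa}$, which gives $\vec{\mu} \in D^*\vec{\kappa}$, while $\vec{\mu} \in D^*\vec{\nu}$ by the loop definition. I do not expect a real obstacle here; the lemma is essentially a bookkeeping statement unpacking Def.~\ref{def:psidef} together with the fact that $\vec{\kappa}$ tops its own $D$-orbit, and the $\ell$RPM property itself is not needed for this particular equivalence (it enters in the broader machinery via Propositions \ref{prop:munuorbitLemma} and \ref{prop:nestinglemma} ensuring non-emptiness of the relevant orbit intersections).
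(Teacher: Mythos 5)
Your proof is correct and follows essentially the same route as the paper, whose own proof is just two sentences deferring to Def.~\ref{def:psidef} in exactly the way you spell out. The only step worth making explicit in the (ii) $\Rightarrow$ (i) direction is that $\vec{\kappa}\in D^*\vec{\nu}$ by itself only places $\vec{\kappa}$ on the $D$-\emph{orbit} of $\vec{\nu}$, not yet on the $D$-boundary of the loop (the initial segment of that orbit down to $\vec{\mu}$); the missing half, $\vec{\mu}\preceq_D\vec{\kappa}$, follows immediately from the fact you already invoke, namely $\vec{\mu}\in D^*\vec{\kappa}$ via Prop.~\ref{prop:munuorbitLemma}.
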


Next, consider a point $\vec{\mu}_i$ on the $D$-boundary of a loop and consider the corresponding 
inverse maps $\psi_-^{-1}(\vec{\mu}_i)$, the {\em set} of points on 
the $U$-boundary that are mapped under $\psi_-$ into $\vec{\mu}_i$. Given a set of adjacent boundary points, 
$\vec{\nu}_i,\vec{\nu}_{i+1}, \ldots \vec{\nu}_{i+j}$, we will denote this as $[\vec{\nu}_i, \vec{\nu}_{i+j}]$ and call it an interval. 
We have the following Lemma:

\begin{lemma}[Pre-images $\psi^{-1}_\pm$]
\label{lem:invImages}
 Let $\vec{\mu}_i$ and $\vec{\nu}_j$ be points on the $U$, respectively $D$-boundary of a loop $(\vec{\mu}, \vec{\nu})$. Then, 
 the following hold:
 \begin{itemize}
  \item[(i)] the 
 pre-images $\psi^{-1}_-(\vec{\mu}_i)$ and $\psi^{-1}_+(\vec{\nu}_j)$ are intervals on the $U$-, respectively $D$-boundary of the loop. 
 That is, 
 \begin{align}
 \psi^{-1}_-(\vec{\mu}_i) &= [\vec{\nu}_{k'}, \vec{\nu}_k],  \label{eqn:downTriangleLoop} \\
 \psi^{-1}_+(\vec{\nu}_j) &= [\vec{\mu}_{l'}, \vec{\mu}_l],  \label{eqn:downTriangleLoop_2} 
 \end{align}
  for some pair of integers $k' < k$ and $l' < l$.
 
 \item[(ii)] The pairs $(\vec{\mu}_i, \vec{\nu}_k)$ and $(\vec{\mu}_{l'}, \vec{\nu}_j)$ are sub loops of $(\vec{\mu}, \vec{\nu})$.  
 \end{itemize}

\end{lemma}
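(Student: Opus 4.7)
The plan is to exploit the monotonicity of $\psi_\pm$ from Prop.~\ref{prop:bondarymon} for part~(i) and a two-step application of $\ell$RPM (Def.~\ref{def:RPMII}) for part~(ii).

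For part~(i), I would take $\vec{\nu}_a, \vec{\nu}_b \in \psi_-^{-1}(\vec{\mu}_i)$ with $\vec{\nu}_a \prec_U \vec{\nu}_b$ and consider any intermediate $\vec{\nu}_c$ on the $U$-boundary satisfying $\vec{\nu}_a \prec_U \vec{\nu}_c \prec_U \vec{\nu}_b$. Monotonicity then gives $\vec{\mu}_i = \psi_-(\vec{\nu}_a) \preceq_D \psi_-(\vec{\nu}_c) \preceq_D \psi_-(\vec{\nu}_b) = \vec{\mu}_i$, forcing $\psi_-(\vec{\nu}_c) = \vec{\mu}_i$. Hence $\psi_-^{-1}(\vec{\mu}_i)$ is an interval on the $U$-boundary, establishing \eqref{eqn:downTriangleLoop}; the argument for $\psi_+^{-1}(\vec{\nu}_j)$ is identical upon exchanging the roles of the two boundaries.

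For part~(ii), I would focus on the pair $(\vec{\mu}_i, \vec{\nu}_k)$. The defining relation $\psi_-(\vec{\nu}_k) = \vec{\mu}_i$ immediately yields $\vec{\mu}_i \in D^*\vec{\nu}_k$, i.e.\ $\vec{\mu}_i \prec_D \vec{\nu}_k$, so only $\vec{\nu}_k \in U^*\vec{\mu}_i$ remains to be shown. To obtain it, apply $\ell$RPM to the ambient loop $(\vec{\mu},\vec{\nu})$: since $\vec{\nu}_k$ lies on the $U$-boundary as an intermediate state, $\vec{\mu} \prec_U \vec{\nu}_k \prec_U \vec{\nu}$, and Def.~\ref{def:RPMII} produces $\vec{\mu} < \vec{\nu}_k$, so that $(\vec{\mu},\vec{\nu}_k)$ is itself a sub loop. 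The point $\vec{\mu}_i$ now lies strictly between $\vec{\mu}$ and $\vec{\nu}_k$ on the $D$-orbit (being an intermediate state of the outer $D$-boundary, it is distinct from $\vec{\mu}$, and distinct from $\vec{\nu}_k$ because $\vec{\nu}_k$ is on the $U$-boundary). A second application of $\ell$RPM to this sub loop with intermediate state $\vec{\mu}_i$ gives $\vec{\mu}_i < \vec{\nu}_k$, whence $\vec{\mu}_i \prec_U \vec{\nu}_k$ and $\vec{\nu}_k \in U^*\vec{\mu}_i$, as required. The pair $(\vec{\mu}_{l'}, \vec{\nu}_j)$ is handled symmetrically: first enlarge to the sub loop $(\vec{\mu}_{l'}, \vec{\nu})$ using $\ell$RPM on the $D$-boundary, then insert $\vec{\nu}_j$ via the $U$-clause of $\ell$RPM.

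The main obstacle is not conceptual but one of bookkeeping: one must verify that each chain of comparisons passed into Def.~\ref{def:RPMII} is strict, which rests on $\vec{\mu}_i$ being an intermediate (not an end-) state and on the extremality of $\vec{\nu}_k$ (respectively $\vec{\mu}_{l'}$) within its pre-image interval. With those inequalities in place, the proof amounts to two clean invocations of $\ell$RPM per sub loop, needing neither Prop.~\ref{prop:munuorbitLemma} nor Prop.~\ref{prop:nestinglemma} directly.
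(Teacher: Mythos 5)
Your proposal is correct and follows essentially the same route as the paper: part (i) is the monotonicity of $\psi_\pm$ from Prop.~\ref{prop:bondarymon} (you phrase it as a sandwiching argument, the paper walks outward from the least element of the pre-image, but both are the same use of that proposition), and part (ii) is the same two-step application of the $\ell$RPM property, first forming the auxiliary sub loop $(\vec{\mu},\vec{\nu}_k)$ (resp.\ $(\vec{\mu}_{l'},\vec{\nu})$) and then inserting the intermediate boundary state. Your bookkeeping remarks about strictness match what the paper leaves implicit, so no gap.
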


The following two Lemmas describe an ``adjacency'' property of loops whose endpoints are located on the boundary of another loop. This is 
illustrated in Fig.~\ref{fig:StackingProof}

\begin{figure}[t!]
  \begin{center}
    \includegraphics[width = 3.2in]{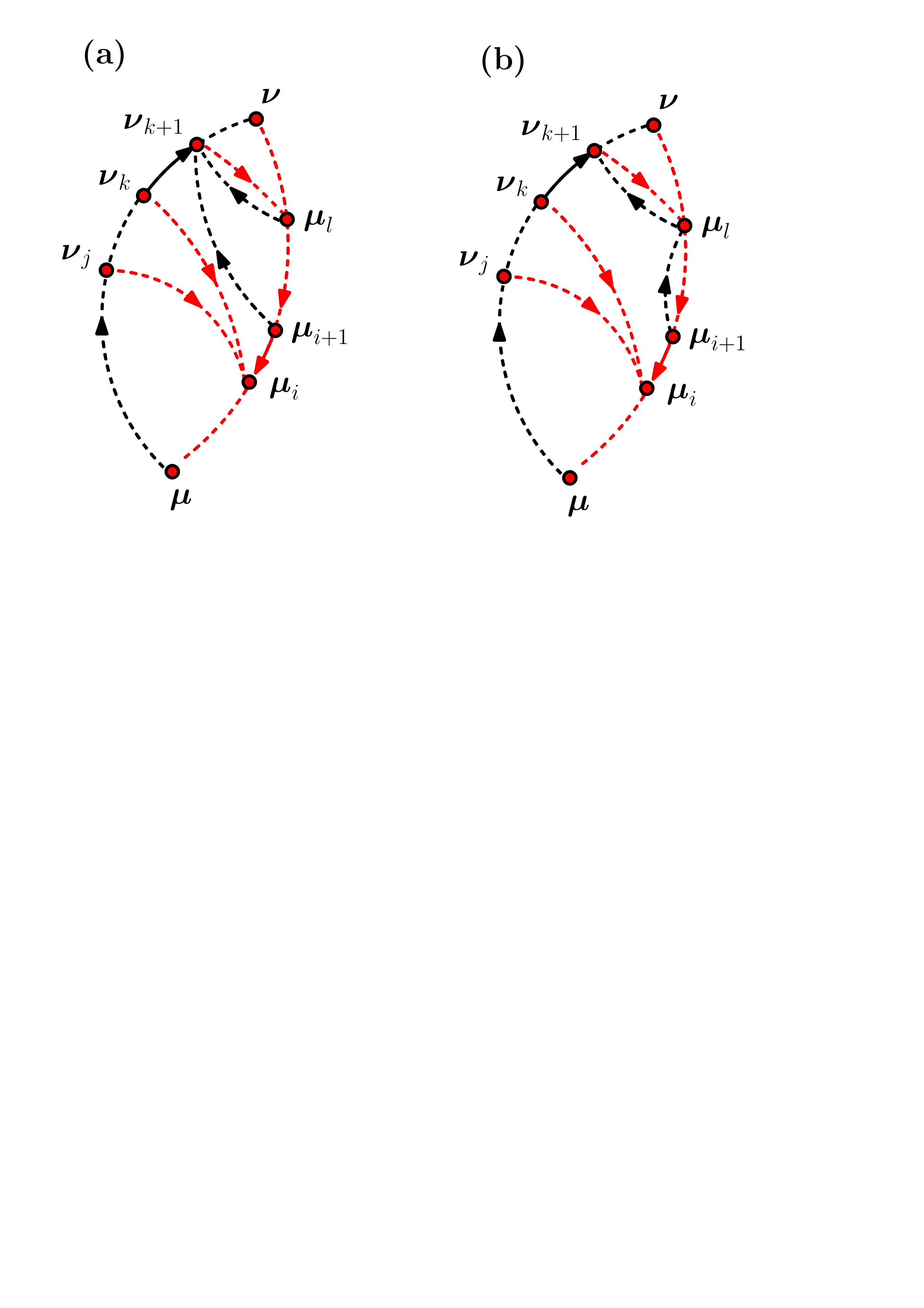} 
  \end{center}
  \caption{Illustration of the Adjacency Lemmas \ref{lemma:stacking} and \ref{lem:stacking}. In both panels it is assumed that $\psi^{-1}_-(\vec{\mu}_i) = [\vec{\nu}_j, \vec{\nu}_k]$, with 
  $\vec{\mu}_i$ and $\vec{\nu}_k$ being intermediate boundary points on the $D$- and $U$-boundary of the loop $(\vec{\mu}, \vec{\nu})$. The state $\vec{\mu}_l = \psi_-(\vec{\nu}_{k+1})$, 
  is the state where the boundary point  
  $\vec{\nu}_{k+1} = U\vec{\nu}_k$ is incident on the $D$-boundary of the loop $(\vec{\mu}, \vec{\nu})$. 
  Lemma \ref{lemma:stacking} 
  asserts that $\psi_+(\vec{\mu}_{i+1}) = \vec{\nu}_{k+1}$, where  $D\vec{\mu}_{i+1} = \vec{\mu}_i$. 
  Panels (a) and (b) 
  show two possible courses for the orbit $U^*\vec{\mu}_{i+1}$. In (a) apart from $\vec{\mu}_{i+1}$, the orbit $U^*\vec{\mu}_{i+1}$ 
  does not contain any other $D$-boundary state. In (b) $U^*\vec{\mu}_{i+1}$ contains 
  at least one other state on the $D$-boundary:  $\vec{\mu}_l$. Lemma \ref{lem:stacking} 
  asserts that $\psi_-(\vec{\nu}_{k}) = \vec{\mu}_{i}$.
  } 
  \label{fig:StackingProof}
\end{figure}

\begin{lemma}[Adjacency of sub loops I]\label{lemma:stacking}
 Let $(\vec{\mu},\vec{\nu})$ be the end points of a loop, let 
 $\vec{\mu}_i \prec_D \vec{\nu}$ be an intermediate point on its $D$-boundary such that the interval 
 \[
 \psi^{-1}_-(\vec{\mu}_i) = [\vec{\nu}_j, \vec{\nu}_k]  
 \]
is non-empty. Denote by $\vec{\mu}_{i+1}$ the boundary point adjacent to $\vec{\mu}_i$ so that $D\vec{\mu}_{i+1} = \vec{\mu}_i$. 
Then   
\[
  \psi_+(\vec{\mu}_{i+1}) = U \vec{\nu}_k = \vec{\nu}_{k+1}
 \]
and $\vec{\mu}_{i+1}$ is the lower endpoint of the sub loop $(\vec{\mu}_{i+1}, \vec{\nu}_{k+1})$. 

Moreover, the loops $(\vec{\mu}, \vec{\nu}_k)$ and $(\vec{\mu}_{i+1}, \vec{\nu}_{k+1})$ are disjoint and so are therefore the  
sets $\mathcal{R}_{\vec{\mu}, \vec{\nu}_k}$ and $\mathcal{R}_{\vec{\mu}_{i+1}, \vec{\nu}_{k+1}}$.

\end{lemma}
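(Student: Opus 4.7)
The plan is to build the sub loop $(\vec{\mu}_{i+1}, \vec{\nu}_{k+1})$ from three auxiliary loops, then locate $\psi_+(\vec{\mu}_{i+1})$ by excluding earlier $U$-boundary points, and finally derive disjointness by extending confined field histories. The auxiliary loops are: (i) $(\vec{\mu}_i, \vec{\nu}_k)$, a sub loop by Lemma~\ref{lem:invImages}(ii) because $\vec{\nu}_k$ is the upper end of $\psi_-^{-1}(\vec{\mu}_i)$; (ii) $(\vec{\mu}, \vec{\nu}_k)$, a loop because $\vec{\nu}_k$ is an intermediate boundary state of $(\vec{\mu},\vec{\nu})$ and Prop.~\ref{prop:munuorbitLemma} supplies $\vec{\mu} \in D^*\vec{\nu}_k$; and (iii) $(\vec{\mu}_{i+1}, \vec{\nu})$, a loop because $\vec{\mu}_{i+1}$ lies on the $D$-boundary and Prop.~\ref{prop:munuorbitLemma} gives $\vec{\nu} \in U^*\vec{\mu}_{i+1}$.

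To secure $\vec{\nu}_{k+1} \in U^*\vec{\mu}_{i+1}$, I first observe that loop (i) yields $\vec{\nu}_{k+1} = U\vec{\nu}_k \in U^*\vec{\mu}_i$. Applying Prop.~\ref{prop:nestinglemma}(a) to $(\vec{\mu}, \vec{\nu})$ at the intermediate point $\vec{\nu}_k$ with $\vec{\eta} = \vec{\nu}_{k+1}$ gives $\vec{\mu} \in D^*\vec{\nu}_{k+1}$. Since $\vec{\nu}_{k+1} \notin \psi_-^{-1}(\vec{\mu}_i)$ and $\psi_-$ is non-decreasing (Prop.~\ref{prop:bondarymon}), $\psi_-(\vec{\nu}_{k+1}) = \vec{\mu}_l$ for some $l \geq i+1$, so after reaching $\vec{\mu}_l$ the orbit $D^*\vec{\nu}_{k+1}$ follows the $D$-boundary of $(\vec{\mu}, \vec{\nu})$ through $\vec{\mu}_{i+1}$ and then $\vec{\mu}_i$ on its way down to $\vec{\mu}$. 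Thus both $\vec{\mu}_{i+1}$ and $\vec{\mu}_i$ lie on $D^*\vec{\nu}_{k+1}$ with $\vec{\mu}_{i+1}$ encountered first. Applying $\ell$RPM Def.~\ref{def:RPM}(b) with $(\vec{\nu}_{k+1}, \vec{\mu}_{i+1}, \vec{\mu}_i)$ playing the roles of its $(\vec{\nu}, \vec{\upsilon}, \vec{\mu})$ and using $\vec{\nu}_{k+1} \in U^*\vec{\mu}_i$ delivers $\vec{\nu}_{k+1} \in U^*\vec{\mu}_{i+1}$. Combined with $\vec{\mu}_{i+1} \in D^*\vec{\nu}_{k+1}$, this proves $(\vec{\mu}_{i+1}, \vec{\nu}_{k+1})$ is a loop.

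To pin down $\psi_+(\vec{\mu}_{i+1}) = \vec{\nu}_{k+1}$, I exclude each $\vec{\nu}_s$ with $s \leq k$ from $U^*\vec{\mu}_{i+1}$ by contradiction. If $\vec{\nu}_s \in U^*\vec{\mu}_{i+1}$, then loop (iii) puts $\vec{\nu} \in U^*\vec{\mu}_{i+1}$ with $\vec{\nu}_s$ encountered first, so Def.~\ref{def:RPM}(a) forces $\vec{\mu}_{i+1} \in D^*\vec{\nu}_s$. But monotonicity of $\psi_-$ gives $\psi_-(\vec{\nu}_s) \preceq_D \vec{\mu}_i$, so $D^*\vec{\nu}_s$ enters the $D$-boundary of $(\vec{\mu}, \vec{\nu})$ no higher than $\vec{\mu}_i$ and thereafter follows it down to $\vec{\mu}$, bypassing $\vec{\mu}_{i+1}$ --- contradiction. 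The $s = k$ instance additionally gives $\vec{\mu}_{i+1} \notin \mathcal{R}_{\vec{\mu}, \vec{\nu}_k}$ via Prop.~\ref{prop:munuorbitLemma}.

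For disjointness, suppose $\vec{\sigma} \in \mathcal{R}_{\vec{\mu}, \vec{\nu}_k} \cap \mathcal{R}_{\vec{\mu}_{i+1}, \vec{\nu}_{k+1}}$. The second membership supplies a $D$-segment $\vec{\sigma}, D\vec{\sigma}, \ldots, D^r\vec{\sigma} = \vec{\mu}_{i+1}$, and because $\vec{\mu} \in D^*\vec{\mu}_{i+1}$ holds universally, $\vec{\mu}$ lies strictly past $\vec{\mu}_{i+1}$ on this orbit. Appending this segment to the $(\vec{\mu}, \vec{\nu}_k)$-confined field history from $\vec{\mu}$ to $\vec{\sigma}$ produces a field history from $\vec{\mu}$ to $\vec{\mu}_{i+1}$ whose transit states avoid $\vec{\mu}$ (since $\vec{\mu}$ sits beyond $\vec{\mu}_{i+1}$) and avoid $\vec{\nu}_k$ (if $\vec{\nu}_k$ were a transit state of the $D$-segment, then $\vec{\mu}_{i+1} \in D^*\vec{\nu}_k$, contradicting $\psi_-(\vec{\nu}_k) = \vec{\mu}_i$). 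Hence $\vec{\mu}_{i+1} \in \mathcal{R}_{\vec{\mu}, \vec{\nu}_k}$, contradicting the conclusion of the previous paragraph. The main obstacle is this last step --- ensuring the appended $D$-segment does not silently hit $\vec{\nu}_k$ --- which is resolved by invoking $\psi_-$-monotonicity once more.
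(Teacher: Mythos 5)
Your proof is correct and follows essentially the same route as the paper's: monotonicity of $\psi_-$ (Prop.~\ref{prop:bondarymon}) to place $\psi_-(\vec{\nu}_{k+1})$ strictly above $\vec{\mu}_i$ on the $D$-boundary, the $\ell$RPM property applied to suitable triples to establish the loop $(\vec{\mu}_{i+1},\vec{\nu}_{k+1})$ and to exclude every $\vec{\nu}_s$ with $s\leq k$ from $U^*\vec{\mu}_{i+1}$, and Prop.~\ref{prop:munuorbitLemma} for disjointness. If anything, your disjointness step is more explicit than the paper's (which appeals tersely to the strict order relations \eqref{eqn:NCU}--\eqref{eqn:NCD}): the field-history-extension argument, including the check that the appended $D$-segment cannot pass through $\vec{\nu}_k$ because $\psi_-(\vec{\nu}_k)=\vec{\mu}_i$, supplies exactly the detail the paper leaves implicit.
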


Note that by Prop.~\ref{prop:bondarymon} (i) $\psi_-(\vec{\nu}) = \vec{\nu}$. Thus, if $\vec{\mu}_i \prec_D \vec{\nu}$ and the set
 $\psi^{-1}_-(\vec{\mu}_i) = [\vec{\nu}_j, \vec{\nu}_k]$ is non-empty, this implies that $\vec{\nu}_k \prec_U \vec{\nu} $ and 
 hence equality is impossible, $\vec{\nu}_k \neq \vec{\nu}$.

Observe that the loop $(\vec{\mu}_{i+1}, \vec{\nu}_{k+1})$ in Lemma \ref{lemma:stacking}  is 
``adjacent'' to $(\vec{\mu}_i,\vec{\nu}_k)$, since its lower and upper endpoints each are adjacent on the 
corresponding boundaries of the loop $(\vec{\mu}, \vec{\nu})$.  
Lemma \ref{lemma:stacking} essentially tells us that a loop  $(\vec{\mu},\vec{\nu})$ can be broken down or ``stacked'' into a sequence of adjacent sub loops that are 
all disjoint: each $(\vec{\mu},\vec{\nu})$-reachable state $\vec{\sigma}$ belongs to one and only one of these sub loops. 
Also observe that when $l > i + 1$, so that there are states on the $D$-boundary between $\vec{\mu}_i$ and $\vec{\mu}_l$, the orbit $U^*\vec{\mu}_{r}$ can contain an 
 interval of states $\vec{\mu}_{r+1}, \vec{\mu}_{r+2}, \ldots, \vec{\mu}_s$ for some $r < s \leq l$, so that for the $(\vec{\mu}_{i+1},\vec{\nu}_{k+1})$-loop 
 these states are intersection points. One such case is shown in panel (b) of Fig.~\ref{fig:StackingProof}. 
 
 Upon exchange of the boundaries as well as the maps $\psi_\pm$, a result analogous to Lemma \ref{lemma:stacking} is found, {\em cf.} Fig.~\ref{fig:StackingProof}.  

\begin{lemma}[Adjacency of sub loops II]\label{lem:stacking}
 Let $(\vec{\mu},\vec{\nu})$ be the end points of a loop, let 
 $\vec{\nu}_{k+1} \prec_U \vec{\nu}$ be an intermediate point on its $U$-boundary  such that the interval 
 \[
 \psi^{-1}_+(\vec{\nu}_{k+1}) = [\vec{\mu}_{i+1}, \vec{\mu}_l]  
 \]
is non-empty. Denote by $\vec{\nu}_k$ the boundary point adjacent to $\vec{\nu}_{k+1}$, so that $U\vec{\nu}_k = \vec{\nu}_{k+1}$. 
Then   
\[
  \psi_-(\vec{\nu}_k) = \vec{\mu}_i,
 \]
where $\vec{\mu}_i = D \vec{\mu}_{i+1}$, 
and $\vec{\mu}_i$ is the lower endpoint of sub loop $(\vec{\mu}_i, \vec{\nu}_k)$. 

Moreover, the loops $(\vec{\mu}_{i+1}, \vec{\nu})$ and $(\vec{\mu}_i, \vec{\nu}_k)$ are disjoint. In other words, the 
sets $\mathcal{R}_{\vec{\mu}_{i+1}, \vec{\nu}}$ and $\mathcal{R}_{\vec{\mu}_i, \vec{\nu}_k}$ are disjoint.
 
\end{lemma}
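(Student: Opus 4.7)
My plan is to derive Lemma~\ref{lem:stacking} as the $U \leftrightarrow D$ dual of Lemma~\ref{lemma:stacking}. The $\ell$AQS-A framework developed so far---the compatibility conditions \eqref{eqn:stabilityII}--\eqref{eqn:stabilityV}, the loop definition, the dynamic partial orders of Def.~\ref{def:dynPO}, the $\ell$RPM property of Def.~\ref{def:RPMII}, and the boundary-to-boundary maps $\psi_\pm$ of Def.~\ref{def:psidef}---is invariant under the involution that simultaneously swaps $U \leftrightarrow D$, $\vec{\omega} \leftrightarrow \vec{\alpha}$, $F^+ \leftrightarrow F^-$, $\preceq_U \leftrightarrow \preceq_D$, $\psi_+ \leftrightarrow \psi_-$, and relabels the endpoints of the ambient loop by $\vec{\mu} \leftrightarrow \vec{\nu}$, with the $U$- and $D$-boundaries correspondingly swapping. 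Every result proved in Section~\ref{sec:damaloop} therefore admits a dual statement obtained by this substitution.

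Under this dualization, the hypotheses of Lemma~\ref{lem:stacking} translate exactly to those of Lemma~\ref{lemma:stacking}: the intermediate $U$-boundary state $\vec{\nu}_{k+1}$ is sent to an intermediate $D$-boundary state, the assumed interval $\psi_+^{-1}(\vec{\nu}_{k+1}) = [\vec{\mu}_{i+1}, \vec{\mu}_l]$ is sent to an interval for $\psi_-^{-1}$ on the opposite boundary, and the adjacency $U\vec{\nu}_k = \vec{\nu}_{k+1}$ becomes $D\vec{\mu}_{i+1} = \vec{\mu}_i$. Correspondingly, the conclusion $\psi_+(\vec{\mu}_{i+1}) = \vec{\nu}_{k+1}$ of Lemma~\ref{lemma:stacking} dualizes to $\psi_-(\vec{\nu}_k) = \vec{\mu}_i$, the sub loop $(\vec{\mu}_{i+1}, \vec{\nu}_{k+1})$ dualizes to $(\vec{\mu}_i, \vec{\nu}_k)$, and the disjointness of $(\vec{\mu}, \vec{\nu}_k)$ with $(\vec{\mu}_{i+1}, \vec{\nu}_{k+1})$ dualizes to the disjointness of $(\vec{\mu}_{i+1}, \vec{\nu})$ with $(\vec{\mu}_i, \vec{\nu}_k)$. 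Hence the statement of Lemma~\ref{lem:stacking} is precisely the dualized statement of Lemma~\ref{lemma:stacking} and follows immediately.

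For readers who prefer a duality-free derivation, a direct proof parallels that of Lemma~\ref{lemma:stacking}. I would start from the hypothesis $\psi_+(\vec{\mu}_{i+1}) = \vec{\nu}_{k+1}$, which tells us that $U^*\vec{\mu}_{i+1}$ first meets the $U$-boundary of $(\vec{\mu}, \vec{\nu})$ at $\vec{\nu}_{k+1}$. Applying Prop.~\ref{prop:nestinglemma}(a) to the sub loop $(\vec{\mu}_{i+1}, \vec{\nu}_{k+1})$ together with the step $U\vec{\nu}_k = \vec{\nu}_{k+1}$ shows that $D^*\vec{\nu}_{k+1}$ must contain $\vec{\mu}_{i+1}$. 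The monotonicity of $\psi_-$ (Prop.~\ref{prop:bondarymon}(ii)) combined with Lemma~\ref{lem:invImages} then forces $\psi_-(\vec{\nu}_k)$ to be a $D$-boundary state strictly below $\vec{\mu}_{i+1}$ and weakly above any state whose pre-image under $\psi_+$ lies strictly below $\vec{\nu}_{k+1}$; the only such state compatible with the interval $\psi_+^{-1}(\vec{\nu}_{k+1}) = [\vec{\mu}_{i+1}, \vec{\mu}_l]$ is $\vec{\mu}_i = D\vec{\mu}_{i+1}$.

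The main potential obstacle is this last pinning-down: ruling out that $\psi_-(\vec{\nu}_k) = \vec{\mu}_{i'}$ for some $i' < i$. This would imply by Lemma~\ref{lem:invImages}(i) that $\vec{\nu}_k$ belongs to some pre-image interval $\psi^{-1}_+(\vec{\nu}')$ with $\vec{\nu}' \preceq_U \vec{\nu}_k \prec_U \vec{\nu}_{k+1}$, placing boundary points strictly below $\vec{\mu}_{i+1}$ inside that pre-image and thus outside the assumed interval $\psi_+^{-1}(\vec{\nu}_{k+1}) = [\vec{\mu}_{i+1}, \vec{\mu}_l]$, which contradicts the adjacency $D\vec{\mu}_{i+1} = \vec{\mu}_i$. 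The disjointness of $(\vec{\mu}_{i+1}, \vec{\nu})$ and $(\vec{\mu}_i, \vec{\nu}_k)$ finally follows as in Lemma~\ref{lemma:stacking} from Prop.~\ref{prop:munuorbitLemma}: a common reachable state would have to exit through both endpoint pairs, which is incompatible with $U^*$ and $D^*$ being linearly ordered orbits rooted at $\vec{\omega}$ and $\vec{\alpha}$.
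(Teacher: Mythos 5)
Your proposal is correct and takes essentially the same approach as the paper: the paper's entire proof of Lemma~\ref{lem:stacking} is the remark that it ``follows from adapting the proof of Lemma~\ref{lemma:stacking}'', which is precisely the $U\leftrightarrow D$ dualization you carry out. Your explicit check that the involution maps the hypotheses and conclusions of Lemma~\ref{lemma:stacking} onto those of Lemma~\ref{lem:stacking}, together with the direct sketch, merely supplies the detail the paper omits.
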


We can now state our main result on the partition of a loop into sub loops.

\begin{theorem}[Standard partitioning of a loop into sub loops]\label{prop:mert}
  Let $(\vec{\mu},\vec{\nu})$ be the endpoints of a loop.
  The 
  $(\vec{\mu},\vec{\nu})$ loop can be decomposed into two, possibly three (adjacent) sub loops, $\ell_-,\ell_+$, and  
  $\ell_0$ by  the following {\em standard partition procedure}:
  
  Let the 
  inverse images of $\vec{\mu}$ and $\vec{\nu}$ under $\psi_{\pm}$ be 
  \begin{align}
   \psi_{-}^{-1}(\vec{\mu}) &= [\vec{\mu}, \vec{\nu}_i], \label{eqn:psimim}\\
   \psi_{+}^{-1}(\vec{\nu}) &= [\vec{\mu}_j, \vec{\nu}], \label{eqn:psiplm}
  \end{align}
  for some $i$ and $j$, and where some of the intervals might contain just one element. 
  Denote by $\ell_-$ and $\ell_+$ the sub loops $(\vec{\mu}, \vec{\nu}_i)$ and $(\vec{\mu}_j,\vec{\nu})$, respectively. 
  Consider the boundary state $\vec{\nu}_{i+1} = U\vec{\nu}_i$. 
  If $\vec{\nu}_{i+1} \prec_U \vec{\nu}$, 
  then there is a third sub loop $\ell_0$ with endpoints $(\vec{\mu_1}, \vec{\nu}_{n-1})$, where 
   $D\vec{\mu_1} = \vec{\mu}$ and $U\vec{\nu}_{n-1} = \vec{\nu}$.
   
  The sub loops $\ell_\pm$, and if present $\ell_0$, form a (disjoint) partition of the set of $(\vec{\mu},\vec{\nu})$-reachable 
  states $\mathcal{R}_{\vec{\nu}, \vec{\mu}}$.   Moreover, the process of repeatedly applying the standard partition procedure 
  to the resulting sub loops of $(\vec{\mu},\vec{\nu})$ has a natural representation as an ordered tree.
\end{theorem}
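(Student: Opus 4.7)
The plan is to verify in turn that each of $\ell_-,\ell_+,\ell_0$ is a well-defined loop, that the three sets $\mathcal{R}_{\ell_-},\mathcal{R}_{\ell_0},\mathcal{R}_{\ell_+}$ partition $\mathcal{R}_{\vec{\mu},\vec{\nu}}$, and finally to extract the ordered-tree structure from the recursion. I would begin with the existence of $\ell_-$ and $\ell_+$: by Prop.~\ref{prop:bondarymon}(i) we have $\psi_-(\vec{\mu})=\vec{\mu}$ and $\psi_+(\vec{\nu})=\vec{\nu}$, so the intervals in \eqref{eqn:psimim}--\eqref{eqn:psiplm} are non-empty and can be written $[\vec{\mu},\vec{\nu}_i]$ and $[\vec{\mu}_j,\vec{\nu}]$. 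Lemma~\ref{lem:invImages}(ii) immediately gives that $(\vec{\mu},\vec{\nu}_i)$ and $(\vec{\mu}_j,\vec{\nu})$ are sub loops of $(\vec{\mu},\vec{\nu})$.

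Next, assuming $\vec{\nu}_{i+1}\prec_U \vec{\nu}$, I would establish that $\ell_0=(\vec{\mu}_1,\vec{\nu}_{n-1})$ is a loop. Since $\vec{\mu}_1\in D^*\vec{\nu}$ lies strictly between $\vec{\mu}$ and $\vec{\nu}$ in $\preceq_D$, and $\vec{\nu}\in U^*\vec{\mu}$, Def.~\ref{def:RPM}(b) applied with $\vec{\upsilon}=\vec{\mu}_1$ yields $\vec{\nu}\in U^*\vec{\mu}_1$. Because $U$ is deterministic and $U\vec{\nu}_{n-1}=\vec{\nu}$, the immediate predecessor of $\vec{\nu}$ on $U^*\vec{\mu}_1$ coincides with that on $U^*\vec{\mu}$, so $\vec{\nu}_{n-1}\in U^*\vec{\mu}_1$. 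Symmetrically, Def.~\ref{def:RPM}(a) applied to the pair $\vec{\nu}_{n-1}\prec_U\vec{\nu}$ on $U^*\vec{\mu}$ (using $\vec{\mu}\in D^*\vec{\nu}$) gives $\vec{\mu}\in D^*\vec{\nu}_{n-1}$, and determinism of $D$ together with $D\vec{\mu}_1=\vec{\mu}$ forces $\vec{\mu}_1\in D^*\vec{\nu}_{n-1}$.

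The main obstacle is to prove that $\mathcal{R}_{\vec{\mu},\vec{\nu}}$ is the \emph{disjoint} union of $\mathcal{R}_{\ell_-}$, $\mathcal{R}_{\ell_0}$, $\mathcal{R}_{\ell_+}$. For pairwise disjointness, Lemma~\ref{lemma:stacking} applied at the $D$-boundary point $\vec{\mu}$ with $\psi_-^{-1}(\vec{\mu})=[\vec{\mu},\vec{\nu}_i]$ gives that $\mathcal{R}_{\ell_-}$ is disjoint from $\mathcal{R}_{(\vec{\mu}_1,\vec{\nu}_{i+1})}$, and since any $(\vec{\mu}_1,\vec{\nu}_{i+1})$-confined field-history avoids $\vec{\nu}_{i+1}$ and hence cannot reach past it to $\vec{\nu}_{n-1}$, it is also $(\vec{\mu}_1,\vec{\nu}_{n-1})$-confined, giving $\mathcal{R}_{(\vec{\mu}_1,\vec{\nu}_{i+1})}\subseteq\mathcal{R}_{\ell_0}$ and hence $\mathcal{R}_{\ell_-}\cap\mathcal{R}_{\ell_0}=\emptyset$. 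The dual Lemma~\ref{lem:stacking} handles $\mathcal{R}_{\ell_0}\cap\mathcal{R}_{\ell_+}=\emptyset$, and combining the two yields $\mathcal{R}_{\ell_-}\cap\mathcal{R}_{\ell_+}=\emptyset$. For coverage, I would take any $\vec{\sigma}\in\mathcal{R}_{\vec{\mu},\vec{\nu}}$ and use Prop.~\ref{prop:munuorbitLemma} to write $U^*\vec{\sigma}$ and $D^*\vec{\sigma}$, which by $\ell$RPM must meet the boundaries of $(\vec{\mu},\vec{\nu})$ at well-defined first-incidence points $\vec{\nu}_k$ and $\vec{\mu}_l$. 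A case analysis keyed to whether $k\leq i$ (forcing $\vec{\sigma}\in\mathcal{R}_{\ell_-}$ via $\psi_-(\vec{\nu}_k)=\vec{\mu}$), whether $l\geq j$ (forcing $\vec{\sigma}\in\mathcal{R}_{\ell_+}$ via $\psi_+(\vec{\mu}_l)=\vec{\nu}$), or else $\vec{\sigma}\in\mathcal{R}_{\ell_0}$, should exhaust $\mathcal{R}_{\vec{\mu},\vec{\nu}}$; verifying completeness of this case analysis---in particular ruling out states whose first-incidence pair $(\vec{\mu}_l,\vec{\nu}_k)$ would be incompatible with the maximal/minimal definitions of $\vec{\nu}_i$ and $\vec{\mu}_j$---is where I expect the bulk of the work.

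Finally, the ordered-tree representation follows recursively: applying the standard partition procedure to each of the children $\ell_-,\ell_0,\ell_+$ in turn produces further sub loops, and the natural left-to-right order $\ell_-<\ell_0<\ell_+$ endows the tree with an ordering. Termination is immediate because $\mathcal{R}_{\vec{\mu},\vec{\nu}}$ is finite and the partition property forces each child's reachable set to be a proper subset of its parent's (each endpoint of the parent lies in its reachable set but becomes a boundary state of only one child and is excluded from the interiors of the others). The recursion therefore bottoms out in singleton loops after finitely many steps, yielding the desired ordered tree.
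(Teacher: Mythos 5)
Your route is essentially the paper's (pre-image intervals of $\psi_\pm$, Lemma~\ref{lem:invImages}, the Adjacency Lemmas, then recursion), but two steps as written do not go through. The first is your justification that $\vec{\nu}_{n-1}\in U^*\vec{\mu}_1$. Determinism of $U$ gives forward uniqueness of orbits, not backward uniqueness: from $\vec{\nu}\in U^*\vec{\mu}_1$ you cannot conclude that this orbit arrives at $\vec{\nu}$ through the same predecessor as $U^*\vec{\mu}$ does, since several states may be mapped to $\vec{\nu}$ by $U$. What actually forces the two orbits to merge before $\vec{\nu}$ is the nesting property, Prop.~\ref{prop:nestinglemma}, in the sharpened form of Lemmas~\ref{lemma:stacking} and \ref{lem:stacking}: $U^*\vec{\mu}_1$ is first incident on the $U$-boundary of $(\vec{\mu},\vec{\nu})$ at $\psi_+(\vec{\mu}_1)=\vec{\nu}_{i+1}\prec_U\vec{\nu}$, and from there on it coincides with that boundary, hence passes through $\vec{\nu}_{n-1}$. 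This is exactly how the paper constructs $\ell_0$, via the pair of identities $\psi_+(\vec{\mu}_1)=\vec{\nu}_{i+1}$ and $\psi_-(\vec{\nu}_{n-1})=\vec{\mu}_1$.

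The second gap is a non sequitur in the disjointness argument: from $\mathcal{R}_{\vec{\mu},\vec{\nu}_i}\cap\mathcal{R}_{\vec{\mu}_1,\vec{\nu}_{i+1}}=\emptyset$ together with $\mathcal{R}_{\vec{\mu}_1,\vec{\nu}_{i+1}}\subseteq\mathcal{R}_{\ell_0}$ you cannot conclude $\mathcal{R}_{\ell_-}\cap\mathcal{R}_{\ell_0}=\emptyset$; disjointness from a subset says nothing about disjointness from the superset, so your inclusion points the wrong way. The paper instead invokes the disjointness conclusions of Lemmas~\ref{lemma:stacking} and \ref{lem:stacking} directly for the relevant pairs, and obtains $\ell_-\cap\ell_+=\emptyset$ by applying Lemma~\ref{lemma:stacking} once more to the loop $(\vec{\mu},\vec{\nu}_{n-1})=\ell_-\cup\ell_0$ against $\ell_+$; the underlying mechanism is Prop.~\ref{prop:munuorbitLemma} (a confined trajectory can only leave a sub loop through its endpoints) plus the strict monotonicity of $U$ and $D$, which forbids re-entry. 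Your coverage step is honestly flagged as incomplete, and the paper leaves it at a comparable level of detail, so I would not weigh that heavily; the remaining pieces (existence of $\ell_\pm$ from Prop.~\ref{prop:bondarymon}(i) and Lemma~\ref{lem:invImages}(ii), and the ordered-tree recursion terminating by finiteness) are correct and match the paper.
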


Fig.~\ref{fig:CanonicalPart}(a) and (b) depict the two partitioning possibilities described in 
Thm.~\ref{prop:mert}.

\begin{figure}[t!]
  \begin{center}
    \includegraphics[width = 3in]{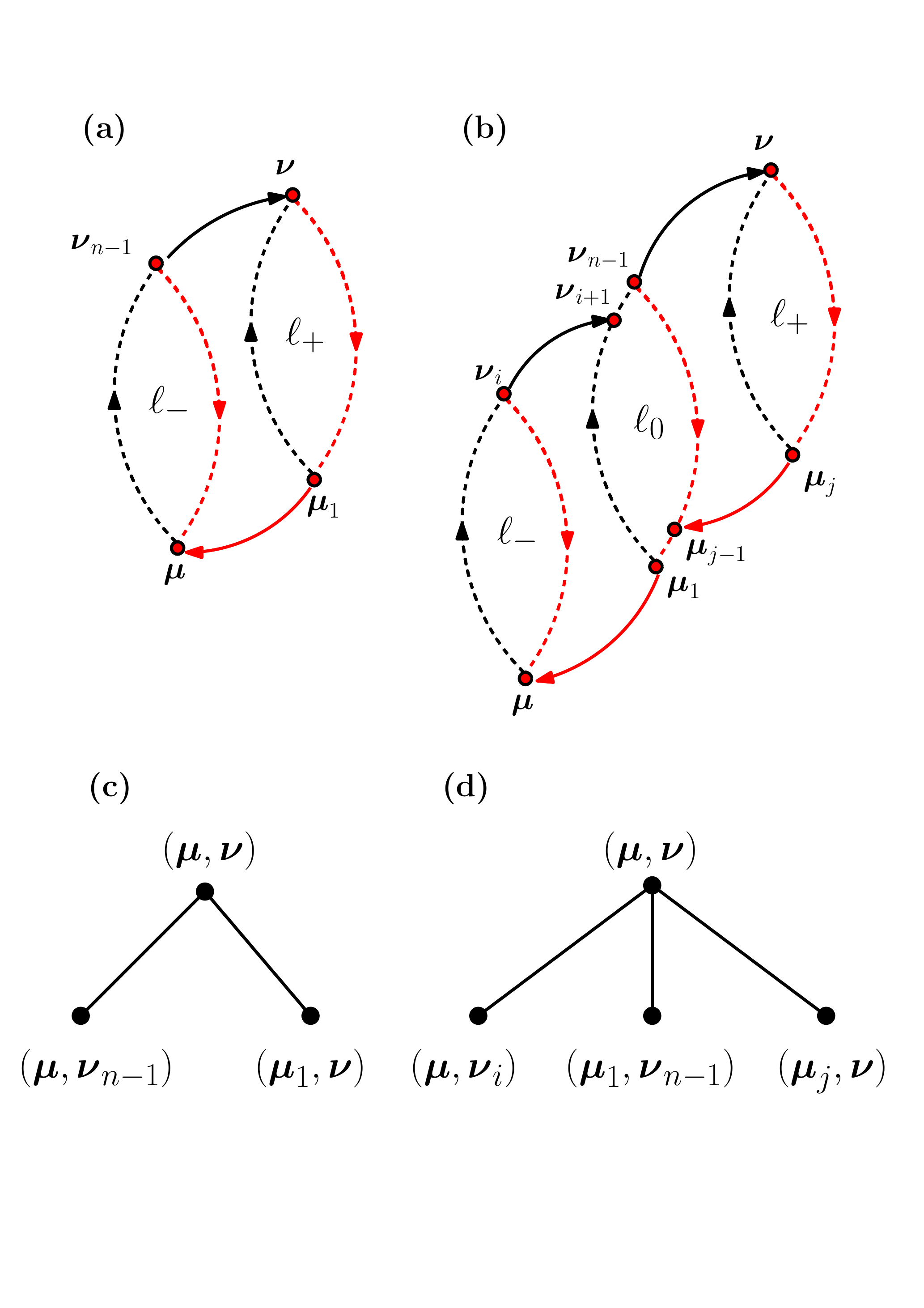} 
  \end{center}
  \caption{The two possibilities for the standard partitioning of a loop $(\vec{\mu}, \vec{\nu})$ into two (a) or three (b) 
  sub loops. The sub loops are marked as $\ell_-, \ell_+$ and $\ell_0$. Panels (c) and (d) depict the tree representation of 
  the partitioning of the parent loop $(\vec{\mu}, \vec{\nu})$ into two or three off-spring sub loops. Note that the ordering 
  of the sub-loops inside the main parent loop in (a) and (b) matters, thus the representations of the decomposition in (c) and (d) are 
  therefore ordered trees.
  } 
  \label{fig:CanonicalPart}
\end{figure}

 One or more of the sub loops in the partition may consist of a single point, a {\em singleton loop}. This can occur for example  
when one of the boundaries does not contain any intermediate states. In this case the partition produces two loops and one of these 
is a singleton. 

 If the boundary contains intersection points and we perform a standard partition, these intersection points will always end up in the middle loop 
 $\ell_0$, as is easily seen: by Lemma  \ref{lem:iminter} we have  $\psi^{-1}_\pm(\vec{\kappa}) = \{ \vec{\kappa}\}$, so if $\vec{\kappa}$ is an intersection point 
 of the loop $(\vec{\mu}, \vec{\nu})$ then $\vec{\kappa}$ is neither in $\psi^{-1}_-(\vec{\mu})$ nor in $\psi^{-1}_+(\vec{\nu})$. Hence it 
 must be in the middle loop. We thus have the following Corollary:
 
\begin{corollary}[Standard partition and intersection points] 
 Let $(\vec{\mu}, \vec{\nu})$ be the endpoints of a loop $\ell$. If the boundary of $\ell$ contains an intersection point then 
 the standard partitioning  must yield three loops. 
\end{corollary}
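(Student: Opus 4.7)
The plan is to use Lemma~\ref{lem:iminter} to pin down how an intersection point behaves under the boundary maps $\psi_\pm$, and then to read off directly from the partition rule in Theorem~\ref{prop:mert} that such a point rules out the two-loop case.

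Concretely, suppose $\vec{\kappa}$ is an intersection point on the boundary of the $(\vec{\mu},\vec{\nu})$-loop. By Lemma~\ref{lem:iminter}, $\psi_-(\vec{\kappa})=\vec{\kappa}=\psi_+(\vec{\kappa})$. Being an intermediate boundary state, $\vec{\kappa}\neq\vec{\mu}$ and $\vec{\kappa}\neq\vec{\nu}$, so $\psi_-(\vec{\kappa})\neq\vec{\mu}$ and $\psi_+(\vec{\kappa})\neq\vec{\nu}$. Consequently $\vec{\kappa}\notin\psi_-^{-1}(\vec{\mu})$ and $\vec{\kappa}\notin\psi_+^{-1}(\vec{\nu})$; that is, $\vec{\kappa}$ lies in neither the $U$-boundary of $\ell_-$ nor the $D$-boundary of $\ell_+$.

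Next, I would write $\psi_-^{-1}(\vec{\mu})=[\vec{\mu},\vec{\nu}_i]$ as in the hypothesis of Theorem~\ref{prop:mert}. Since $\vec{\kappa}$ is an intersection point it lies on the $U$-boundary, hence $\vec{\kappa}=\vec{\nu}_m$ for some index $m$. The exclusion $\vec{\kappa}\notin[\vec{\mu},\vec{\nu}_i]$ then forces $\vec{\nu}_i\prec_U\vec{\nu}_m=\vec{\kappa}$, so $\vec{\nu}_{i+1}=U\vec{\nu}_i\preceq_U\vec{\kappa}$. Because $\vec{\kappa}$ is intermediate we also have $\vec{\kappa}\prec_U\vec{\nu}$, and combining these inequalities gives $\vec{\nu}_{i+1}\prec_U\vec{\nu}$. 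This is precisely the criterion identified in Theorem~\ref{prop:mert} for the middle sub loop $\ell_0$ to appear, so the standard partition must yield three sub loops, as claimed.

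The only delicate point is bookkeeping: an intersection point $\vec{\kappa}$ sits on both boundaries simultaneously, and one must test it against $\psi_-^{-1}(\vec{\mu})$ in its role as a $U$-boundary state (and, symmetrically, against $\psi_+^{-1}(\vec{\nu})$ as a $D$-boundary state). Once Lemma~\ref{lem:iminter} is in hand there is no further obstacle, and the corollary follows immediately from the combinatorics of the standard partition.
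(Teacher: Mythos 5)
Your argument is correct and follows essentially the same route as the paper's: both use Lemma~\ref{lem:iminter} to show that an intersection point $\vec{\kappa}$ satisfies $\psi_\pm(\vec{\kappa})=\vec{\kappa}$ and hence lies in neither $\psi^{-1}_-(\vec{\mu})$ nor $\psi^{-1}_+(\vec{\nu})$, forcing the middle loop $\ell_0$ to be present. Your extra step of converting the exclusion into the explicit criterion $\vec{\nu}_{i+1}\prec_U\vec{\nu}$ of Theorem~\ref{prop:mert} is a harmless elaboration of the same idea.
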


The important feature of Thm.~\ref{prop:mert} is that the partitioning can be done unambiguously and 
hence is well-defined. 
Each of these sub loops can in turn be  
partitioned {\em etc.} In this way a loop can be broken down into singleton states/loops\footnote{We allow decomposing a
$2$-loop or a $3$-loop into two, respectively three, singleton loops.}.  
The relation between the parent loop and its off-spring loops resulting from the partitioning is shown 
in Fig.~\ref{fig:CanonicalPart} (c) and (d). Note that the off-springs nodes are ordered: the left and right nodes 
correspond to the loops $\ell_-$ and $\ell_+$, respectively.
 \begin{figure}[t!]
  \begin{center}
    \includegraphics[width = 0.6\textwidth]{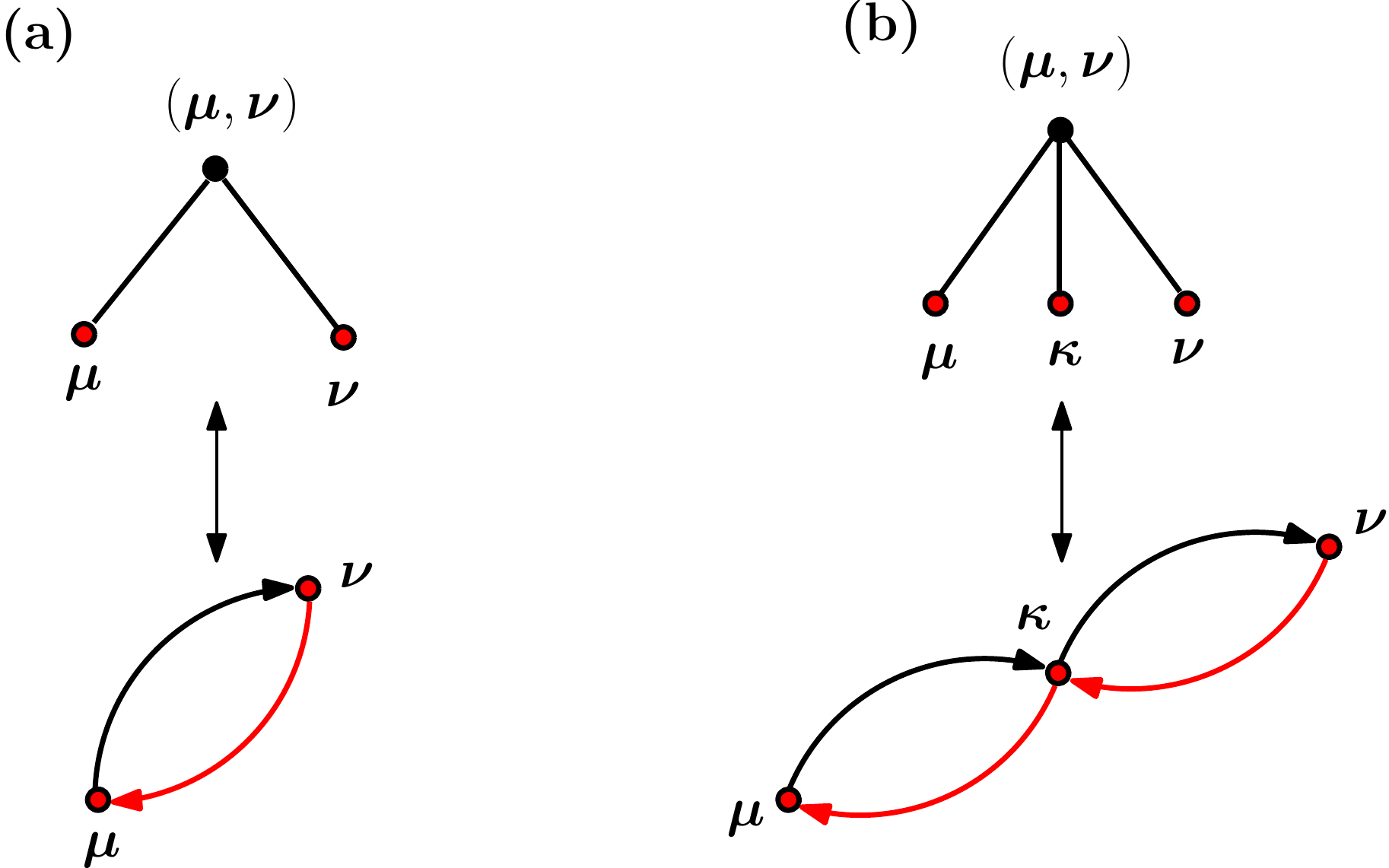} 
  \end{center}
  \caption{(a) and (b) the two possibilities for the leaf-nodes and their parent loop node $(\vec{\mu}, \vec{\nu})$ in the tree representation of the standard loop partitioning. (a) The 
  loop node having two leaf-offsprings corresponds to the  2-loop, while the other possibility of three leaf-offspring, shown in (b), corresponds to the  $3$-loops, 
  containing in addition the intersection point $\vec{\kappa}$.
  } 
  \label{fig:LoopComp}
\end{figure}
The leaves of the decomposition tree are the nodes that do not have any offsprings. These are the singleton loops. Since the standard partitioning is also 
a partition of the set of states $\mathcal{R}_{\vec{\mu},\vec{\nu}}$ associated with the parent loop $(\vec{\mu},\vec{\nu})$, 
the set of leave nodes is identical to $\mathcal{R}_{\vec{\mu}, \vec{\nu}}$. 
 The leaves of the 
 tree have either a $2$- or $3$-loop as their parent, as shown in Fig~\ref{fig:LoopComp} (a) and (b). In terms of the loop order $<$ of 
 Def.~\ref{def:dynPO}, these leaf states obey the order relations $\vec{\mu} < \vec{\nu}$ and  $\vec{\mu} < \vec{\kappa} < \vec{\nu}$, 
 respectively. 

 The decomposition tree of the reachability graph of Fig.~\ref{fig:reachabilitygraph} is 
given in Fig.~\ref{fig:ReachTree}. The intermediate loops are shown as small black circles with the end points of the corresponding loops 
indicated next to them. The numbering of states corresponds to that of Fig.~\ref{fig:reachabilitygraph}. The leaf nodes are the states of the 
graph and are shown in blue. Note that on the decomposition tree a path from a  singleton loop $\vec{\sigma}$ 
to the root is a sequence of expanding loops, one contained inside the next. This picture is useful when we want to understand 
the response of an initial state to periodic forcing, Section \ref{sec:PF}.

\begin{figure}[t!]
  \begin{center}
    \includegraphics[width = 0.8\textwidth]{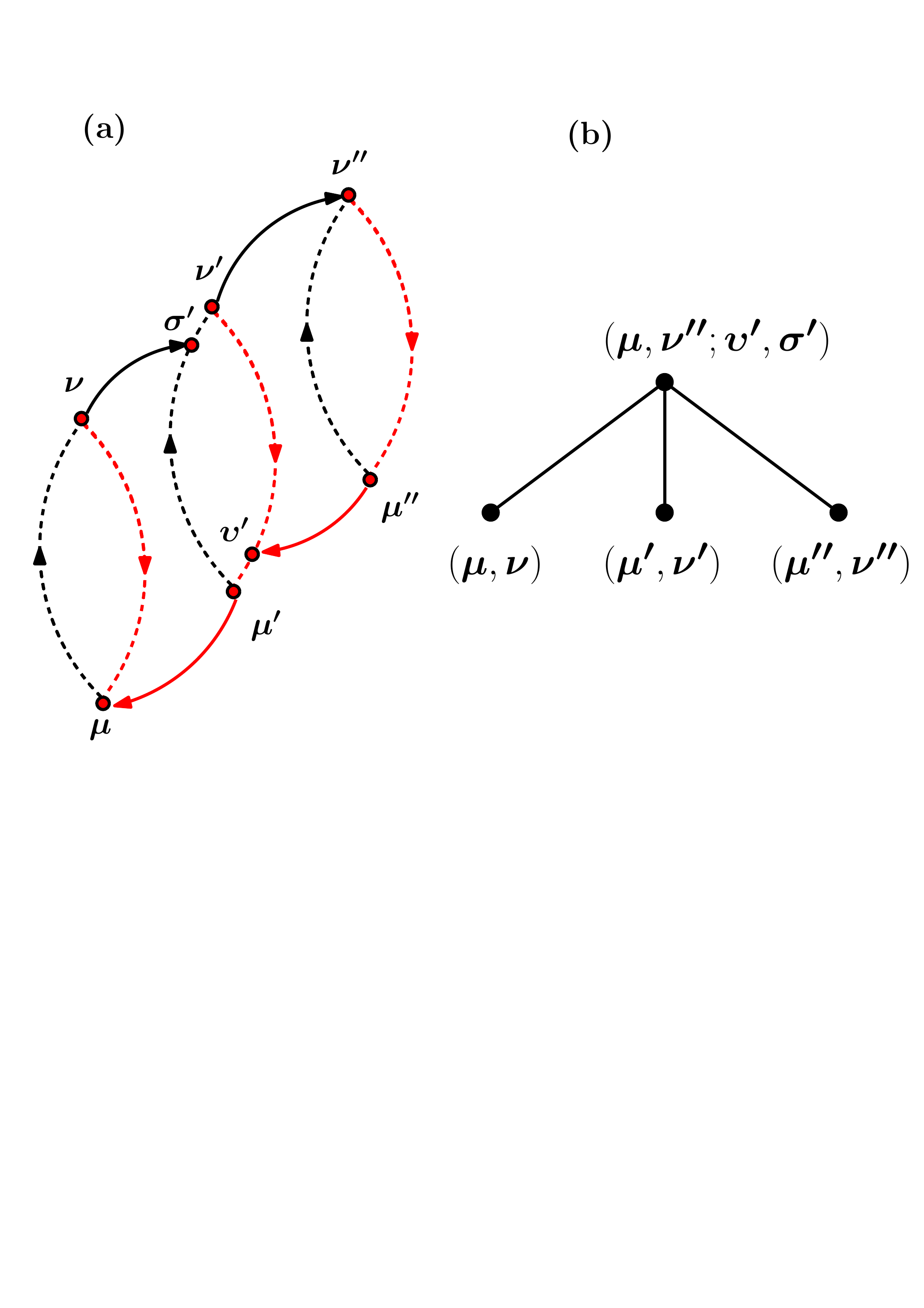} 
  \end{center}
  \caption{Glue-points in the case the standard partition results in three offspring loops. The loop $(\vec{\mu},\vec{\nu''})$ 
  is partitioned into the three sub loops $(\vec{\mu}, \vec{\nu})$, $(\vec{\mu'}, \vec{\nu'})$, and $(\vec{\mu''}, \vec{\nu''})$, 
  panel (a). Panel (b) shows the tree representation of this partition. The reverse procedure of reassembling the parent 
  loop from the three offspring loops is not unique. We need to specify also where the upper end point of  $(\vec{\mu}, \vec{\nu})$
  and the lower endpoint of $(\vec{\mu''}, \vec{\nu''})$ are connected to the corresponding boundaries of the middle 
  loop $(\vec{\mu'}, \vec{\nu'})$. These are the boundary points $\vec{\sigma'}$ and $\vec{\upsilon'}$. In this example  
  $\vec{\sigma'}$ and $\vec{\upsilon'}$ are intermediate boundary points of the loop $(\vec{\mu'}, \vec{\nu'})$ and we 
  will refer to them as glue points. Note that the glue points are also boundary nodes of the parent loop $(\vec{\mu},\vec{\nu''})$.
  In the tree decomposition we indicate the glue points in the label of the parent loop, as shown in (b).
  } 
  \label{fig:CanonicalDecomp}
\end{figure}

As is already apparent from Fig.~\ref{fig:CanonicalPart}, if the state transition graphs associated with the off-spring loops are planar, 
so is the parent loop. In fact, we have 

\begin{theorem}[Planarity of state transition graph associated with a loop] 
\label{thm:planar}
Let the pair $(\vec{\mu}, \vec{\nu})$ form a loop. The state transition graph associated with the set of nodes 
$\mathcal{R}_{\vec{\mu}, \vec{\nu}}$ is planar. 
\end{theorem}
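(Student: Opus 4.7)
The plan is to prove Thm.~\ref{thm:planar} by structural induction on the ordered decomposition tree provided by the standard partitioning of Thm.~\ref{prop:mert}. At the leaves sit singleton loops, whose state transition graph on $\mathcal{R}_{\vec{\mu}, \vec{\nu}}$ is a single vertex with no internal edges and is trivially planar. At each internal node the planar embedding of the parent loop will be assembled from the planar embeddings of its (two or three) offspring sub loops.

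To make the stacking work, I strengthen the inductive hypothesis: every loop $(\vec{\mu}, \vec{\nu})$ admits a planar embedding whose outer face is bordered precisely by its $U$- and $D$-boundaries, with the endpoints $\vec{\mu}$ and $\vec{\nu}$ serving as the two poles at which the boundaries meet (plus additional pinches at intersection points, which by Lemma \ref{lem:iminter} sit on both boundaries). I will call this the \emph{strip form} of the embedding. The base case satisfies it trivially.

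For the inductive step, suppose $(\vec{\mu}, \vec{\nu})$ is standard-partitioned into $\ell_-$, possibly $\ell_0$, and $\ell_+$. By Thm.~\ref{prop:mert} these sub loops partition $\mathcal{R}_{\vec{\mu}, \vec{\nu}}$, so the edges of the parent graph are the internal edges of the sub loop graphs together with a small set of connecting edges; by Lemmas \ref{lemma:stacking} and \ref{lem:stacking} each such connecting edge joins a boundary state of one sub loop to a boundary state of an adjacent sub loop (to an endpoint of the neighbor or to a glue point of $\ell_0$). I would then stack the strip-form embeddings of $\ell_-, \ell_0, \ell_+$ vertically in the order dictated by $<$, aligning their $U$-boundaries along one outer side of the stack and their $D$-boundaries along the other. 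The connecting $U$- and $D$-edges can be drawn along the two outer sides without crossings, since their attachment vertices lie on the outer face by the strengthened hypothesis. The combined drawing has its new $U$- and $D$-boundaries on the outer face with $\vec{\mu}$ and $\vec{\nu}$ as the poles, so the strengthened invariant is preserved and the induction closes.

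The main obstacle is the glue-point configuration in the three sub loop case: the upper endpoint of $\ell_-$ maps under $U$ into an intermediate boundary state of $\ell_0$ rather than into an endpoint, and dually for the $D$-connection of $\ell_+$. Verifying planarity here reduces to checking that these glue points lie on the outer face of $\ell_0$'s strip embedding, which holds because glue points are $\ell_0$-boundary states and the inductive hypothesis places the entire $\ell_0$-boundary on its outer face. Intersection points of $(\vec{\mu}, \vec{\nu})$, which by the Corollary following Thm.~\ref{prop:mert} all reside in $\ell_0$, simply pinch the outer face of $\ell_0$'s embedding at their common vertex; these pinches persist in the combined embedding but do not affect planarity.
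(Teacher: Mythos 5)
Your proof is correct and follows essentially the same route as the paper's: induction over the ordered decomposition tree produced by the standard partition of Thm.~\ref{prop:mert}, with singleton ($2$- and $3$-) loops as the base case. Your strengthened invariant --- that each sub loop admits an embedding with its $U$- and $D$-boundaries on the outer face, so that the few connecting (glue-point) edges can be routed without crossings --- simply makes explicit the step the paper delegates to Figs.~\ref{fig:CanonicalPart} and \ref{fig:LoopComp} and to the Adjacency Lemmas.
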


Note that the planarity of the state transition graph is actually a direct consequence of the Adjacency Lemmas \ref{lemma:stacking} and \ref{lem:stacking}. 
This can be seen from Fig.~\ref{fig:StackingProof}, illustrating the adjacency property. The pair of loops $(\vec{\mu}, \vec{\nu}_k)$ 
and $(\vec{\mu}_{i+1},\vec{\nu})$ furnishes a partition of the loop $(\vec{\mu}, \vec{\nu})$ into two components. A reasoning similar to that of the proof 
of Thm.~\ref{thm:planar} then implies planarity of $(\vec{\mu}, \vec{\nu})$, 

We conclude this section with remarks on the reverse procedure of assembling a loop from its sub loops by following the decomposition tree and 
the possibility of inferring the maps $U$ and $D$ from it. 
The reverse procedure is ambiguous when the partition had given rise to three offspring loops.
This can be seen in Fig.~\ref{fig:CanonicalPart} (b). The 
choice of the ``glue points'', the states $\vec{\nu}_{i+1}$ and $\vec{\mu}_{j-1}$ on the $U$- and $D$-boundaries of the middle 
loop $\ell_0$, is not unique. In order to be able to reconstruct the original loop from its sub loops, we 
therefore also have to specify the points where the corresponding 
endpoints $\vec{\nu}_i$ and $\vec{\mu}_j$ of $\ell_-$ and $\ell_+$ are  ``glued'' to the boundary of the  middle loop. This 
ambiguity does not arise when the decomposition yields two loops, Fig.~\ref{fig:CanonicalPart} (a). In this case 
endpoints are glued to endpoints. We will only explicitly indicate the glue-points when at least on of them is not the endpoint of the parent loop. The labeling convention for glue points is depicted in Fig.~\ref{fig:CanonicalDecomp} (a) and (b). 

\begin{figure}[t!]
  \begin{center}
    \includegraphics[width = 3.5in]{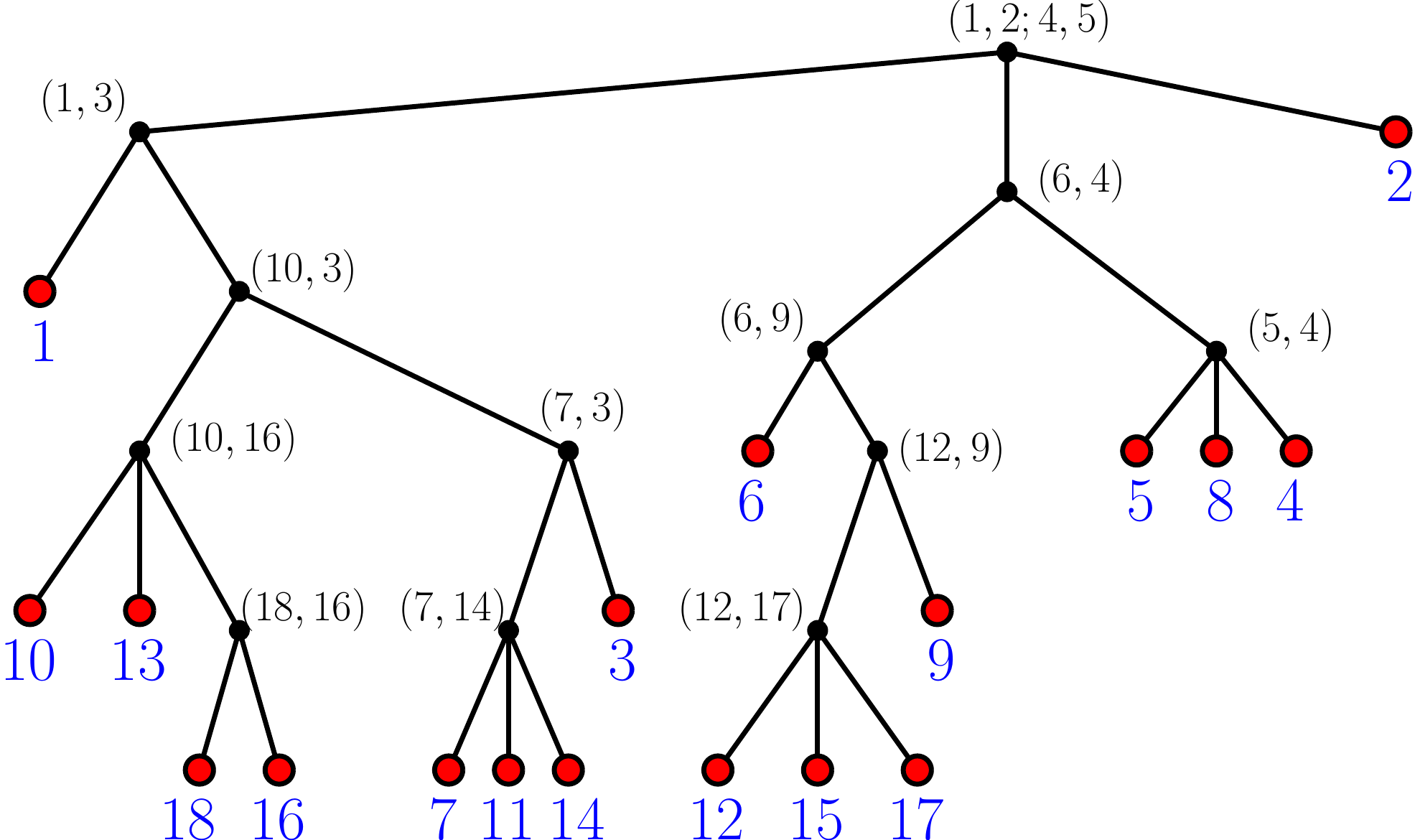} 
  \end{center}
  \caption{Tree representation of the standard partitioning of the state transition graph of Fig.~\ref{fig:reachabilitygraph}. 
  The numbers next to the nodes correspond to the labeling of states in Fig.~\ref{fig:reachabilitygraph}.
  The root of the tree is the major loop $(1,2)$ of the transition graph.  
  The standard partitioning, Thm.~\ref{prop:mert}, breaks down a loop into two or three sub loops. 
  In the tree representation, these  are the off-spring sub loops, shown as small black circles, 
  and the endpoints of these loops are indicated next to the nodes of the tree.
  The leaves of the tree (blue circles) correspond to the singleton loops and 
  at the same time they are the states of the transition graph. 
  Leaf nodes that are glue points are indicated in the labeling of the parent loop where they are used, 
  {\em e.g.} the loop $(1,2;4,5)$ in the figure. We adopt the convention to specify only glue-points that are not end-points. 
  } 
  \label{fig:ReachTree}
\end{figure}

\begin{example}[Non-trivial glue points] 
The state-transition graph of Fig.~\ref{fig:reachabilitygraph} contains one non-trivial glue point, meaning 
that all other connections between offspring loops forming a parent loop are endpoint to endpoint. The standard partitioning decomposes the 
$(1,2)$-loop into the  
three sub loops $(1,3)$, $(6,4)$ and $(2,2)$.
The state $5$ is a glue point at which the $(2,2)$ loop is connected to the $D$-boundary of the $(6,4)$ loop. Alternatively, the 
lower endpoint of the $(2,2)$ loop could also have been connected to the points, $4$, $8$, or $6$, {\em i.e.} anywhere else on the $D$-boundary. 
In the reachability graph of Fig.~\ref{fig:reachabilitygraph} the endpoint of the loop $(1,3)$, is connected to $4$, the upper endpoint 
of the $(6,4)$ loop. Fig.~\ref{fig:ReachTree} depicts the tree representation of the standard partition of this loop. Since the partition of the 
loop $(1,2)$ involves the non-trivial glue point $5$, this is indicated in the label next to that loop as $(1,2;4,5)$.  
$\maltese$
\end{example}

One can think of the standard partition procedure as a systematic way of removing $U$ and $D$ transitions among pairs of states so that the 
resulting state transition graph consists of   
two or three disconnected components. In graph theory the edges whose removal gives rise to disconnected components are called cut-edges. Continuing in 
this manner, once all the transitions have been removed, we are left with singleton loops, the states of the state transition graph. To illustrate this, 
consider the two possibilities for the standard partitioning of a loop, represented in Fig.~\ref{fig:CanonicalPart} (a) and (b) and 
with their corresponding tree representations in (c) and (d), respectively. 
For the partitioning resulting in two sub loops,  Fig.~\ref{fig:CanonicalPart} (a), the partitioning in effect removes the two transitions
\[
U\vec{\nu}_{n-1} = \vec{\nu} \quad \mbox{and} \quad D\vec{\mu}_1 = \vec{\mu}. 
\]
Likewise, for a partitioning resulting in three loops, as shown in Fig.~\ref{fig:CanonicalDecomp},  the four transitions 
\[
U\vec{\nu} = \vec{\sigma'}, \quad D\vec{\mu'} = \vec{\mu}, \quad U\vec{\nu'} = \vec{\nu''}, \quad \mbox{and} \quad D\vec{\mu''} = \vec{\upsilon'} 
\]
are removed. As explained before, out of the latter 
four, the first and fourth require additionally the specification of glue points, $\vec{\sigma'}$ and $\vec{\upsilon'}$. All other transitions are 
between endpoints of the constituent loops. This indicates that the (decorated) tree representation of the standard partitioning also encodes all the 
transitions between states under $U$ and $D$. This encoding can be inferred easily from comparing the corresponding panels in each of the columns  
of Fig.~\ref{fig:CanonicalPart}. 
The reachability graph associated with a loop $(\vec{\mu}, \vec{\nu})$ and its associated decorated decomposition tree 
are two complementary ways of representing the reachable states $\mathcal{R}_{\vec{\mu}, \vec{\nu}}$ and the transitions between them under 
$U$ and $D$.

\section{Maximal loops and inter-loop structure}
\label{sec:maxloopetc}

In the previous two sections our focus has been on what happens inside a loop.  We have shown that a loop $(\vec{\mu}, \vec{\nu})$ can be decomposed 
into a hierarchical structure of sub loops and that this hierarchy has a natural representation as an ordered tree. In this section we are interested in 
what happens ``outside'' of loops at the {\em inter-loop} level. The natural thing to ask first is whether the 
maps $U$ and $D$ are such that  $(\vec{\mu}, \vec{\nu})$ is a sub loop  of another loop. 

Suppose that $(\vec{\mu}, \vec{\nu})$ forms a loop so that $\mathcal{R}_{\vec{\mu}, \vec{\nu}}$ is the set of  $(\vec{\mu}, \vec{\nu})$-reachable states, 
as defined in Def. 
\ref{def:munureachable}. We can ask 
for the largest loop $(\vec{\mu'}, \vec{\nu'})$-loop such that $\mathcal{R}_{\vec{\mu}, \vec{\nu}} \subset \mathcal{R}_{\vec{\mu'}, \vec{\nu'}}$. 
We will call such a loop a {\em maximal loop}. Below is an algorithm to obtain the maximum loop containing a given loop $(\vec{\mu}, \vec{\nu})$.

\begin{algorithm}\label{alg:maxloop}

 Let the current loop be $(\vec{\tilde{\mu}}, \vec{\tilde{\nu}}) = (\vec{\mu}, \vec{\nu})$. 
 \begin{itemize}
  \item [(A1)] Determine the largest  state (with respect to $\prec_U$) on the orbit $U^*\vec{\tilde{\nu}}$ whose $D$-orbit returns to the $D$-boundary 
  of the current loop:  
     \[
    \vec{\nu'} = {\max}_U \{ \vec{\sigma} \in U^*\vec{\tilde{\nu}} : (\vec{\tilde{\mu}}, \vec{\sigma}) \mbox{ forms a loop}\}.
   \]
   If $\vec{\tilde{\nu}} \prec_U \vec{\nu'}$, make the update  $ \vec{\nu'} \to \vec{\tilde{\nu}}$.
   \item [(A2)] Determine the smallest state $\vec{\mu'}$ (with respect to $\prec_D$) on the orbit $D^*\vec{\tilde{\mu}}$ whose $U$-orbit 
   returns to the $U$-boundary of the current loop:
   \[
    \vec{\mu'} = {\min}_D \{ \vec{\upsilon} \in D^*\vec{\tilde{\mu}} : (\vec{\upsilon}, \vec{\tilde{\nu}}) \mbox{ forms a loop}\}.
   \]
   If $\vec{\mu'} \prec_D \vec{\tilde{\mu}} $, make the update  $ \vec{\mu'} \to \vec{\tilde{\mu}}$.
   \item[(A3)] Repeat steps (A1) and (A2) 
   until both steps yield no further update of the endpoints. 
   The loop $(\vec{\tilde{\mu}}, \vec{\tilde{\nu}})$ obtained in this 
   way is the maximal loop containing  the initial loop $(\vec{\mu}, \vec{\nu})$. 
   
 \end{itemize}

\end{algorithm}

Note that given a loop $(\vec{\mu}, \vec{\nu})$, except possibly for the endpoints, for all other states in 
$\mathcal{R}_{\vec{\mu}, \vec{\nu}}$ the 
transitions under $U$ and $D$ lead again to states in $\mathcal{R}_{\vec{\mu}, \vec{\nu}}$. 
Moreover, by Prop.~\ref{prop:munuorbitLemma} all these states leave the loop via its endpoints, unless the endpoint is an absorbing state. 
Thus a loop can communicate with the states of a larger loop only along states on the orbits $U^*\vec{\nu}$ or $D^*\vec{\mu}$. Step (A1) searches for 
the $U$-largest state $\vec{\nu'}$ on the orbit $U^*\vec{\tilde{\nu}}$ whose orbit $D^*\vec{\nu'}$ returns to $(\vec{\tilde{\mu}}, \vec{\tilde{\nu}})$. 
If this happens, the nesting property Prop.~\ref{prop:nestinglemma}, as illustrated in Fig.~\ref{fig:Nesting} (b) and (c), asserts that 
$D^*\vec{\nu'}$ must intersect the $D$-boundary of the loop $(\vec{\tilde{\mu}}, \vec{\tilde{\nu}})$ and therefore contain its lower endpoint 
$\vec{\tilde{\mu}}$. Hence $(\vec{\tilde{\mu}}, \vec{\tilde{\nu'}})$
forms a loop and moreover, $\mathcal{R}_{\vec{\tilde{\mu}}, \vec{\tilde{\nu}}} \subset \mathcal{R}_{\vec{\tilde{\mu}}, \vec{\tilde{\nu'}}}$. 
Step (A2) does the same by attempting to extend the lower endpoint of the loop.

The reasoning leading to the standard partition procedure of Theorem \ref{prop:mert} implies that the growth of a given loop via steps (A1) or (A2) of the 
algorithm can be actually viewed as merger of two loops. This is illustrated in Fig.~\ref{fig:LoopCompI}(a). Starting with the loop $(\vec{\mu}, \vec{\nu})$, 
suppose step (A1) identifies the state $\vec{\nu'}$ as the $U$-largest state whose $D$-orbit contains $\vec{\mu}$. Thus there must exists a $D$-lowest state in 
$D^*\vec{\nu'}$ that is not yet part of the $D$-boundary of $(\vec{\mu}, \vec{\nu})$ but whose successor under $D$ is. In the figure this is the state $\vec{\mu'}$.  
By the $\ell$RPM property applied to 
the loop $(\vec{\mu}, \vec{\nu'})$, the pair $(\vec{\mu'}, \vec{\nu'})$ must also form a loop and step (A1) in effect merges the loops $(\vec{\mu}, \vec{\nu})$ and 
$(\vec{\mu'}, \vec{\nu'})$ to form the loop $(\vec{\mu}, \vec{\nu'})$. 

\begin{figure}[t!]
  \begin{center}
    \includegraphics[width = 3.2in]{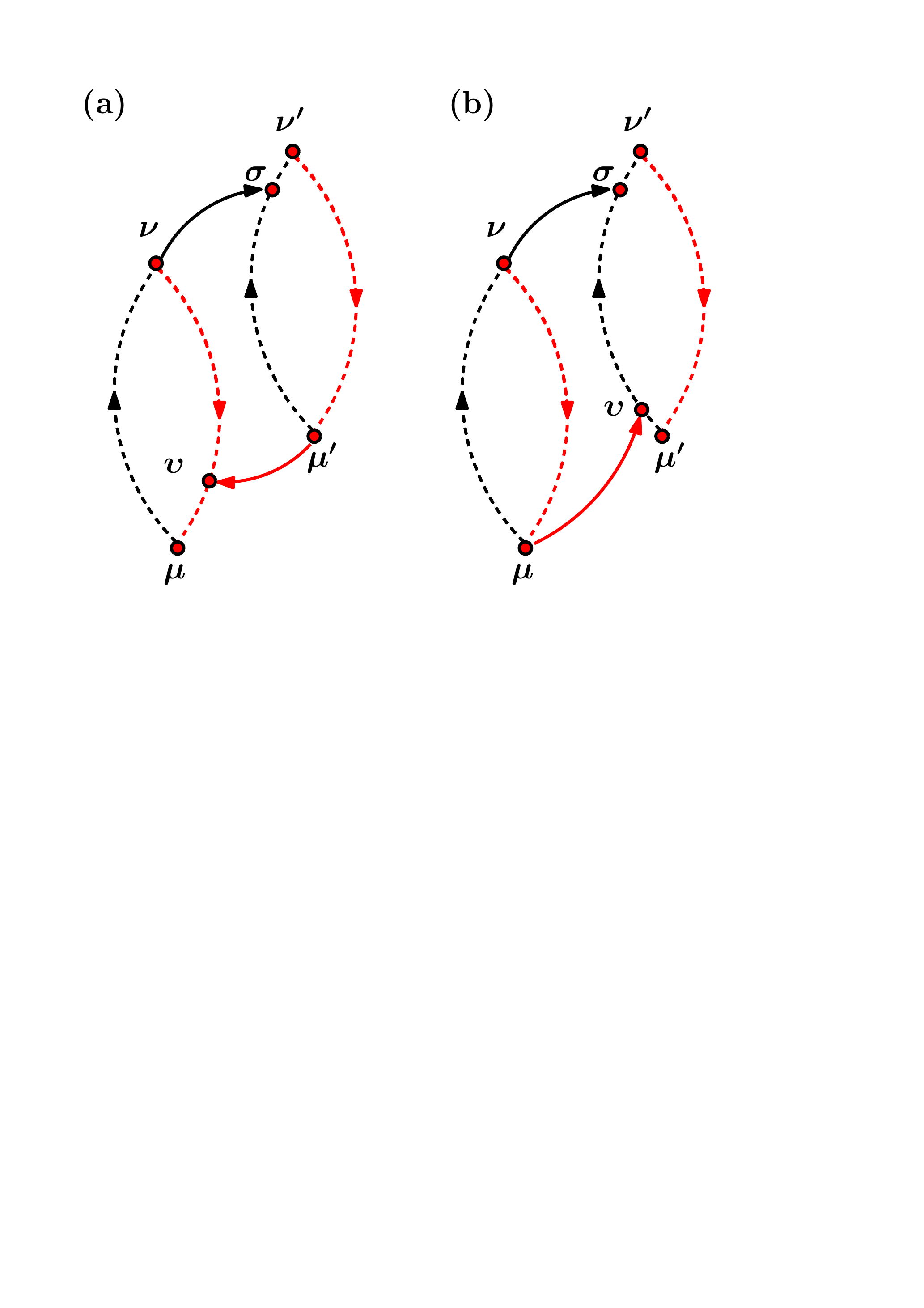} 
  \end{center}
  \caption{Possible ways of combining two loops .
  The boundary 
  nodes $\vec{\sigma}$ and $\vec{\upsilon}$ are glue points. (a) By gluing the endpoints to the opposite boundaries of the partner loop, we 
  obtain the augmented loop $(\vec{\mu}, \vec{\nu'})$. All states in this loop are mutually reachable and 
  $\mathcal{R}_{\vec{\mu}, \vec{\nu}} \subset \mathcal{R}_{\vec{\mu}, \vec{\nu'}}$. (b) By connecting both endpoints of one 
  loop to one of the boundaries of the partner loop, say the $U$-boundary of $(\vec{\mu'},\vec{\nu'})$, as depicted in the figure, mutual 
  reachability is lost. There are field histories that lead from the states in $\mathcal{R}_{\vec{\mu}\vec{\nu}}$ to those in $\mathcal{R}_{\vec{\mu'},\vec{\nu'}}$, 
  but the reverse is not true. States in $\mathcal{R}_{\vec{\mu},\vec{\nu}}$ are not reachable from $\mathcal{R}_{\vec{\mu'},\vec{\nu'}}$.
  } 
  \label{fig:LoopCompI}
\end{figure}

An example for the combination of two loops $(\vec{\mu}, \vec{\nu})$ and $(\vec{\mu'}, \vec{\nu'})$ where mutual reachability is not retained and consequently a maximal 
loop cannot be formed, is given in Fig.~\ref{fig:LoopCompI}(b). 
The states of $\mathcal{R}_{\vec{\mu}, \vec{\nu}}$ are not $(\vec{\mu'}, \vec{\nu'})$-reachable, since there is no field-history starting from $\vec{\mu'}$ or $\vec{\nu'}$, 
leading to a state in $\mathcal{R}_{\vec{\mu}, \vec{\nu}}$. Neither has the connection of the two loops resulted in a larger loop containing both as sub loops\footnote{ 
The state $\vec{\mu}$ cannot lie on the orbit $D^*\vec{\mu'}$, since we have $ \vec{\upsilon} \prec_D \vec{\mu}$, and from the $\ell$RPM property applied to 
the loop $(\vec{\mu'}, \vec{\nu'})$ we also have $\vec{\mu'} \prec_D \vec{\upsilon}$, so that $\vec{\mu'} \prec_D \vec{\mu}$.}. 
These considerations lead to the following definition of maximal loops: 

\begin{definition}[Maximal loops]\label{def:maxLoop}
 A loop $(\vec{\mu}, \vec{\nu})$ is said to be a maximal loop if the following two conditions hold:
 \begin{itemize}
  \item[(a)] $\forall \vec{\nu'} \in \mathcal{S}$, $\vec{\nu} \prec_U \vec{\nu'} \, \Rightarrow \, D^*\vec{\nu'} \cap \mathcal{R}_{\vec{\mu}, \vec{\nu}} = \emptyset$  
  \item[(b)] $\forall \vec{\mu'} \in \mathcal{S}$, $\vec{\mu'} \prec_D \vec{\mu} \, \Rightarrow \, U^*\vec{\mu'} \cap \mathcal{R}_{\vec{\mu}, \vec{\nu}} = \emptyset$  
 \end{itemize}
\end{definition}

In words, once a maximal loop has been left as a result of a monotonous field change, a subsequent field reversal will 
not bring the system back to a state of the loop. The following proposition asserts that Algorithm \ref{alg:maxloop} indeed terminates in a maximal loop.

\begin{proposition}[Termination of Algorithm  \ref{alg:maxloop} in a maximal loop]
\label{prop:algTermination}
 Let $(\vec{\mu'}, \vec{\nu'})$ be a maximal loop so that Def.~\ref{def:maxLoop} holds and consider a sub loop $(\vec{\mu}, \vec{\nu})$ 
 with $\vec{\mu}, \vec{\nu} \in \mathcal{R}_{\vec{\mu'}, \vec{\nu'}}$.
 Then Algorithm \ref{alg:maxloop} initialized with $(\vec{\mu}, \vec{\nu})$ will 
 reach 
 $(\vec{\mu'}, \vec{\nu'})$ and terminate. Moreover, Algorithm \ref{alg:maxloop} will lead to the same maximal loop $(\vec{\mu'}, \vec{\nu'})$ irrespective of the order of steps $(A1)$ and $(A2)$.
\end{proposition}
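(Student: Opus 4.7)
My plan is to split the proof into four logical steps: termination, containment inside the prescribed maximal loop, maximality of the terminal pair, and identification with $(\vec{\mu'},\vec{\nu'})$. Termination is immediate: each successful (A1) update replaces $\vec{\tilde{\nu}}$ by a strictly $\prec_U$-larger state on the finite orbit $U^*\vec{\tilde{\nu}}$, and (A2) strictly $\prec_D$-lowers $\vec{\tilde{\mu}}$ along $D^*\vec{\tilde{\mu}}$. Since $\mathcal{S}$ is finite, only finitely many such updates can occur.

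For the containment I proceed by induction on the number of updates. The base case is the hypothesis $\vec{\mu}, \vec{\nu} \in \mathcal{R}_{\vec{\mu'},\vec{\nu'}}$. For the inductive step, suppose (A1) proposes a new upper endpoint $\vec{\xi} \in U^*\vec{\tilde{\nu}}$. If one had $\vec{\nu'} \prec_U \vec{\xi}$, then Def.~\ref{def:maxLoop}(a) applied to $(\vec{\mu'},\vec{\nu'})$ would force $D^*\vec{\xi} \cap \mathcal{R}_{\vec{\mu'},\vec{\nu'}} = \emptyset$; but by construction $\vec{\tilde{\mu}} \in D^*\vec{\xi}$ (because $(\vec{\tilde{\mu}}, \vec{\xi})$ forms a loop) and $\vec{\tilde{\mu}} \in \mathcal{R}_{\vec{\mu'},\vec{\nu'}}$ by the inductive hypothesis, a contradiction. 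Hence $\vec{\xi} \preceq_U \vec{\nu'}$, and concatenating a $(\vec{\mu'},\vec{\nu'})$-confined history ending at $\vec{\tilde{\nu}}$ with the $U$-monotone prolongation to $\vec{\xi}$ shows $\vec{\xi} \in \mathcal{R}_{\vec{\mu'},\vec{\nu'}}$. Step (A2) is handled symmetrically, so throughout the algorithm $\vec{\mu'} \preceq_D \vec{\tilde{\mu}}$ and $\vec{\tilde{\nu}} \preceq_U \vec{\nu'}$.

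For the last two steps, I first show that the terminal pair is itself maximal. The termination of (A1) means $\vec{\tilde{\mu}} \notin D^*\vec{\sigma}$ for every $\vec{\sigma} \in U^*\vec{\tilde{\nu}}$ strictly above $\vec{\tilde{\nu}}$; Prop.~\ref{prop:munuorbitLemma} then promotes this to $D^*\vec{\sigma} \cap \mathcal{R}_{\vec{\tilde{\mu}},\vec{\tilde{\nu}}} = \emptyset$ (any $\vec{\rho}$ in that intersection would have $\vec{\tilde{\mu}} \in D^*\vec{\rho} \subseteq D^*\vec{\sigma}$), which is precisely Def.~\ref{def:maxLoop}(a); termination of (A2) yields (b). To identify $(\vec{\tilde{\mu}}, \vec{\tilde{\nu}})$ with $(\vec{\mu'},\vec{\nu'})$, suppose for contradiction that $\vec{\tilde{\nu}} \prec_U \vec{\nu'}$. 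From containment we have $\vec{\mu'} \preceq_D \vec{\tilde{\mu}}$, and a purely $D$-monotone history starting at $\vec{\tilde{\mu}}$ and reaching $\vec{\mu'}$ descends strictly from the start and cannot touch the upper endpoint $\vec{\tilde{\nu}}$, hence it is $(\vec{\tilde{\mu}},\vec{\tilde{\nu}})$-confined; therefore $\vec{\mu'} \in \mathcal{R}_{\vec{\tilde{\mu}},\vec{\tilde{\nu}}}$. Combined with $\vec{\mu'} \in D^*\vec{\nu'}$, this contradicts the Def.~\ref{def:maxLoop}(a) we just established for $(\vec{\tilde{\mu}}, \vec{\tilde{\nu}})$. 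Hence $\vec{\tilde{\nu}} = \vec{\nu'}$, and a symmetric use of (A2) gives $\vec{\tilde{\mu}} = \vec{\mu'}$. The hardest part of the plan is this final identification: the crucial insight is that monotone field-histories automatically stay loop-confined, which is what lets us transplant the endpoint $\vec{\mu'}$ (respectively $\vec{\nu'}$) into $\mathcal{R}_{\vec{\tilde{\mu}},\vec{\tilde{\nu}}}$. The order independence stated in the last sentence of the proposition is then immediate, since any legitimate ordering of (A1)/(A2) terminates at a maximal loop containing $(\vec{\mu},\vec{\nu})$, and the identification argument pins that loop down to $(\vec{\mu'},\vec{\nu'})$.
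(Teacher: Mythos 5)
Your architecture (termination, containment in $\mathcal{R}_{\vec{\mu'},\vec{\nu'}}$, maximality of the terminal pair, identification) differs from the paper's, which instead tracks the algorithm along the ladder of nested sub loops given by a reduced field-history of $\vec{\nu}$ inside $(\vec{\mu'},\vec{\nu'})$ and separately proves uniqueness of the maximal loop containing a given sub loop. Your first three steps are essentially sound: in particular the promotion of the termination condition of (A1)/(A2) to Def.~\ref{def:maxLoop} via Prop.~\ref{prop:munuorbitLemma} is correct. The gap is in the final identification. You claim that a purely $D$-monotone history from $\vec{\tilde{\mu}}$ down to $\vec{\mu'}$ is $(\vec{\tilde{\mu}},\vec{\tilde{\nu}})$-confined and hence $\vec{\mu'}\in\mathcal{R}_{\vec{\tilde{\mu}},\vec{\tilde{\nu}}}$. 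This is false whenever $\vec{\mu'}\prec_D\vec{\tilde{\mu}}$ strictly: by Prop.~\ref{prop:munuorbitLemma} every state of $\mathcal{R}_{\vec{\tilde{\mu}},\vec{\tilde{\nu}}}$ has $\vec{\tilde{\mu}}$ on its $D$-orbit, i.e.\ lies $\succeq_D\vec{\tilde{\mu}}$, and antisymmetry then excludes $\vec{\mu'}$. Concretely, Def.~\ref{def:munuconfined} only admits histories from the lower endpoint that begin with a $U$ step; realized instead as a history from $\vec{\tilde{\nu}}$, your descending trajectory has $\vec{\tilde{\mu}}$ as a transit state and is therefore not confined. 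Descending below the lower endpoint means \emph{leaving} the loop, not exploring it.

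Worse, no contradiction of the kind you seek can be extracted from $\vec{\nu'}$ alone: the termination condition of (A1) says precisely that $\vec{\tilde{\mu}}\notin D^*\vec{\sigma}$ for every $\vec{\sigma}\succ_U\vec{\tilde{\nu}}$, which (by the same promotion argument you used) already gives $D^*\vec{\nu'}\cap\mathcal{R}_{\vec{\tilde{\mu}},\vec{\tilde{\nu}}}=\emptyset$; so the maximality of the terminal pair is internally consistent with $\vec{\tilde{\nu}}\prec_U\vec{\nu'}$ unless more structure is invoked. What is missing is exactly the content of the paper's Step~1 and Step~3: either show, using the nested sub-loop sequence (switch-back states of a reduced field-history of $\vec{\nu}$ in $(\vec{\mu'},\vec{\nu'})$, Lemma~\ref{lem:switchbackreachable} and Prop.~\ref{prop:nestinglemma}), that as long as the current loop is a proper sub loop of $(\vec{\mu'},\vec{\nu'})$ one of (A1), (A2) finds a further update, so the algorithm cannot stall; or prove the uniqueness of the maximal loop whose reachable set contains a given loop, and combine it with your (correct) steps showing the terminal pair is a maximal loop containing $(\vec{\mu},\vec{\nu})$. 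As written, the identification — and with it the order-independence claim, which you derive from it — is not established.
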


In terms of the analysis of the state transition graph associated with an  $\ell$AQS-A, a first step therefore is   
to identify all maximal loops $(\vec{\mu}, \vec{\nu})$ and their associated states $\mathcal{R}_{\vec{\mu}, \vec{\nu}}$. It is convenient 
to treat states $\vec{\sigma}$ that are not part of any loop as singleton maximal loops $(\vec{\sigma},\vec{\sigma})$. This is consistent 
with Definition \ref{def:maxLoop}. Denoting the set of maximal loops of an $\ell$AQS-A by $\mathcal{L}$, it then follows that the corresponding 
sets of reachable states form a (disjoint) partition of $\mathcal{S}$. The inter-loop transition among maximal loops can be 
represented in terms of a functional graph as follows. 

\begin{definition}[Inter-loop state transition graph of an $\ell$AQS-A]\label{def:interLoop}
 Given an $\ell$AQS-A $(\mathcal{S}, F^\pm, U, D)$, let 
 \[
  \mathcal{L} = \{ (\vec{\mu}, \vec{\nu}) \in \mathcal{S} \times \mathcal{S}: (\vec{\mu}, \vec{\nu}) \, \mbox{is a maximal loop} \}
 \]
be the set of its maximal loops. The inter-loop transition graph is the graph whose vertex set is $\mathcal{L}$ and whose set of directed 
edges represent the functional graphs of the inter-loop transitions under $U$ and $D$ as follows. 
For $(\vec{\mu}, \vec{\nu}), (\vec{\mu'}, \vec{\nu'}), (\vec{\mu''}, \vec{\nu''}) \in \mathcal{L}$,  
\begin{align}
 (\vec{\mu'}, \vec{\nu'}) &= U(\vec{\mu}, \vec{\nu}), \quad \Leftrightarrow \quad U\vec{\nu} \in \mathcal{R}_{\vec{\mu'}, \vec{\nu'}}, \label{eqn:interLoopU}\\
 (\vec{\mu''}, \vec{\nu''}) &= D(\vec{\mu}, \vec{\nu}), \quad \Leftrightarrow \quad D\vec{\mu} \in \mathcal{R}_{\vec{\mu''}, \vec{\nu''}}. \label{eqn:interLoopD}
\end{align}
The transitions \eqref{eqn:interLoopU} and \eqref{eqn:interLoopD} will be called $U$- and $D$-inter-loop transitions, respectively. 
\end{definition}

Note that the existence of the maximal loop $\mathcal{R}$ of $(\vec{\alpha}, \vec{\omega})$-reachable states is guaranteed by the absorbing properties 
of its endpoints $(\vec{\alpha}, \vec{\omega})$.  
The $\ell$RPM property does not imply the existence of any other non-reachable states. Even if the $\ell$AQS-A has non-reachable states these do not necessarily have to be 
part of non-singleton maximal loops. Therefore, if $(\vec{\alpha}, \vec{\omega})$ is the only non-singleton 
maximal loop, 
any non-trivial hysteretic properties will be due to the states of $\mathcal{R}$.  To illustrate this, we consider two examples in which 
$(\vec{\alpha}, \vec{\omega})$ is the only non-singleton maximal loop.

\begin{example}[$\ell$AQS-A and maximal loops I]
\label{ex:shortTrans} 
Suppose that apart form the reachable states $\mathcal{R}$ we have in addition $n$ non-reachable states $\vec{\sigma}_i$, with 
$i = 1, 2, \ldots, n$. 
We assign transitions of these states under $U$ and $D$ as follows. For $i = 1, 2, \ldots, n$,
\begin{align*}
 U\vec{\sigma}_i &\in  \mathcal{R}, \\
 D\vec{\sigma}_i &\in \mathcal{R}, 
\end{align*}
It is readily checked that the states $( \vec{\sigma}_i )_{1 \leq i \leq n}$ are indeed not $(\vec{\alpha}, \vec{\omega})$-reachable and that they do not form loops. Hence 
they are singleton maximal loops. 
Moreover, each of these states transits under one $U$ or $D$ step into $\mathcal{R}$ and 
remains there. The blue colored vertices on the right half of Fig.~\ref{fig:NonReach} illustrate this case.  

The corresponding inter-loop transition 
graph has the vertex set $\mathcal{L} = \{ (\vec{\alpha},\vec{\omega}) \} \cup_i \{ (\vec{\sigma}_i, \vec{\sigma}_i) \}$ and its directed edges are readily worked out 
from Definition \ref{def:interLoop} as
\begin{align*}
U(\vec{\alpha},\vec{\omega}) &= (\vec{\alpha},\vec{\omega}), \quad D(\vec{\alpha},\vec{\omega}) = (\vec{\alpha},\vec{\omega}),\\
U(\vec{\sigma}_i,\vec{\sigma}_i) &=  (\vec{\alpha},\vec{\omega}),\quad 
D(\vec{\sigma}_i,\vec{\sigma}_i) =  (\vec{\alpha},\vec{\omega}), \quad i = 1, 2, \ldots, n.
\end{align*}
$\maltese$
\end{example}

In the above example there is a ``transient'' of at most one step before the states are trapped in $\mathcal{R}$. 
The following is an example of an $\ell$AQS-A where this ``transient'' can be made arbitrarily long.

\begin{example}[$\ell$AQS-A and maximal loops II]
\label{ex:longTrans}
 Suppose that apart from the set of reachable states $\mathcal{R}$ and their transitions we have additionally $n^2$  singleton maximal loops whose   
 states we index by a pair of integer as $(\vec{\upsilon}_{i,j})_{1 \leq i,j \leq n}$. Let the transitions of these states be as follows:
 \begin{align*}
 U \vec{\upsilon}_{i,j} &= \left \{ 
 \begin{array}{cc}
  \vec{\upsilon}_{i + 1,j}, & 1 \leq i \leq n -1, \\
  \vec{\omega}, & i = n,
 \end{array}
 \right. 
 \end{align*}
 and 
 \begin{align*}
 D \vec{\upsilon}_{i,j} 
 &= \left \{ 
 \begin{array}{cc}
  \vec{\upsilon}_{i,j+1}, & 1 \leq j \leq n -1, \\
  \vec{\alpha}, & j = n.
 \end{array}
 \right. 
\end{align*}

It is readily checked that with increasing $i$  the states are increasing with respect to the partial order $\prec_U$, while with increasing $j$ they 
are decreasing with respect to $\prec_D$. The 
$U$ and $D$ orbits send the states  $\vec{\upsilon}_{i,j}$ into the corresponding absorbing states. All non-reachable states are 
singleton maximal loops and hence trivially satisfy the $\ell$RPM property, 
together with the set of 
reachable states $\mathcal{R}$ this system is an $\ell$AQS-A.
By construction, each non-reachable state gets 
trapped in $\mathcal{R}$ after at most $2n$ transitions. 
The case with $n = 5$ is illustrated in the left half of Fig.~\ref{fig:NonReach}. $\maltese$
\end{example}

\begin{figure}[t!]
  \begin{center}
    \includegraphics[width = 0.8\textwidth]{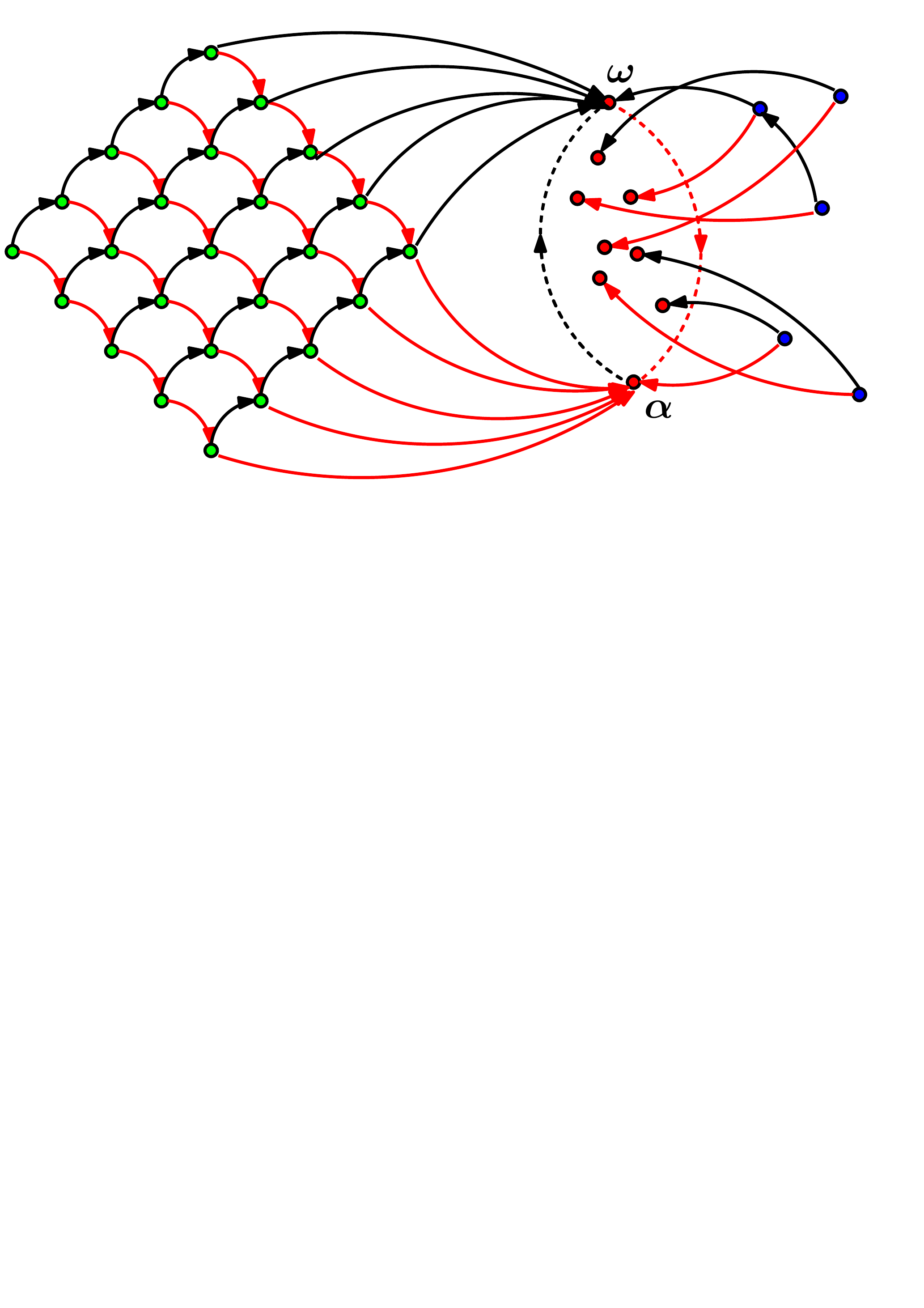} 
  \end{center}
  \caption{The set of reachable states (red) are 
  associated with the maximal loop that is formed by  
  the pair of absorbing states $\vec{\alpha}$ and $\vec{\omega}$. The non-reachable states (in blue and green) are not part of any non-singleton loops. These states 
  can transit directly into 
  reachable states, as illustrated by the states colored blue and placed to the right of the major loop. Alternatively, they  
   can also be connected  to each other in a way avoiding the formation of loops. An example for such a connection pattern is given by the states in green   
   left of the loop $(\vec{\alpha},\vec{\omega})$. 
  } 
  \label{fig:NonReach}
\end{figure}

These examples demonstrate that the $\ell$RPM property does not severely restrict the connectivity pattern at the inter-loop level, {\em i.e.} the 
structure of the inter-loop transition graph. This is to be compared with the connection pattern at the intra-loop level, 
which is strongly constrained by the $\ell$RPM property. In the next section we turn to a description of $\ell$AQS-A subject to 
time-periodic forcing and show that  
it is the features of the inter-loop transition graph that determine the length of transients.
Note that the choice of the non-reachable states as singleton maximal loops in the two examples was only for reasons of simplicity. 
The singleton maximal loops could have been replaced by arbitrary maximal loops and the $U$- and $D$-transitions would become inter-loop transitions among these.  

\pagebreak

\section{$\ell$AQS-A response to time-periodic forcing}
\label{sec:PF}

The $\ell$AQS-A response to monotonous changes of the driving force has already been defined via \eqref{eqn:FUaction} and \eqref{eqn:FDaction}. 
Let us consider now an $\ell$AQS-A subject to 
time-periodic forcing $(F_t)_{t \geq 0}$ with period $T$ and starting at $t = 0$ from some value $\hat{F}^-$, rising monotonously to  $\hat{F}^+$, then 
decreasing monotonously back to $\hat{F}^-$ at time $T$, as shown in Fig.~\ref{fig:LoopTransGraph}(a). We will call such a forcing 
{\em simple periodic forcing} with amplitudes $\hat{F}^\pm$.    
In particular, we are interested in the one-period map $\mathcal{F}: \mathcal{S} \to \mathcal{S}$, 
\begin{equation}
 \vec{\sigma}_T = \mathcal{F}[\hat{F}^-,\hat{F}^+] \vec{\sigma}_0 = \mathcal{D}[\hat{F}^-] \mathcal{U}[\hat{F}^+] \vec{\sigma}_0,
\end{equation}
and its orbit $\mathcal{F}^*[\hat{F}^-,\hat{F}^+] \vec{\sigma}_0 = ( \vec{\sigma}_0, \vec{\sigma}_T, \vec{\sigma}_{2T}, \ldots )$. 
In the following, in order to keep the main ideas simple, we will assume that the forcing is simple. Many of the results can be adapted to the case 
when the periodic forcing is arbitrary and bounded between $\hat{F}^\pm$. 

\begin{definition}[Trapping loop]
\label{def:trapping}
Let a simple periodic forcing $(F_t)_{t \geq 0}$ with amplitudes $\hat{F}^\pm$ be given. We call a loop $(\vec{\mu}, \vec{\nu})$ trapping
for the forcing $(F_t)_{t \geq 0}$, if 
\begin{equation}
  F^-(\vec{\mu}) < \hat{F}^- <  \hat{F}^+ < F^+(\vec{\nu}). 
  \label{eqn:Fconst} 
\end{equation}

\end{definition}

The following is useful. 

\begin{lemma}[Confinement to a trapping loop]\label{lem:confinement}
Given a simple periodic forcing and a loop $(\vec{\mu}, \vec{\nu})$ that is trapping, Definition \ref{def:trapping},  the following hold:
\begin{itemize}
 \item [(a)] For all $\vec{\sigma} \in \mathcal{R}_{\vec{\mu}, \vec{\nu}}$,
 \[
  F^-(\vec{\mu}) \leq F^-(\vec{\sigma}) < F^+(\vec{\sigma}) \leq F^+(\vec{\nu}).
 \]
 \item [(b)] Let $\vec{\sigma}_0 \in \mathcal{R}_{\vec{\mu}, \vec{\nu}}$, then for all $t > 0$, 
 \[
  \vec{\sigma}_t \in \mathcal{R}_{\vec{\mu}, \vec{\nu}}.
 \]
\end{itemize}
\end{lemma}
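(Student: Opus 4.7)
For part (a), the argument is direct from Prop.~\ref{prop:munuorbitLemma} together with the compatibility conditions in Def.~\ref{def:AQSA}. Given $\vec{\sigma} \in \mathcal{R}_{\vec{\mu}, \vec{\nu}}$, Prop.~\ref{prop:munuorbitLemma} supplies a $U$-chain $\vec{\sigma}, U\vec{\sigma}, \ldots, \vec{\nu}$ along which \eqref{eqn:stabilityII} makes $F^+$ strictly increasing, hence $F^+(\vec{\sigma}) \le F^+(\vec{\nu})$; symmetrically the $D$-chain from $\vec{\sigma}$ to $\vec{\mu}$ combined with \eqref{eqn:stabilityIV} yields $F^-(\vec{\mu}) \le F^-(\vec{\sigma})$. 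The middle inequality is just \eqref{eqn:stabilityI}.

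For part (b), I would reduce the claim to a single-step stability lemma and then induct over the elementary $U$- and $D$-steps composing the periodic forcing. The lemma reads: \emph{if $\vec{\sigma} \in \mathcal{R}_{\vec{\mu}, \vec{\nu}}$ with $\vec{\sigma} \neq \vec{\nu}$, then $U\vec{\sigma} \in \mathcal{R}_{\vec{\mu}, \vec{\nu}}$}, and symmetrically $D\vec{\sigma} \in \mathcal{R}_{\vec{\mu}, \vec{\nu}}$ whenever $\vec{\sigma} \neq \vec{\mu}$. I would prove it by cases on $\vec{\sigma}$. If $\vec{\sigma} = \vec{\mu}$, then the one-step field history ``$U$'' starting from $\vec{\mu}$ has no transit states and is trivially $(\vec{\mu}, \vec{\nu})$-confined in the sense of Def.~\ref{def:munuconfined}, placing $U\vec{\mu}$ in $\mathcal{R}_{\vec{\mu}, \vec{\nu}}$. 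Otherwise $\vec{\sigma} \notin \{\vec{\mu}, \vec{\nu}\}$; take any confined field history $\Phi$ from $\vec{\mu}$ to $\vec{\sigma}$ guaranteed by Def.~\ref{def:munureachable} and append one $U$-operation. The transit states of the extended history are those of $\Phi$ together with the previous endpoint $\vec{\sigma}$, which is now intermediate; since the former avoid the loop endpoints by confinement of $\Phi$ and $\vec{\sigma} \notin \{\vec{\mu}, \vec{\nu}\}$ by hypothesis, the extended history is still confined, so $U\vec{\sigma} \in \mathcal{R}_{\vec{\mu}, \vec{\nu}}$.

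To finish part (b), partition $[0,\infty)$ into monotonic segments of $(F_t)_{t \ge 0}$. On any increasing segment bounded above by $\hat{F}^+$, the trajectory traverses a finite sequence of $U$-steps; each state $\vec{\sigma}'$ actually undergoing a transition must satisfy $F^+(\vec{\sigma}') \le \hat{F}^+$. By part (a) and the trapping condition \eqref{eqn:Fconst}, $F^+(\vec{\nu}) > \hat{F}^+ \ge F^+(\vec{\sigma}')$, so $\vec{\sigma}' \neq \vec{\nu}$ and the single-step lemma applies. Decreasing segments are handled symmetrically using $D$ and the endpoint $\vec{\mu}$. Iterating over all segments of all subsequent periods preserves $\vec{\sigma}_t \in \mathcal{R}_{\vec{\mu}, \vec{\nu}}$. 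The only mildly delicate point, and hence the only obstacle I anticipate, is the bookkeeping of transit states when the appended step originates at the loop endpoint; this is precisely what the boundary case $\vec{\sigma} = \vec{\mu}$ above handles, using that confined field histories are allowed to start at an endpoint and that the trivial history has no transit states to violate confinement.
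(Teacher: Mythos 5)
Your proof is correct and follows essentially the same route as the paper: part (a) from Prop.~\ref{prop:munuorbitLemma} together with the monotonicity of $F^\pm$ along $U$- and $D$-orbits from \eqref{eqn:stabilityII} and \eqref{eqn:stabilityIV}, and part (b) from the fact that $\mathcal{R}_{\vec{\mu},\vec{\nu}}$ can only be exited through the endpoints while the trapping condition \eqref{eqn:Fconst} blocks any transition there. Your single-step closure lemma is simply an explicit rendering of the paper's remark that ``the loop can only be left through its endpoints,'' so the two arguments coincide in substance.
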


Thus if the $\ell$AQS-A is subjected to simple forcing and 
it starts out in a state of a loop that is trapping, it will remain confined to the loop for all subsequent times.
It is useful to make the following definition:
\begin{definition}[Loop-traversing forcing]
\label{def:looptraversing}
Let a loop $(\vec{\mu}, \vec{\nu})$ and a 
simple $T$-periodic forcing $(F_t)_{t \geq 0} $ with amplitudes $\hat{F}^\pm$ be given such that the loop $(\vec{\mu}, \vec{\nu})$  
is traversed as a result of the forcing: 
\begin{align*}
 \mathcal{U}[\hat{F}^+] \vec{\mu} &= \vec{\nu}, \\
 \mathcal{D}[\hat{F}^-] \vec{\nu} &= \vec{\mu}. 
\end{align*}
We shall call a such a forcing $(\vec{\mu}, \vec{\nu})${\em -traversing}.
\end{definition}

We are interested in periodic response to simple forcing and in particular marginality \cite{Tangetal87}, defined 
as follows:

\begin{definition}[Loop marginality]
\label{def:loopmarginality}
The response of the loop to  simple forcing is {\em marginal}, if for any $(\vec{\mu}, \vec{\nu})${\em -traversing} simple forcing 
$\hat{F}^\pm$ and all amplitudes $F$ such that $\hat{F}^- \leq F \leq \hat{F}^+$, the following hold: 
\begin{align}
  \mathcal{D}[\hat{F}^-] \mathcal{U}[F] \vec{\mu} &= \vec{\mu}, \label{eqn:FRPM1}\\
  \mathcal{U}[\hat{F}^+] \mathcal{D}[F] \vec{\nu} &= \vec{\nu}. \label{eqn:FRPM2}
\end{align}
The  simple forcings with amplitudes $(\hat{F}^-,F)$ and $(F, \hat{F}^+)$ thus traverse  
sub loops of $(\vec{\mu}, \vec{\nu})$ containing $\vec{\mu}$, respectively $\vec{\nu}$, as one of their endpoints.   
\end{definition}

\begin{definition}[$\ell$AQS-A marginal stability]
\label{def:lAQSAmarginality}
 We say that an $\ell$AQS-A exhibits marginal stability, if the loop marginality property, Def.~\ref{def:loopmarginality} holds 
 for all loops $(\vec{\mu},\vec{\nu})$ and simple forcings traversing these. 
\end{definition}

While the $\ell$RPM condition, Def.~\ref{def:RPM} is clearly necessary for marginality it is not sufficient. Specifically, 
$\ell$RPM does neither imply \eqref{eqn:FRPM1} nor \eqref{eqn:FRPM2}. Consider the first condition  \eqref{eqn:FRPM1}. The 
state $\mathcal{U}[F]\vec{\mu}$ clearly is on the $U$-boundary of the loop $(\vec{\mu}, \vec{\nu})$ and by the $\ell$RPM property, 
$\vec{\mu} \in D^*(\mathcal{U}[F]\vec{\mu})$. However, the latter together with $\hat{F}^- \leq F \leq \hat{F}^+$ only implies that 
\[
 \vec{\mu} \preceq_D \mathcal{D}[\hat{F}^-] \mathcal{U}[F] \vec{\mu},
\]
but for marginal loop response we require equality. Next, if equality does not hold, by a similar argument we would only have that
\[
 \mathcal{U}[F] \mathcal{D}[\hat{F}^-] \mathcal{U}[F] \vec{\mu} \preceq_U \vec{\nu}. 
\]
Continuing in this way, we see that in the absence of marginality a transient of many driving periods can occur before a periodic response is achieved. 
While the limit cycle will be a sub loop of $(\vec{\mu}, \vec{\nu})$, it does not need to share any of its boundary points. The following is a necessary 
and sufficient condition for the marginal stability of a loop. 

\begin{lemma}[Loop marginality condition]
 \label{lem:loopmarginality}
 Let $(\vec{\mu}, \vec{\nu})$ be a loop and consider its $U$- and $D$-boundary points  $(\vec{\nu}_i)_{1 \leq i \leq n}$ and 
 $(\vec{\mu}_j)_{0 \leq i \leq m -1}$. Denote by $\vec{\mu}^{(i)} \in D^*\vec{\nu}_i$ and $\vec{\nu}^{(j)} \in U^*\vec{\mu}_j$ the 
 states preceding the corresponding endpoints, so that 
 \begin{align}
  \vec{\mu} &= D\vec{\mu}^{(i)}, \quad 1 \leq i \leq n, \label{eqn:miact}\\
  \vec{\nu} &= U\vec{\nu}^{(j)}, \quad 0 \leq j \leq m-1. \label{eqn:nujact}
 \end{align}
The loop $(\vec{\mu}, \vec{\nu})$ is marginal, if 
\begin{align}
 F^-[\vec{\mu}^{(i)}] &\geq F^-[\vec{\mu}^{(n)}], \quad 1 \leq i \leq n, \label{eqn:FMcond1}\\
 F^+[\vec{\nu}^{(j)}] &\leq F^+[\vec{\nu}^{(0)}], \quad 0 \leq j \leq m-1. \label{eqn:FMcond2}
\end{align}
\end{lemma}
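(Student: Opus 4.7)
The plan is to verify that the two marginality equations \eqref{eqn:FRPM1} and \eqref{eqn:FRPM2} hold for every $F \in [\hat{F}^-, \hat{F}^+]$, assuming the two hypotheses \eqref{eqn:FMcond1} and \eqref{eqn:FMcond2}. By symmetry it suffices to treat \eqref{eqn:FRPM1}, as the argument for \eqref{eqn:FRPM2} is completely analogous with the roles of $U/D$ and $\vec{\mu}/\vec{\nu}$ swapped.

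First I would unpack $\vec{\nu}_i := \mathcal{U}[F]\vec{\mu}$ using \eqref{eqn:FUaction}. Because the forcing is loop-traversing with amplitude $\hat{F}^+$ sending $\vec{\mu}$ to $\vec{\nu}$, the parameter $F \in [\hat{F}^-, \hat{F}^+]$ selects some $U$-boundary state $\vec{\nu}_i$ of $(\vec{\mu},\vec{\nu})$ with $1 \leq i \leq n$. By Prop.~\ref{prop:munuorbitLemma}, $\vec{\mu} \in D^*\vec{\nu}_i$, so the state $\vec{\mu}^{(i)}$ with $D\vec{\mu}^{(i)} = \vec{\mu}$ on this orbit is well defined. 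Next I would apply the defining identity \eqref{eqn:FDaction} to compute $\mathcal{D}[\hat{F}^-]\vec{\nu}_i$: this is the $\preceq_D$-maximal element of $D^*\vec{\nu}_i$ whose $F^-$ lies strictly below $\hat{F}^-$. The state $\vec{\mu}$ trivially qualifies since $F^-[\vec{\mu}] = -\infty$, so the question reduces to whether any strictly earlier state on $D^*\vec{\nu}_i$ qualifies as well. The earliest such candidate is precisely $\vec{\mu}^{(i)}$, and standard monotonicity of $F^-$ along a $D$-orbit (inherited from \eqref{eqn:stabilityIV}) rules out any state before it. Hence $\mathcal{D}[\hat{F}^-]\vec{\nu}_i = \vec{\mu}$ is equivalent to the single inequality $\hat{F}^- \leq F^-[\vec{\mu}^{(i)}]$.

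Now I would extract the necessary input from the traversing hypothesis. Setting $F = \hat{F}^+$ gives $\vec{\nu}_i = \vec{\nu} = \vec{\nu}_n$, so the condition $\mathcal{D}[\hat{F}^-]\vec{\nu} = \vec{\mu}$ built into Def.~\ref{def:looptraversing} is exactly $\hat{F}^- \leq F^-[\vec{\mu}^{(n)}]$. Combining this with the assumed inequality \eqref{eqn:FMcond1} in the form $F^-[\vec{\mu}^{(n)}] \leq F^-[\vec{\mu}^{(i)}]$, I obtain $\hat{F}^- \leq F^-[\vec{\mu}^{(i)}]$ for every $i$, and hence $\mathcal{D}[\hat{F}^-]\mathcal{U}[F]\vec{\mu} = \vec{\mu}$ for every $F \in [\hat{F}^-, \hat{F}^+]$. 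The dual argument using \eqref{eqn:FMcond2}, applied to $\vec{\mu}_j := \mathcal{D}[F]\vec{\nu}$ and the predecessor states $\vec{\nu}^{(j)}$ on $U^*\vec{\mu}_j$, yields \eqref{eqn:FRPM2}.

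The only subtlety I anticipate is bookkeeping about strict versus non-strict inequalities in \eqref{eqn:FUaction}, \eqref{eqn:FDaction} and in Def.~\ref{def:trapping}, since the $\min/\max$ are defined using strict threshold conditions while the hypotheses involve non-strict inequalities. I would handle this by noting that the trapping/traversing forcing amplitudes are chosen to lie strictly between trapping fields, and that the compatibility conditions \eqref{eqn:stabilityII}--\eqref{eqn:stabilityV} make the relevant $F^\pm$-values along the boundary orbit strictly monotone. This ensures that the inequality $\hat{F}^- \leq F^-[\vec{\mu}^{(i)}]$ obtained above is in fact strict whenever it needs to be for the $\max$ in \eqref{eqn:FDaction} to return $\vec{\mu}$; no genuine obstruction arises beyond this careful bookkeeping.
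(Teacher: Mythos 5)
Your argument is correct and follows essentially the same route as the paper's proof: reduce $\mathcal{D}[\hat F^-]\mathcal{U}[F]\vec{\mu}=\vec{\mu}$ to the single inequality $\hat F^-\le F^-[\vec{\mu}^{(i)}]$ via the strict monotonicity of $F^-$ along a $D$-orbit, extract $\hat F^-\le F^-[\vec{\mu}^{(n)}]$ from the traversing hypothesis, and chain through \eqref{eqn:FMcond1}. One small slip: $F^-[\vec{\mu}]=-\infty$ is false for a general loop (only $\vec{\alpha}$ has that property); the fact that $\vec{\mu}$ ``qualifies'' instead follows from $F^-[\vec{\mu}]<\hat F^-$, which is already forced by the traversing condition $\mathcal{D}[\hat F^-]\vec{\nu}=\vec{\mu}$, exactly as the paper notes.
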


We then have the following necessary and sufficient condition for the marginal response of an $\ell$AQS-A. 
\begin{proposition}[$\ell$AQS-A marginal stability]\label{prop:AQSmarginality}
 Given an $\ell$AQS-A, the following two are equivalent:
 \begin{itemize}
  \item[(a)] The $\ell$AQS-A exhibits marginal stability.
  \item[(b)] For any loop $(\vec{\mu}, \vec{\nu})$ and predecessor states, $(\vec{\nu}_i)_{1 \leq i \leq n}$ and 
 $(\vec{\mu}_j)_{0 \leq i \leq m -1}$, 
  the sequences $(F^-[\vec{\mu}^{(i)}])_{1 \leq i \leq n}$  
  and $(F^+[\vec{\nu}^{(j)}])_{0 \leq j \leq m-1}$, as defined in Lemma \ref{lem:loopmarginality}, are non-increasing.   
 \end{itemize}
\end{proposition}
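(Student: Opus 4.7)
The plan is to prove both directions of the equivalence. The direction (b) $\Rightarrow$ (a) is immediate from Lemma \ref{lem:loopmarginality}: if the sequences $(F^-[\vec{\mu}^{(i)}])_{1 \leq i \leq n}$ and $(F^+[\vec{\nu}^{(j)}])_{0 \leq j \leq m-1}$ are non-increasing for every loop, then $F^-[\vec{\mu}^{(n)}]$ is the minimum of the first sequence and $F^+[\vec{\nu}^{(0)}]$ is the maximum of the second, so conditions \eqref{eqn:FMcond1} and \eqref{eqn:FMcond2} hold for every loop. Lemma \ref{lem:loopmarginality} then yields marginality for every loop, establishing marginal stability of the $\ell$AQS-A.

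For (a) $\Rightarrow$ (b), I would fix a loop $(\vec{\mu}, \vec{\nu})$ and exploit the fact that marginal stability constrains not just this loop but every sub-loop $(\vec{\mu}, \vec{\nu}_k)$, which Prop.~\ref{prop:nestinglemma}(a) guarantees is itself a loop. The argument proceeds by strong induction on $|\mathcal{R}_{\vec{\mu}, \vec{\nu}}|$, with the dual claim for $(F^+[\vec{\nu}^{(j)}])_j$ following by a symmetric argument that interchanges $U \leftrightarrow D$, $F^+ \leftrightarrow F^-$, and uses Prop.~\ref{prop:nestinglemma}(b) with sub-loops $(\vec{\mu}_j, \vec{\nu})$. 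For the inductive step on the first sequence, the mechanism is: for each $k \in \{1,\ldots,n\}$, pick a $(\vec{\mu}, \vec{\nu}_k)$-traversing simple forcing with $\hat{F}^-_k$ at its maximal admissible value; then for any $i \leq k$ pick $F \in [F^+(\vec{\nu}_{i-1}), F^+(\vec{\nu}_i))$ so that $\mathcal{U}[F]\vec{\mu} = \vec{\nu}_i$. Marginality \eqref{eqn:FRPM1} then demands $\mathcal{D}[\hat{F}^-_k]\vec{\nu}_i = \vec{\mu}$, which forces $F^-[\vec{\mu}^{(i)}] \geq \hat{F}^-_k$ because $\vec{\mu}^{(i)}$ is the predecessor of $\vec{\mu}$ on $D^*\vec{\nu}_i$. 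Iterating the resulting inequality over $k = 2,\ldots,n$ would produce the chain $F^-[\vec{\mu}^{(1)}] \geq F^-[\vec{\mu}^{(2)}] \geq \cdots \geq F^-[\vec{\mu}^{(n)}]$.

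The main obstacle is identifying the maximal admissible $\hat{F}^-_k$ with $F^-[\vec{\mu}^{(k)}]$, which amounts to the auxiliary monotonicity claim that along $D^*\vec{\nu}_k \setminus \{\vec{\mu}\}$ the $F^-$ values attain their minimum precisely at the predecessor $\vec{\mu}^{(k)}$. I would establish this claim by leveraging strictly smaller sub-loops: if some state $\xi$ earlier on $D^*\vec{\nu}_k$ had $F^-[\xi] < F^-[\vec{\mu}^{(k)}]$, then Prop.~\ref{prop:nestinglemma}(b) gives that $(D\xi, \vec{\nu}_k)$ is a strictly smaller sub-loop, and applying marginality together with the inductive hypothesis for this smaller loop would lead to a contradiction. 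The technical heart of the proof is thus making the inductive reduction close: since Lemma \ref{lem:loopmarginality} is stated as a sufficient condition, the necessary direction needed here must be extracted by bootstrapping via successively smaller sub-loops, relying on the nesting machinery developed in Section \ref{sec:damaloop} and the $\ell$RPM property to ensure that each inductive step reduces genuinely to a loop with fewer reachable states.
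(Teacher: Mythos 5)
Your proof is correct in substance and its core mechanism coincides with the paper's: apply marginality not just to $(\vec{\mu},\vec{\nu})$ but to each nested sub-loop $(\vec{\mu},\vec{\nu}_k)$ (and dually $(\vec{\mu}_j,\vec{\nu})$), obtaining $F^-[\vec{\mu}^{(i)}] \geq F^-[\vec{\mu}^{(k)}]$ for all $i \leq k$, which chains into the non-increasing property. Two points of comparison. First, you rightly note that Lemma \ref{lem:loopmarginality} is stated only as a sufficient condition, so the direction (a) $\Rightarrow$ (b) needs its converse; the paper's proof silently invokes that converse (which the appendix proof of the Lemma does in fact establish, despite the statement), whereas you re-derive it from Def.~\ref{def:loopmarginality} directly -- a more self-contained route. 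Second, however, the ``main obstacle'' you identify, and the entire strong induction on $|\mathcal{R}_{\vec{\mu},\vec{\nu}}|$ together with the appeal to Prop.~\ref{prop:nestinglemma} that you erect to overcome it, is unnecessary: the maximal admissible $\hat{F}^-_k$ for a $(\vec{\mu},\vec{\nu}_k)$-traversing forcing equals $F^-[\vec{\mu}^{(k)}]$ in one line, because by the compatibility condition \eqref{eqn:stabilityIV} the trapping field $F^-$ is strictly decreasing along any $D$-orbit, so among all states of $D^*\vec{\nu}_k$ lying strictly above $\vec{\mu}$ the minimum of $F^-$ is attained at the immediate predecessor $\vec{\mu}^{(k)}$. (Indeed, the very state $\xi$ you posit for your contradiction cannot exist, since $\vec{\mu}^{(k)} \in D^*\xi$ forces $F^-[\vec{\mu}^{(k)}] < F^-[\xi]$ outright.) With that observation your argument closes without any induction and reduces to essentially the paper's proof.
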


The no-passing property, Def.~\ref{def:NP}, implies marginal stability. To see this, consider a loop $(\vec{\mu}, \vec{\nu})$ and a force $F^{(1)}_t$ 
monotonously decreasing from some $\hat{F}^+ $ to $\hat{F}^-$ such that it evolves $\vec{\nu}$ to $\vec{\mu}$. Apply now
to an intermediate boundary state $\vec{\nu}_i$ a force  $F^{(2)}_t$ that decreases monotonously from some value  $F < \hat{F}^+$ to $\hat{F}^-$ .  
Choosing the time dependence so that $F^{(2)}_t \le F^{(1)}_t$ for all $t$, and since $\vec{\nu}_i \prec_U \vec{\nu}$, the no-passing property implies that 
$F^{(2)}_t$ evolves $\vec{\nu}_i$ to $\vec{\mu}$.

\begin{theorem}[Transients in trapping loops under simple periodic driving]
\label{thm:trans}
 Consider an $\ell$AQS-A that possesses marginal stability and let $(F_t)_{t \geq 0} $ be a $T$-periodic simple forcing with 
 amplitudes $\hat{F}^\pm$. Suppose that there is a loop $(\vec{\mu}, \vec{\nu})$   
 trapping the forcing so that condition \eqref{eqn:Fconst} holds. Let $\vec{\sigma}_0$ be any $(\vec{\mu}, \vec{\nu})$-reachable 
 initial state and consider the orbit of the one-period map
 \[
  \mathcal{F}^*[\hat{F}^-,\hat{F}^+] \vec{\sigma}_0 = ( \vec{\sigma}_0, \vec{\sigma}_T, \vec{\sigma}_{2T}, \ldots ).
 \]
 Then, for $n \geq 1$, 
 \[
  \vec{\sigma}_{nT} = \vec{\sigma}_T,  
 \]
 meaning that the trajectory reaches a limit-cycle after a transient of at most one period. 
\end{theorem}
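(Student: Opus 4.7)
By Lemma~\ref{lem:confinement}(b) the trajectory stays in $\mathcal R_{\vec{\mu},\vec{\nu}}$, so setting $\vec{\tau}:=\mathcal U[\hat F^+]\vec{\sigma}_0$ one has $\vec{\sigma}_T=\mathcal D[\hat F^-]\vec{\tau}$ with both $\vec{\tau},\vec{\sigma}_T\in\mathcal R_{\vec{\mu},\vec{\nu}}$. The theorem reduces by induction to showing $\vec{\sigma}_{2T}=\vec{\sigma}_T$, i.e.\ $\mathcal U[\hat F^+]\vec{\sigma}_T=\vec{\tau}$, after which $\vec{\sigma}_{nT}=\vec{\sigma}_T$ for every $n\ge 1$ follows immediately. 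The plan is to exhibit $(\vec{\sigma}_T,\vec{\tau})$ as a sub loop of $(\vec{\mu},\vec{\nu})$ and then use marginality to force $\vec{\tau}$ to be the first state on $U^*\vec{\sigma}_T$ with $F^+>\hat F^+$.

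For the loop-formation step, $\vec{\sigma}_T\in D^*\vec{\tau}$ is automatic, and $\vec{\tau}\in U^*\vec{\sigma}_T$ I would obtain from the $\ell$RPM property Def.~\ref{def:RPM}(b) applied along the orbit $D^*\vec{\tau}$: both $\vec{\sigma}_T$ and the trapping endpoint $\vec{\mu}$ lie on this orbit, with $\vec{\sigma}_T$ reached strictly first because $F^-(\vec{\sigma}_T)<\hat F^-$ is by construction the first crossing while $F^-(\vec{\mu})<\hat F^-$ places $\vec{\mu}$ yet deeper. The premise of Def.~\ref{def:RPM}(b) is that $\vec{\tau}\in U^*\vec{\rho}$ for some $\vec{\rho}\in D^*\vec{\tau}$ lying below $\vec{\sigma}_T$: for $\vec{\sigma}_0$ on the $U$-boundary of $(\vec{\mu},\vec{\nu})$ the choice $\vec{\rho}=\vec{\mu}$ is immediate, and for interior $\vec{\sigma}_0$ one descends through the tree decomposition of Thm.~\ref{prop:mert} to locate the innermost ancestor sub loop that contains $\vec{\sigma}_0$ on its $U$-boundary and $\vec{\tau}$ on its $U$-orbit, whose lower endpoint $\vec{\rho}$ then satisfies $\vec{\rho}\preceq_D\vec{\sigma}_T$ by confinement to that sub loop.

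For the marginality step, apply the $\ell$AQS-A marginal stability assumption in its trapping-field form (Prop.~\ref{prop:AQSmarginality}) to the ancestor loop $(\vec{\rho},\vec{\tau})$: the $F^+$-values of the predecessors of $\vec{\tau}$ on the $U$-orbits issuing from the $D$-boundary form a non-increasing sequence, with the maximum attained at the predecessor on $U^*\vec{\rho}$. Since $\vec{\sigma}_0\in U^*\vec{\rho}$ by the choice of $\vec{\rho}$, the predecessor of $\vec{\tau}$ on $U^*\vec{\rho}$ coincides with its predecessor on $U^*\vec{\sigma}_0$, which has $F^+\le\hat F^+$ by the very definition of $\vec{\tau}=\mathcal U[\hat F^+]\vec{\sigma}_0$. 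The non-increasing property then forces the predecessor of $\vec{\tau}$ on $U^*\vec{\sigma}_T$ to satisfy $F^+\le\hat F^+$; strict monotonicity $F^+[U\vec\eta]>F^+[\vec\eta]$ from \eqref{eqn:stabilityII} extends this bound to every intermediate state on the $U$-boundary of $(\vec{\sigma}_T,\vec{\tau})$, and combined with $F^+(\vec{\tau})>\hat F^+$ this identifies $\vec{\tau}$ as the first state on $U^*\vec{\sigma}_T$ exceeding $\hat F^+$. Thus $\mathcal U[\hat F^+]\vec{\sigma}_T=\vec{\tau}$ and $\vec{\sigma}_{2T}=\mathcal D[\hat F^-]\vec{\tau}=\vec{\sigma}_T$, which is the desired conclusion.

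The main obstacle is the loop-formation step: for a deeply interior $\vec{\sigma}_0$ the naive choice $(\vec{\mu},\vec{\tau})$ can fail to be a loop (since $\vec{\tau}\in U^*\vec{\sigma}_0$ need not imply $\vec{\tau}\in U^*\vec{\mu}$), and one must navigate Thm.~\ref{prop:mert}'s partition tree to produce the correct ancestor sub loop whose lower endpoint $\vec{\rho}$ simultaneously satisfies $\vec{\sigma}_0\in U^*\vec{\rho}$, $(\vec{\rho},\vec{\tau})$ being a loop, and $\vec{\rho}\preceq_D\vec{\sigma}_T$. Once this ancestor loop is in hand, the marginality step is a direct application of Prop.~\ref{prop:AQSmarginality} combined with $F^+$-monotonicity.
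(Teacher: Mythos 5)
Your marginality step is essentially sound, and your argument does go through in the case where $\vec{\sigma}_0$ sits in the ``left'' off\-spring $\ell^-$ of the smallest trapping ancestor loop (there $\vec{\tau}$ is forced onto the $U$-boundary of that ancestor and $\vec{\rho}$ can be taken to be its lower endpoint). But the loop-formation step contains a genuine gap, and it is not a matter of navigating the partition tree more carefully: the claim you reduce the theorem to, $\mathcal{U}[\hat F^+]\vec{\sigma}_T=\vec{\tau}$, is strictly stronger than $\vec{\sigma}_{2T}=\vec{\sigma}_T$ and is false in general. Let $(\vec{\mu}_p,\vec{\nu}_p)$ be the smallest ancestor of $\vec{\sigma}_0$ in the tree of Thm.~\ref{prop:mert} that traps the forcing, and suppose $\vec{\sigma}_0$ lies in the off\-spring $\ell^+_{p-1}=(\vec{\mu}_{p-1},\vec{\nu}_p)$, which shares the upper endpoint with $(\vec{\mu}_p,\vec{\nu}_p)$ but has $F^-(\vec{\mu}_{p-1})\geq\hat F^-$. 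Nothing prevents $\vec{\tau}=\mathcal{U}[\hat F^+]\vec{\sigma}_0$ from being a state of $\ell^+_{p-1}$ that does \emph{not} lie on $U^*\vec{\mu}_{p-1}$ (marginality bounds $F^+$ only along boundary orbits, not on interior states). Since $F^-(\vec{\mu}_{p-1})\geq\hat F^-$, the state $\vec{\mu}_{p-1}$ is reached on $D^*\vec{\tau}$ strictly before $\vec{\sigma}_T$. If $\vec{\tau}\in U^*\vec{\sigma}_T$ held, Def.~\ref{def:RPM}(b) applied to the triple $\vec{\mu}_{p-1},\vec{\sigma}_T\in D^*\vec{\tau}$ would force $\vec{\tau}\in U^*\vec{\mu}_{p-1}$ --- a contradiction. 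So $(\vec{\sigma}_T,\vec{\tau})$ is simply not a loop in this case, and no choice of auxiliary state $\vec{\rho}$ can rescue the argument, because the conclusion it is meant to deliver fails. The same obstruction occurs for $\vec{\sigma}_0\in\ell^0_{p-1}$ when the lower glue endpoint is non-trapping.

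What is true, and what the theorem actually asserts, is only $\mathcal{D}[\hat F^-]\mathcal{U}[\hat F^+]\vec{\sigma}_T=\vec{\sigma}_T$: the period-$T$ states stabilize, but the upper turning state of the second period, $\vec{\nu}'=\mathcal{U}[\hat F^+]\vec{\sigma}_T$, is in general a \emph{different} state from $\vec{\tau}$. The paper's proof therefore does not try to identify the two turning points; it locates the smallest trapping ancestor $(\vec{\mu}_p,\vec{\nu}_p)$, argues case by case on which off\-spring contains $\vec{\sigma}_0$ that $\vec{\sigma}_T$ lands on a boundary of that ancestor, shows that $(\vec{\sigma}_T,\vec{\nu}')$ is then a genuine sub loop by $\ell$RPM, and uses marginality to return the trajectory to $\vec{\sigma}_T$. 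To repair your write-up you would need to abandon the identification $\mathcal{U}[\hat F^+]\vec{\sigma}_T=\vec{\tau}$ in the $\ell^+$ and $\ell^0$ cases and instead run your (correct) marginality mechanism on the loop $(\vec{\sigma}_T,\vec{\nu}')$, using the fact that the predecessor of $\vec{\sigma}_T$ on $D^*\vec{\tau}$ has $F^-\geq\hat F^-$ together with Prop.~\ref{prop:AQSmarginality} on the $D$-side rather than the $U$-side.
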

In particular, note that starting in a $(\vec{\alpha}, \vec{\omega})$-reachable state, a periodically forced $\ell$AQS-A  will reach 
a limit-cycle after at most one period, since by definition, the maximal loop $(\vec{\alpha}, \vec{\omega})$ is trapping for any forcing.

Apart from marginality, the restriction in Theorem \ref{thm:trans} to an initial state belonging to a loop $(\vec{\mu}, \vec{\nu})$ that traps 
the forcing is essential. Key to its proof is the observation that the standard partition of a 
loop via its tree representation, Thm.~\ref{prop:mert}, associates with every state $\vec{\sigma} \in \mathcal{R}_{\vec{\mu}, \vec{\nu}}$ a sequence of 
nested sub loops, each containing its predecessor and all containing $\vec{\sigma}$. In terms of the forcing this sequence of loops 
traps a monotonously increasing range of forces. The condition \eqref{eqn:Fconst} on $(\vec{\mu}, \vec{\nu})$ then  implies that 
there exists a sub loop in this sequence all of whose successors -- including $(\vec{\mu}, \vec{\nu})$ -- are trapping.

Conversely, when the forcing is such that the maximal loop containing 
the initial state is not trapping, the dynamics will be such that several maximal loops are entered and left as the forcing reaches its extreme values 
$\hat{F}^\pm$, until a trapping loop is found. Since the endpoints of the loop $(\vec{\alpha}, \vec{\omega})$ are absorbing and this loop traps 
any forcing, the finiteness of $\mathcal{S}$ ensures that a trapping loop will be found in a finite time. 
Moreover by Theorem \ref{thm:trans}, once this happens the system settles into a limit cycle after at most 
one further period of the driving. As already evident from Example \ref{ex:longTrans}, the length of the transient  
can be arbitrarily long. We now show how the structure of the inter-loop transition graph 
is related to the length of such transients.

Assume that a simple periodic forcing 
with amplitudes $\hat{F}^\pm$ is applied to a state of a maximal loop $(\vec{\mu}, \vec{\nu})$ which moreover cannot trap the forcing so that 
\eqref{eqn:Fconst} does not hold. 
The system trajectory is bound to leave this loop through one of its endpoints, entering another maximal loop. Thus the 
forcing induces inter-loop transitions between maximal loops and we can follow these transitions as a path on the inter-loop transition graph, 
as given by Def.~\ref{def:interLoop}. 
Let us suppose the maximal $(\vec{\mu}, \vec{\nu})$ is left through its upper endpoint $\vec{\nu}$, so that we have a sequence of  $U$-inter-loop transitions, 
as defined in \eqref{eqn:interLoopU}. These transitions continue until we reach the first maximal loop $(\vec{\mu'}, \vec{\nu'})$, 
satisfying $\hat{F}^+ < F^+(\vec{\nu'})$.
Denote by $\vec{\sigma}$ the state of the system at the time the forcing has reached its maximum value $\hat{F}^+$. The state $\vec{\sigma}$ 
necessarily belongs 
to $(\vec{\mu'}, \vec{\nu'})$. Applying Lemma~\ref{lem:confinement}(a), it must be that 
\begin{equation}
 F^+(\vec{\nu}) < \hat{F}^+ < F^+(\vec{\sigma}) \leq F^+(\vec{\nu'}).
\end{equation}
Observe that the AQS compatibility conditions 
\eqref{eqn:stabilityII} - \eqref{eqn:stabilityV} do not permit to establish any order relation between the lower trapping fields of the two 
loops $(\vec{\mu},\vec{\nu})$ and $(\vec{\mu'},\vec{\nu'})$. In particular, if $\hat{F}^- < F^-(\vec{\mu'})$, the maximal loop $(\vec{\mu'},\vec{\nu'})$ 
will be left through its lower endpoint before the first driving period completes. By the time the forcing reaches its lower extreme value 
$\hat{F}^-$, a new maximal loop $(\vec{\mu''},\vec{\nu''})$ must have been entered that can trap the lower amplitude so that $F^-(\vec{\mu''}) < \hat{F}^-$. 
This will happen after a sequence of inter-loop $D$-transitions, 
\eqref{eqn:interLoopD}. In this way a system 
subjected to periodic forcing,   might find itself entering and leaving a sequence of 
maximal loops until encountering a trapping one, giving thereby rise to a transient that can last multiple driving periods. 

We can graphically visualize this response, if we place the vertices representing the maximal loops $(\vec{\mu}, \vec{\nu})$ 
of the inter-loop transition graph in a 
2d Cartesian coordinate system, choosing as coordinates   
$x = -F^-(\vec{\mu})$ and $y = F^+(\vec{\nu})$. Consider now a $U$-transition between two maximal loops   
$(\vec{\mu},\vec{\nu})$ and $(\vec{\mu'},\vec{\nu'})$, as defined by \eqref{eqn:interLoopU}.  

Referring to Fig.~\ref{fig:LoopTransGraph}(b), by the condition 
\eqref{eqn:stabilityII}, $(\vec{\mu'}, \vec{\nu'}$) must lie in quadrant I or IV relative  to $(\vec{\mu}, \vec{\nu})$. We shall call this 
a type I or type IV transition. 
Likewise, for the $D$-transition, as defined by 
\eqref{eqn:interLoopD}, the condition \eqref{eqn:stabilityIV} implies that $(\vec{\mu''}, \vec{\nu''}$) must lie in quadrant I or II, so that 
we have a type I or type II transition. 
Thus the AQS dynamics does not permit transitions into quadrant III. Type I transitions, on the other hand, are special, since they lead to  maximal loops 
$(\vec{\mu'''}, \vec{\nu'''})$ satisfying 
\begin{equation}
 (F^-(\vec{\mu}),F^+(\vec{\nu}))  \subset ((F^-(\vec{\mu'''}),F^+(\vec{\nu'''})).
 \label{eqn:looptrapping}
\end{equation}
In type I transitions the destination loop maintains or improves its trapping in at least one direction, while at the same time maintaining or improving its trapping in the other. 
In II or IV transitions on the other hand, the trapping is improved in one direction but deteriorates in the other. 
The possibility of transitions of type  II and IV  can hence give rise to long transients. 
\begin{figure}[t!]
  \begin{center}
    \includegraphics[width = \textwidth]{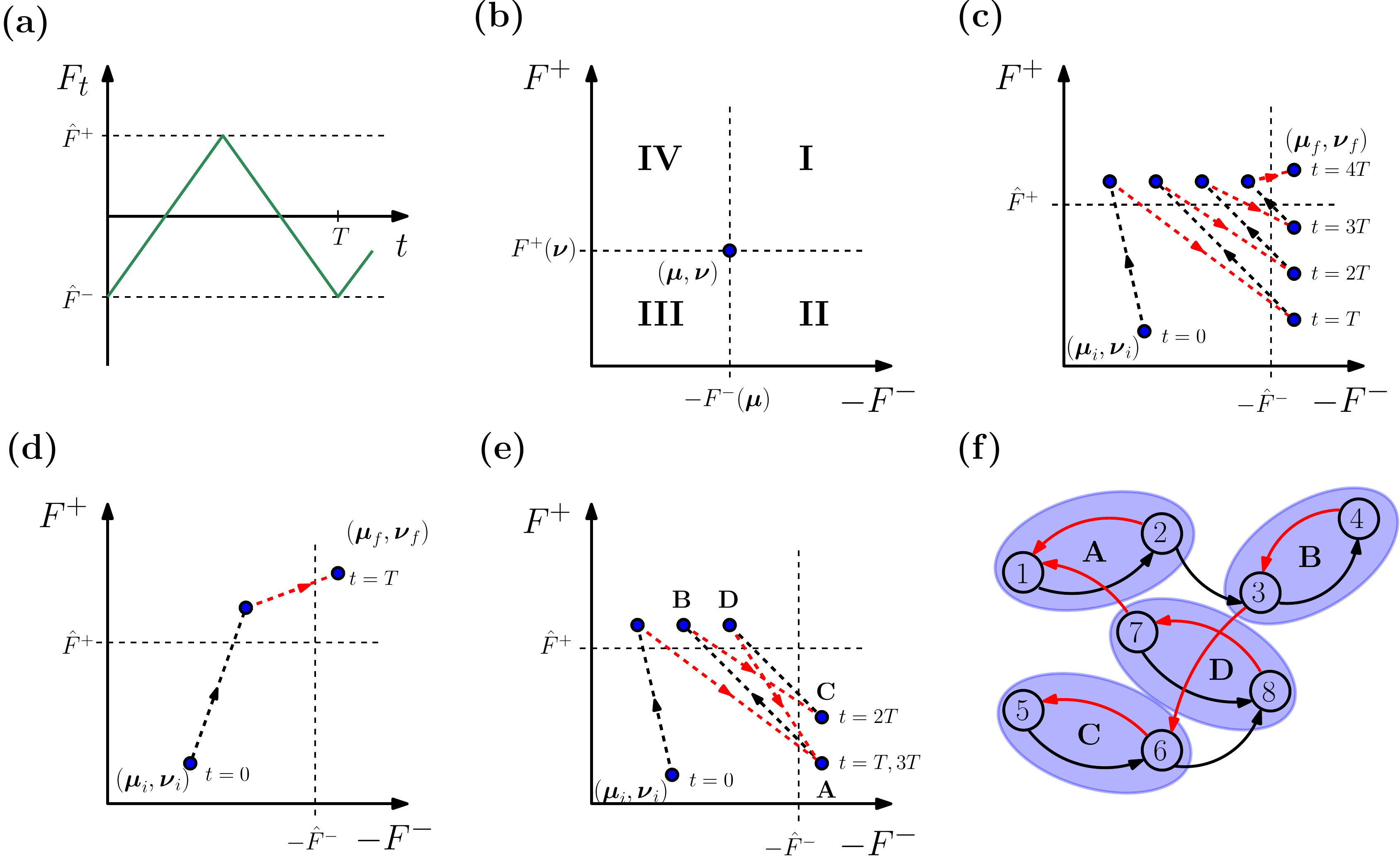} 
  \end{center}
  \caption{(a) Simple periodic forcing with period $T$. (b) The directions of inter-loop transitions. The AQS dynamics implies that transitions into quadrant III are impossible.(c)-(e): 
  Inter-loop transitions among maximal loops and their limit cycles under the simple forcing of (a). (c)-(d) harmonic response. In the case of no-passing only 
  transitions into quadrant I are possible. The transient is therefore at most one period long.   (e) An example of subharmonic response limit-cycle, where the system returns to it 
  initial state after two driving periods. (f) State transition graph generating the 
  the two-period limit-cycle shown in (e). The labeling of the loops in (f) and (e) coincide. 
  } 
  \label{fig:LoopTransGraph}
\end{figure}

Consider a system under a simple periodic forcing, as shown Fig.~\ref{fig:LoopTransGraph}(c), and starting out 
in a maximal loop $(\vec{\mu}_i, \vec{\nu}_i)$ that cannot trap 
the forcing. As the force rises from $\hat{F}^-$ to $\hat{F}^+$, the system transits through a sequence of maximal loops, 
moving monotonously upwards in the loop transition graph  
until it encounters the first maximal 
loop located above the horizontal line $F^+ =  \hat{F}^+$. This is shown by the dashed black line segments, representing the 
$U$-inter-loop transitions under a force increase. The subsequent force 
decrease from $\hat{F}^+$ to $\hat{F}^-$ causes a transition through maximal loops that continues monotonously towards the right 
until the first maximal loop to the right of the 
vertical line at $- \hat{F}^-$ is reached. These $D$-inter-loop transitions are indicated by the dashed red segments in the figure. 
The transitions between maximal loops continue until we encounter 
a maximal loop whose trapping fields $F^\pm$ satisfy both $F^+ > \hat{F}^+$ and 
$F^- < \hat{F}^-$. This is a maximal loop $(\vec{\mu}_f, \vec{\nu}_f)$, that is respectively located above and to the right of the 
horizontal and vertical dashed lines, as shown 
in Fig.~\ref{fig:LoopTransGraph}(c).
Once this loop is reached, the periodic forcing cannot induce any further inter-loop transitions and 
the loop $(\vec{\mu}_f, \vec{\nu}_f)$ therefore traps the forcing. Condition \eqref{eqn:Fconst} of Thm.~\ref{thm:trans} is thereby 
established and subsequently a periodic response sets in. 

Recall that the no-passing property implies that when subjected to periodic forcing, the  transient is at most one period long \cite{Sethna93}. 
The response depicted in Fig.~\ref{fig:LoopTransGraph}(c) can therefore not happen when the dynamics obeys the no-passing property. 
Reverting this argument suggests that a sufficient condition to ensure one period 
transients is to admit  only type I transitions  
so that \eqref{eqn:looptrapping} holds, and consequently the system trajectory follows a trajectory similar to the one depicted in 
Fig.~\ref{fig:LoopTransGraph}(d). Indeed, it turns out that no-passing implies type I transitions only and we have the following  

\begin{theorem}[No-passing and inter-loop transitions]\label{thm:NPInterloop}
 Consider an AQS-A  satisfying the no-passing property Def.~\ref{def:NP} and let 
 $(\vec{\mu},\vec{\nu})$, $(\vec{\mu'},\vec{\nu'})$ and $(\vec{\mu''},\vec{\nu''})$ be maximal loops with their $U$- and $D$-inter-loop transitions 
 as given in \eqref{eqn:interLoopU} and \eqref{eqn:interLoopD}, respectively. 
Then the following hold: 
 \begin{itemize}
  \item [(a)]  
  \begin{align*}
    F^+(\vec{\nu'}) &> F^+(\vec{\nu}), \\
    F^-(\vec{\mu'}) &\leq F^-(\vec{\mu}),
  \end{align*}
  \item [(b)]  
  \begin{align*}
    F^+(\vec{\nu''}) &\geq F^+(\vec{\nu}), \\
    F^-(\vec{\mu''}) &< F^-(\vec{\mu}).
  \end{align*}
 \end{itemize}
\end{theorem}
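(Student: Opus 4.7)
The plan is as follows. The four inequalities split into two easy strict ones and two subtler weak ones, and I treat part (a); part (b) is the dual argument obtained by swapping the roles of $U,D$ and of the endpoints. For the strict inequality $F^+(\vec{\nu'}) > F^+(\vec{\nu})$, I observe that the hypothesis $U\vec{\nu} \in \mathcal{R}_{\vec{\mu'}, \vec{\nu'}}$ combined with Lemma~\ref{lem:confinement}(a) yields $F^+(U\vec{\nu}) \leq F^+(\vec{\nu'})$, while the AQS compatibility condition \eqref{eqn:stabilityII} gives $F^+(U\vec{\nu}) > F^+(\vec{\nu})$; chaining these produces the claim. The analogous strict inequality $F^-(\vec{\mu''}) < F^-(\vec{\mu})$ in (b) follows from $D\vec{\mu} \in \mathcal{R}_{\vec{\mu''}, \vec{\nu''}}$, Lemma~\ref{lem:confinement}(a), and \eqref{eqn:stabilityIV}.

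For the weak inequality $F^-(\vec{\mu'}) \leq F^-(\vec{\mu})$, I argue by contradiction. Suppose $F^-(\vec{\mu'}) > F^-(\vec{\mu})$ and pick $F^\ast \in (F^-(\vec{\mu}), F^-(\vec{\mu'}))$ close enough to $F^-(\vec{\mu})$ that $\mathcal{D}[F^\ast]\vec{\nu} = \vec{\mu}$. Apply the same monotonically decreasing forcing ending at $F^\ast$ to two trajectories starting respectively at $\vec{\nu}$ and $U\vec{\nu}$, both stable at an initial field slightly below $F^+(\vec{\nu})$. Since $\vec{\nu} \preceq U\vec{\nu}$ (from $\vec{\nu} \prec_U U\vec{\nu}$ via \eqref{eqn:NCU}) and the two forcings coincide, the no-passing property gives $\vec{\mu} = \mathcal{D}[F^\ast]\vec{\nu} \preceq \mathcal{D}[F^\ast]U\vec{\nu}$. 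Because $F^\ast < F^-(\vec{\mu'})$ and $\vec{\mu'} \in D^\ast U\vec{\nu}$ (by Prop.~\ref{prop:munuorbitLemma} applied to the maximal loop $(\vec{\mu'}, \vec{\nu'})$), the right-hand side lies strictly past $\vec{\mu'}$ on the orbit $D^\ast U\vec{\nu}$, so $\mathcal{D}[F^\ast]U\vec{\nu} \prec_D \vec{\mu'}$ and thus $\prec \vec{\mu'}$ in the compatible partial order. Chaining yields $\vec{\mu} \prec \vec{\mu'}$ strictly.

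The next step promotes $\vec{\mu} \prec \vec{\mu'}$ to the dynamic ordering $\vec{\mu} \in D^\ast \vec{\mu'}$ via Lemma~\ref{lem:NP}(a) applied to the triplet $\vec{\mu} \prec \vec{\mu'} \prec \vec{\nu}$: using $\vec{\mu} \in D^\ast \vec{\nu}$ from the loop $(\vec{\mu},\vec{\nu})$, the lemma gives $\vec{\mu} \in D^\ast \vec{\mu'}$, i.e.\ $\vec{\mu} \prec_D \vec{\mu'}$. Maximality of the destination loop $(\vec{\mu'}, \vec{\nu'})$, namely part (b) of Def.~\ref{def:maxLoop}, then forces $U^\ast \vec{\mu} \cap \mathcal{R}_{\vec{\mu'}, \vec{\nu'}} = \emptyset$, which contradicts the fact that $U\vec{\nu} \in U^\ast \vec{\mu}$ and simultaneously $U\vec{\nu} \in \mathcal{R}_{\vec{\mu'}, \vec{\nu'}}$ by hypothesis. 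This concludes part (a), and part (b) follows by the dual argument.

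The main obstacle I anticipate is ensuring that the triplet $\vec{\mu} \prec \vec{\mu'} \prec \vec{\nu}$ is legitimate, which requires $\vec{\mu'} \prec \vec{\nu}$ in the compatible order; a priori we only have $\vec{\mu'} \prec U\vec{\nu}$ and $\vec{\nu} \prec U\vec{\nu}$, leaving $\vec{\mu'}$ and $\vec{\nu}$ potentially incomparable. Resolving this will likely require an auxiliary NP comparison---for instance between a trajectory held at $\vec{\nu}$ and one descending from $U\vec{\nu}$ to $\vec{\mu'}$---to force $\vec{\mu'} \preceq \vec{\nu}$, or a replacement of $\vec{\nu}$ by an alternative intermediate $\vec{c} \in U^\ast \vec{\mu}$ satisfying both $\vec{\mu'} \prec \vec{c}$ and $\vec{\mu} \in D^\ast \vec{c}$. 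Bridging the compatible partial order $\preceq$ and the dynamic order $\preceq_D$ at this junction is where the real delicacy of the argument lies.
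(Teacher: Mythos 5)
Your handling of the two strict inequalities is correct, and your strategy for the weak inequality $F^-(\vec{\mu'})\leq F^-(\vec{\mu})$ --- comparing two trajectories that descend in parallel from $\vec{\nu}$ and $U\vec{\nu}$ under one and the same forcing --- is genuinely different from the paper's, which instead runs a single down--up excursion from a state of $(\vec{\mu},\vec{\nu})$ and uses the monotone envelopes $F^{(1)}_t=\min_{t\le t'\le T}F_{t'}$, $F^{(2)}_t=\max_{0\le t'\le t}F_{t'}$ to pin the final state, then invokes maximality of the destination loop. Everything in your argument up to $\vec{\mu}\preceq\mathcal{D}[F^\ast]U\vec{\nu}\prec\vec{\mu'}$ is sound (including the choice of $F^\ast$ and the verification that $\vec{\nu}$ and $U\vec{\nu}$ share a common stable initial field). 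But the gap you flag at the end is real, and neither of your proposed repairs closes it: there is no NP comparison that delivers $\vec{\mu'}\preceq\vec{\nu}$, since you would need a trajectory terminating at $\vec{\mu'}$ that is dominated by one held at $\vec{\nu}$, and no such pair is available; and the natural replacement $\vec{c}=U\vec{\nu}$ does satisfy $\vec{\mu'}\preceq U\vec{\nu}$ but can never satisfy $\vec{\mu}\in D^\ast U\vec{\nu}$, because that would make $(\vec{\mu},U\vec{\nu})$ a loop strictly containing the maximal loop $(\vec{\mu},\vec{\nu})$. So Lemma~\ref{lem:NP}(a) applied to $\vec{\mu'}$ is a dead end.

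The repair is to stop aiming for $\vec{\mu}\prec_D\vec{\mu'}$ altogether and instead apply Lemma~\ref{lem:NP}(b) to the state you already control. Set $\vec{\eta}=\mathcal{D}[F^\ast]U\vec{\nu}$; under the contradiction hypothesis you have established $\vec{\eta}\prec_D\vec{\mu'}$ and $\vec{\mu}\preceq\vec{\eta}\prec\vec{\mu'}\preceq U\vec{\nu}$, while $U\vec{\nu}\in U^\ast\vec{\mu}$ because $\vec{\nu}\in U^\ast\vec{\mu}$. If $\vec{\mu}=\vec{\eta}$, then $\vec{\mu}\prec_D\vec{\mu'}$ and Def.~\ref{def:maxLoop}(b) applied to the maximal loop $(\vec{\mu'},\vec{\nu'})$ gives $U^\ast\vec{\mu}\cap\mathcal{R}_{\vec{\mu'},\vec{\nu'}}=\emptyset$, contradicting $U\vec{\nu}\in U^\ast\vec{\mu}\cap\mathcal{R}_{\vec{\mu'},\vec{\nu'}}$. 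Otherwise the triplet $\vec{\mu}\prec\vec{\eta}\prec U\vec{\nu}$ is strict and distinct, and Lemma~\ref{lem:NP}(b) with $U\vec{\nu}$ as the top state yields $U\vec{\nu}\in U^\ast\vec{\eta}$, so $U^\ast\vec{\eta}\cap\mathcal{R}_{\vec{\mu'},\vec{\nu'}}\neq\emptyset$ --- again contradicting Def.~\ref{def:maxLoop}(b), now applied to $\vec{\eta}\prec_D\vec{\mu'}$. With this substitution your proof closes and remains an independent alternative to the paper's; without it, the passage from the compatible order $\prec$ back to the dynamic order is missing and no contradiction is actually reached.
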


Theorem \ref{thm:NPInterloop} states that under no-passing all $U$- and $D$-inter-loop transitions are type I. Thus under simple forcing,  
the sequence of type I $U$-inter-loop transition ensures that the resulting maximal loop traps $\hat{F}^+$. The subsequent sequence of 
$D$-inter-loop transitions ensure that $\hat{F}^-$ is trapped as well. A trapping loop is therefore always reached by the end of the 
first driving period. 

In the absence of no-passing, $\ell$AQS-A limit-cycles involving more than one maximal loop are possible and give rise to {\em subharmonic} response \cite{Deutschetal2003,Nagel2017}. 
An  example is shown in Fig.~\ref{fig:LoopTransGraph} (e) and (f) and we conclude with a realization of this.

\begin{example}[A subharmonic limit-cycle]
\label{ex:subharmonic} 

Consider the limit-cycle shown in Fig.~\ref{fig:LoopTransGraph} (f) and choose the trapping fields of the states  as
\begin{eqnarray}
 F(1) &= (-5,0), \ \ \ F(2) = (-3,1), \nonumber \\ 
 F(3) &= (-2,3), \ \ \ F(4) = (-1,5), \nonumber \\
 F(5) &= (-5,1), \ \ \ F(6) = (-3,2), \nonumber \\
 F(7) &= (-3,4), \ \ \ F(8) = (-2,5), \nonumber
\end{eqnarray}
where the first (second) values on the right hand sides give the trapping field in the negative (positive) direction. 
The trapping fields for the maximal loops $A, B, C$, and $D$ are therefore:
\begin{eqnarray}
 F(A) &= (-5,1), \ \ \ F(B) = (-2,5), \nonumber \\
 F(C) &= (-5,2), \ \ \ F(D) = (-3,5). \nonumber
\end{eqnarray}
When we periodically drive the system with fields $\hat{F}^\pm=\pm4$ and start in state $1$ of the maximal loop $A$, the system will pass through the sequence of maximal loops given by
$\mathcal{F}[\hat{F}^-, \hat{F}^+]A=(A,B,C,D,A,\cdots)$, 
returning to the maximal loop $A$ every second driving 
cycle, as is readily checked. The period of the limit cycle is thus $2T$ and we have subharmonic response. 
Examples of subharmonic response with higher multiples of the driving period are readily constructed. $\maltese$
\end{example}

\section{Summary and Discussion}
\label{sec:discussion}

We have developed a graph theoretical approach to describe discrete state transitions in driven athermal disordered systems. 
Two maps, $U$ and $D$, capture the transitions between quasi-static states under increases and decreases of a uniform driving 
force. The athermal quasi-static (AQS) nature of the dynamics, 
implies that the functional graphs of these maps are acyclic. We have termed the set of quasi-static states along with the two maps 
capturing the adiabatic response under force changes, {\em AQS automata} (AQS-A). The hysteretic response of the AQS-A depends on the way in which the 
two acyclic maps $U$ and $D$ act on the common set of states. 

In particular, our interest has been in the return-point-memory (RPM) property. In  
the context of magnets, RPM is defined with respect to an intrinsic partial order on the spin configurations along with a dynamics that 
preserves this partial order, the no-passing (NP) property\cite{NoPassing,Sethna93}. While NP is a sufficient condition for RPM, 
it is not necessary. As we have demonstrated, the acyclic maps $U$ and $D$ provide an alternative partial order, the {\em dynamic partial 
order} Def.~\ref{def:dynPO}, in terms of which the RPM property can be defined, Def.~\ref{def:RPMII}. The RPM property defined in this 
way by-passes the need of invoking the NP property. The case when RPM follows as a result of NP becomes just a special case. 
We have called this the $\ell$RPM property in order to emphasize its detachment from NP. 

We then turned to the analysis of the state transitions graphs 
of AQS-A with the $\ell$RPM property. Central to our analysis is the notion of a loop: a pair of states $\vec{\mu}$ and 
$\vec{\nu}$ mapped into each other under repeated applications of $U$, respectively $D$, Def.~\ref{def:loopdef}. 
The $\ell$RPM property prescribes the 
hierarchy of sub loops and the transition among the states associated with a loop, the {\em intra-loop} transitions. This hierarchy can be understood in terms of 
a recursive partition procedure that breaks a loop down into disjoint constituent loops and has a natural representation in terms of 
an ordered tree, Thm.~\ref{prop:mert}. Thus $\ell$RPM governs the intra-loop structure of the state transition graph. 
We analyzed next the inter-loop transitions of the AQS-A. Basic to this description is the notion of a maximal 
loop, Def.~\ref{def:maxLoop}. A maximal loop is a loop that is not a sub loop of any other loop. Each configuration of the AQS-A 
is associated with precisely one maximal loop and thus the maximal loops partition the set of quasi-static configurations. 
This naturally leads to a higher level description of the AQS-A in terms of transitions between maximal loops, the inter-loop transition 
graph, Def.~\ref{def:interLoop}. We showed that while the $\ell$RPM property strongly constrains the intra-loop transitions, it does 
not severely restrict the inter-loop transitions. Such restrictions turn out to be the consequence of additional features, such as the 
NP property.  

In particular, the NP property not only establishes RPM, but also imposes additional constraints on the inter-loop transitions, 
Thm.~\ref{thm:NPInterloop}. At the same time, we have demonstrated that features of the 
inter-loop transition graph bear directly on the length of the transient response when the AQS-A is subject to a time-periodic driving force. 
As is well-known \cite{Sethna93}, AQS-A with the NP property exhibit extremely short transients, lasting at most one period.  
On the other hand, when $\ell$RPM is present but the NP property does not hold, the transients can be arbitrarily long and the response can be even 
{\em subharmonic}, {\em cf.} panels (c) and (e) of Fig.~\ref{fig:LoopTransGraph}. 

Thus the following picture emerges for the dynamics of AQS-A with the $\ell$RPM property: $\ell$RPM organizes state transition at the 
``local'' intra-loop level of maximal loops and their hierarchy of sub-loops, while leaving transitions at the inter-loop largely unconstrained.

With the possibility of experimentally observing microstates in a variety of driven disordered systems, such as arrays of ferromagnetic 
nanoislands \cite{Gilbert2016}, or colloidal spin ice systems 
\cite{Hanetal2008,Libaletal2012}, it should be possible to determine how closely the intra-loop structure resembles 
predictions of the corresponding $\ell$AQS-A, Theorems \ref{prop:mert} and \ref{thm:planar}. This would be particularly interesting, since some of 
these disordered systems exhibit RPM, but apparently not NP. 

Our results indicate how in such systems one could test for the simultaneous absence of NP and  
presence of RPM,  namely by identifying $\ell$RPM loops for which the 
marginality condition, Def.~\ref{def:loopmarginality}, is violated, 
since -- as we have seen -- NP implies marginality. 
Moreover, the systems mentioned above exhibit RPM  only for a range of tunable parameters. 
One could also track experimentally how the corresponding state transition graph changes as system parameters are varied and 
these system are thereby moved  in and out of a regime where RPM holds. 


From a mathematical point of view it is also of interest to ask for the properties of the decomposition trees 
describing  the standard partition of  maximal loops into their sub loops. Since the mappings $U$ and $D$ underlying 
these trees are random objects, so are the trees themselves. In particular, what happens to the hierarchical organization of 
the intra-loop structure in the limit that the system size becomes arbitrarily large?
What are the statistical properties of such trees?

\subsection*{Acknowledgments}

The authors would like to thank M. I\c{s}eri and D.C. Kaspar for many fruitful exchanges. They also acknowledge discussions with 
B. Behringer, A. Bovier, K. Dahmen, N. Keim, J. Krug, C. Maloney, A. A. Middleton, S. Nagel, A. Rosso, S. Sastry, K. Sekimoto, 
J. Sethna, D. Vandembroucq, T. A. Witten, and A. Y{\i}lmaz. 
Many of these took place during the Memory Formation in Matter program of KITP and the authors thank KITP for 
its kind hospitality.


\bibliographystyle{unsrt}%
\bibliography{toy_model}%

\appendix

\section{Proofs}
\label{app:proofs}

\begin{proof}[Lemma \ref{lem:NP}]
 We shall only prove part (a) of the lemma, since the proof of part (b) follows the same reasoning. By the stability condition 
 \eqref{eqn:stabilityIV}, there exist forces $F_T < F_0$ such that $\vec{\mu}$ is stable at $F_T$ and $\vec{\nu}$ is stable 
 at $F_0$. Let $(F_3(t))_{0 \leq t \leq T}$ be a forcing monotonously decreasing from $F_0$ to $F_T$. Likewise let 
 $F_1(t))_{0 \leq t \leq T}$ be a constant force with $F_1(t) = F_T$. Then for all $0 \leq t \leq T$,  
 $F_1(t) \leq F_3(t)$. Apply $F_1(t)$ to $\vec{\mu}$ and $F_3(t)$ to $\vec{\nu}$. By the choice for $F_3(t)$, the 
 condition that $\vec{\mu} \in D^*\vec{\nu}$ and \eqref{eqn:FDaction}, it follows that at 
 time $t = T$ we have $\vec{\mu}(T) = \vec{\mu} = \vec{\nu}(T)$. Now apply $F_1(t)$ to $\vec{\sigma}$. The no-passing 
 property implies that $\vec{\mu}(T) \preceq \vec{\sigma}(T)$.  Likewise, from the application of $F_3(t)$ we have 
 $ \vec{\sigma}(T) \preceq \vec{\nu}(T)$. But $\vec{\mu}(T) = \vec{\nu}(T) = \vec{\mu}$ and it follows that  
 $\vec{\sigma}(T) = \vec{\mu}$.
\end{proof}

\begin{proof}[Lemma \ref{lem:mutreach}]
 Note that since the two states $\vec{\sigma}_1, \vec{\sigma}_2 \in \mathcal{R}$ are reachable, there exist field histories 
 $\Phi_1$ and $\Phi_2$ that map $\vec{\omega}$ to these states. However, $\vec{\omega}$ is an absorbing  state and 
 there exists a positive integer $n$, such that $U^n \vec{\sigma}_1 = \vec{\omega}$, thus starting from $\vec{\sigma}_1$ and applying 
 the field-history $U^n$ followed by $\Phi_2$, will lead to the state $\vec{\sigma}_2$. By the 
same argument, $\vec{\sigma}_1$ is field-reachable from $\vec{\sigma}_2$, so that any two reachable states are mutually 
field-reachable. 

 To show that mutual reachability is an equivalence relation, we need to show reflexivity, symmetry and transitivity. Reflexivity is trivial
  and symmetry has already been shown above.  Transitivity is shown similarly: if $\vec{\sigma}_2$ is field-reachable 
  from $\vec{\sigma}_1$, and $\vec{\sigma}_3$ is field-reachable from $\vec{\sigma}_2$, then it follows that $\vec{\sigma}_1$ 
  is field-reachable from $\vec{\sigma}_3$: we simply consider a field-history that leads takes $\vec{\sigma}_1$ via $\vec{\sigma}_2$ to 
  $\vec{\sigma}_3$. Using the symmetry of the relation, it follows that $\vec{\sigma}_1$ and $\vec{\sigma}_3$ are mutually reachable. 
  Thus the set $\mathcal{R}$ is an equivalence class under mutual reachability. 
  Closeness of $\mathcal{R}$ under $U$ and $D$ follows by considering field histories to which we append a $U$ or $D$ step. 
  The appended field-history necessarily produces a reachable state. 
\end{proof}


In preparation for the proof of Proposition~\ref{prop:munuorbitLemma}, we need to 
first establish some intermediate results. Given a loop $(\vec{\mu},\vec{\nu})$ we first construct sequences of nested sub-loops and derive the order 
relations amongst its endpoints. Nested sub loop sequences can be associated with $(\vec{\mu},\vec{\nu})$-confined field-histories and we show that these 
satisfy Proposition~\ref{prop:munuorbitLemma}. The proof of the proposition then follows by observing that any $(\vec{\mu},\vec{\nu})$-reachable 
state has a field-history that is associated with a nested sub loop sequence and the result follows.  

Let $(\vec{\mu},\vec{\nu})$ be the endpoints of a loop and  
consider an intermediate node $\vec{\mu}_1$ on its $D$-boundary. By the $\ell$RPM property $\vec{\mu}_1$ is the endpoint of 
the loop $(\vec{\mu}_1,\vec{\nu})$. Next, consider an intermediate state on the $U$-boundary of this loop and label it $\vec{\nu}_1$. The pair 
$(\vec{\mu}_1,\vec{\nu}_1)$ again forms the endpoints of a loop. Then, consider again a site $\vec{\mu}_2$ on its $D$-boundary to obtain the loop 
$(\vec{\mu}_2,\vec{\nu}_1)$. Continuing in this manner, by alternately picking intermediate boundary sites from the $U$ and $D$-boundaries of the 
previous loop results in the sequence of nested loops 
\begin{equation}
 (\vec{\mu},\vec{\nu}), (\vec{\mu}_1,\vec{\nu}), (\vec{\mu}_1,\vec{\nu}_1), (\vec{\mu}_2,\vec{\nu}_1), \ldots.
 \label{eqn:loopSeq}
\end{equation}
Since the subsequent state picked,  $\vec{\mu}_k$ or $\vec{\nu}_k$, is required to be an intermediate boundary state and the number of 
configuration is finite, this procedures has to terminate in a {\em 2-loop},  a pair of states 
 $(\vec{\upsilon}, \vec{\sigma})$ such that $U\vec{\upsilon} = \vec{\sigma}$ and $D\vec{\sigma} = \vec{\upsilon}_m$, whose boundary 
 does not posses intermediate states. The states $\vec{\mu}_i$, or $\vec{\nu}_i$ appearing in the above loop sequence will be called 
 {\em switch-back states}. It is convenient to regard the endpoints $\vec{\mu}$ and $\vec{\nu}$ 
as switch-back states as well and we set $\vec{\mu}_0 = \vec{\mu}$ and $\vec{\nu}_0 = \vec{\nu}$.
 The above nested loop sequence terminates in the 2-loop  $(\vec{\mu}_n, \vec{\nu}_n)$ or $(\vec{\mu}_{n+1}, \vec{\nu}_{n})$, for some integer $n \geq 0$.
Alternatively, we could have started with an intermediate node $\vec{\nu}_1$ on the $U$-boundary of $(\vec{\mu},\vec{\nu})$, leading 
to the nested sequence of loops
\begin{equation}
 (\vec{\mu},\vec{\nu}), (\vec{\mu},\vec{\nu}_1), (\vec{\mu_1},\vec{\nu}_1), (\vec{\mu_1},\vec{\nu}_2), \ldots. 
 \label{eqn:loopSeqomega}
\end{equation}  
Again, this sequence terminates with a pair of switch-back states forming a 2-loop whose endpoints are either $(\vec{\mu}_n, \vec{\nu}_n)$ or 
$(\vec{\mu}_n, \vec{\nu}_{n+1})$ for some integer $n \geq 0$. The nested sequence of loops is illustrated in Fig.~\ref{fig:LoopDec}

\begin{figure}[t!]
  \begin{center}
    \includegraphics[width = 0.8\textwidth]{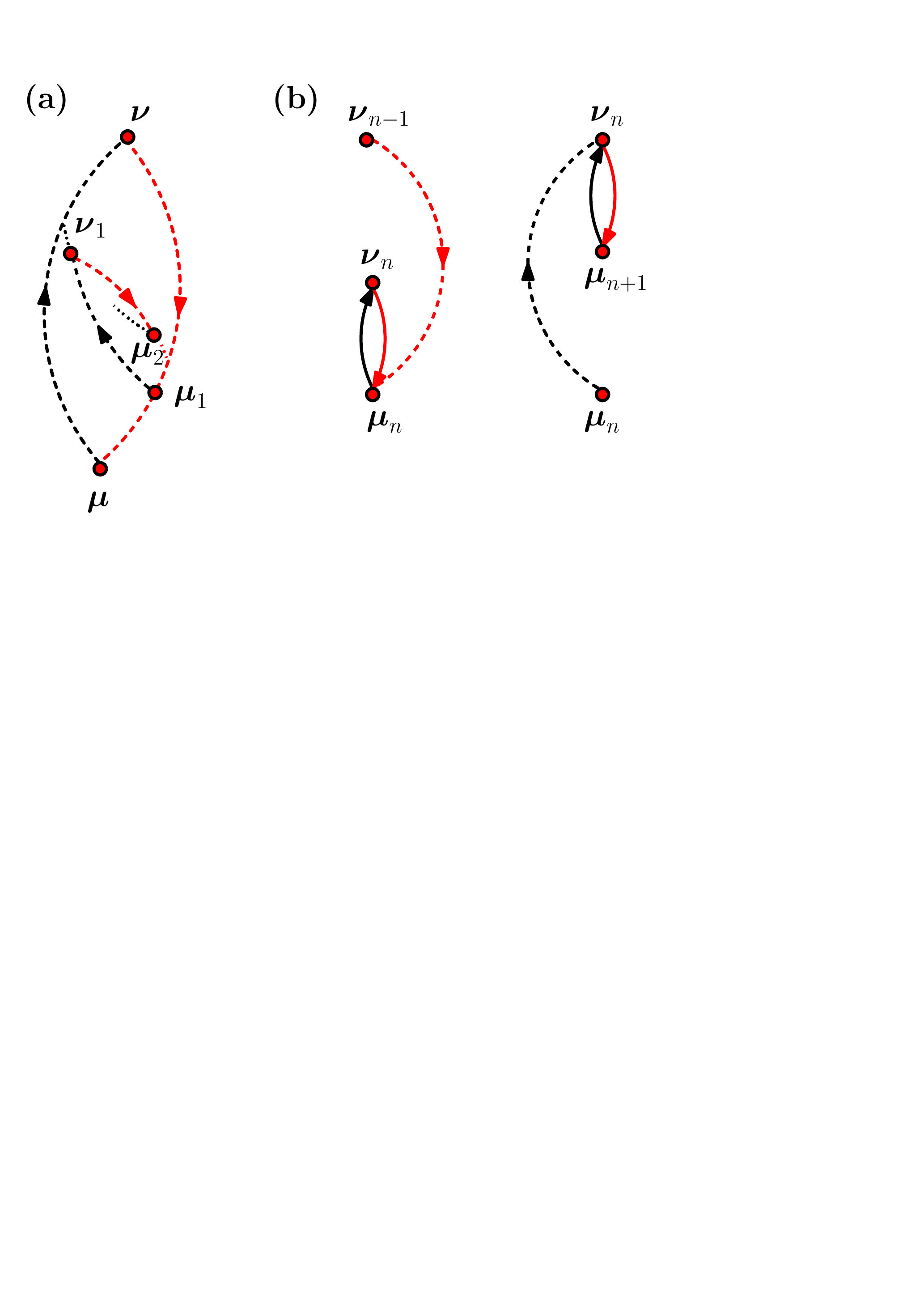} 
  \end{center}
  \caption{A nested sequence of sub loops associated with the loop $(\vec{\mu},\vec{\nu})$. (a) By the $\ell$RPM property, the pairs $(\vec{\mu}_1,\vec{\nu})$, 
  $(\vec{\mu}_1,\vec{\nu}_1)$, $(\vec{\mu}_2,\vec{\nu}_1), \ldots$, form sub loops, one nested inside the other. Due to the finiteness 
  of the number of states, this nesting must terminate at a {\em $2$-loop}. (b) The two possible terminations for the 
  decomposition of (a), depending on whether the $2$-loop is entered on a $U$ step (black solid arrow) or $D$ step (red solid arrow). 
  } 
  \label{fig:LoopDec}
\end{figure}

One can think of the sequence of nested loops as providing a field-history for the switch-back states. For example, considering the loop 
sequence \eqref{eqn:loopSeqomega} and assuming that it 
terminates in the 2-loop $(\vec{\mu}_p,\vec{\nu}_p)$, we have 
\begin{equation}
 \vec{\mu}_p = D^{m_{p}}U^{n_{p}} D^{m_{p-1}} \cdots U^{n_2} D^{m_1} U^{n_1} \vec{\mu}, 
 \label{eqn:FH1}
\end{equation}
for some non-negative integers $m_1,n_1, m_2, n_2, \ldots, m_p, n_p$.   
We will call the transit states  
\begin{align*}
 \vec{\nu}_1 &= U^{n_1} \vec{\mu}, \\
 \vec{\nu}_2 &= U^{n_2} D^{m_1} U^{n_1} \vec{\mu}, \\
     &\cdots \\
 \vec{\nu}_{p} &= U^{n_{p}} D^{m_{p-1}} \cdots U^{n_2} D^{m_1} U^{n_1} \vec{\mu}, 
\end{align*}
and 
\begin{align*}
 \vec{\mu}_1 &= D^{m_1} U^{n_1} \vec{\mu}, \\
 \vec{\mu}_2 &= D^{m_2}U^{n_2} D^{m_1} U^{n_1} \vec{\mu}, \\
     &\cdots \\
 \vec{\mu}_{p} &= D^{m_{p}}U^{n_{p}} D^{m_{p-1}} \cdots U^{n_2} D^{m_1} U^{n_1} \vec{\mu},
\end{align*}
the {\em switch-back states} of the loop sequence \eqref{eqn:loopSeqomega}. By construction, the field-history \eqref{eqn:FH1} is 
$(\vec{\mu}, \vec{\nu})$-confined.

\begin{lemma}[Ordering of switch-back states]\label{lem:sworder}
Let $(\vec{\mu},\vec{\nu})$ form the end points of a loop and assume that 
a sequence of nested loops  of the form \eqref{eqn:loopSeq} or \eqref{eqn:loopSeqomega} is given.   
The partial order of the switch-back states obeys  
\begin{equation}
 \vec{\mu} \prec_D \vec{\mu}_1 \prec_D \vec{\mu}_2 \prec_D \cdots  \prec_D \vec{\mu}_p < \vec{\nu}_q \prec_U \cdots 
 \prec_U \vec{\nu}_2 \prec_U \vec{\nu}_1 \prec_U \vec{\nu}, 
 \label{eqn:switchbackorder}
\end{equation}
where $(\vec{\mu}_p, \vec{\nu}_q)$ is the terminating 2-loop and $(p,q) \in \{(n,n), (n,n+1), (n+1,n)\}$.
\end{lemma}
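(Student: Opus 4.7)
The plan is to establish the chain \eqref{eqn:switchbackorder} by induction on the depth of the nested sequence, relying at each step on just two facts: (i) an intermediate boundary state of a loop lies strictly between the loop's endpoints with respect to the orbit-order of the boundary it sits on, and (ii) both $\prec_U$ and $\prec_D$ are transitive (they are partial orders by Def.~\ref{def:dynPO}).

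I will focus on sequence \eqref{eqn:loopSeqomega}; the argument for \eqref{eqn:loopSeq} is completely symmetric under exchange of $U$ with $D$. Set $\vec{\mu}_0 := \vec{\mu}$ and $\vec{\nu}_0 := \vec{\nu}$. For the inductive step, assume the result
\[
 \vec{\mu} \prec_D \vec{\mu}_1 \prec_D \cdots \prec_D \vec{\mu}_k, \qquad
 \vec{\nu}_k \prec_U \cdots \prec_U \vec{\nu}_1 \prec_U \vec{\nu}
\]
has been obtained up to level $2k$, at which the current loop is $(\vec{\mu}_k, \vec{\nu}_k)$. By construction $\vec{\nu}_{k+1}$ is an intermediate state on the $U$-boundary of $(\vec{\mu}_k, \vec{\nu}_k)$, so Def.~\ref{def:loopdef} combined with \eqref{eqn:dynU} gives $\vec{\mu}_k \prec_U \vec{\nu}_{k+1} \prec_U \vec{\nu}_k$. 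Similarly, $\vec{\mu}_{k+1}$ is an intermediate state on the $D$-boundary of $(\vec{\mu}_k, \vec{\nu}_{k+1})$, yielding $\vec{\mu}_k \prec_D \vec{\mu}_{k+1} \prec_D \vec{\nu}_{k+1}$. Appending these strict inequalities to the already established chains and invoking transitivity advances the induction to level $2(k+1)$.

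Once the nesting reaches the terminating 2-loop $(\vec{\mu}_p, \vec{\nu}_q)$, Def.~\ref{def:loopdef} provides $\vec{\nu}_q \in U^*\vec{\mu}_p$ and $\vec{\mu}_p \in D^*\vec{\nu}_q$, which is exactly $\vec{\mu}_p < \vec{\nu}_q$ in the loop order of \eqref{eqn:loopOrder}. This glues the two one-sided chains together and produces the full relation \eqref{eqn:switchbackorder}. Termination itself is guaranteed because $\mathcal{S}$ is finite and each added switch-back is strictly distinct from the previous ones in the respective order, so no state can appear twice. The three index pairs $(p,q) \in \{(n,n), (n,n+1), (n+1,n)\}$ simply record which boundary accepted the final intermediate state before the procedure stalled for lack of further intermediate states on the current sub loop.

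There is no substantive obstacle: the lemma is essentially bookkeeping around the definition of intermediate states. The only point requiring mild care is to check that the ``intermediate'' qualifier at each level translates into strict rather than weak inequalities, which follows directly because intermediate states are by definition distinct from the endpoints of the sub loop on whose boundary they lie.
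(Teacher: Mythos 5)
Your proof is correct and follows essentially the same route as the paper's: at each level the newly chosen switch-back state is an intermediate boundary state of the current sub loop, which gives the strict one-step relations $\vec{\mu}_k \prec_D \vec{\mu}_{k+1}$ and $\vec{\nu}_{k+1} \prec_U \vec{\nu}_k$, and the terminating $2$-loop supplies $\vec{\mu}_p < \vec{\nu}_q$. Your version merely makes the induction and the use of transitivity explicit, which the paper leaves implicit.
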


\begin{proof}
 From the construction of the nested loop sequence, \eqref{eqn:loopSeq} or \eqref{eqn:loopSeqomega}, it immediately follows that $\vec{\mu}_{i+1}$ is on the 
 $D$-boundary of a loop whose lower endpoint is $\vec{\mu}_i$, hence $\vec{\mu}_{i} \prec_D \vec{\mu}_{i+1}$. A similar 
 argument yields that $\vec{\nu}_{i+1} \prec_U \vec{\nu}_i$. Finally, the terminating $2$-loop of the decomposition has as lower and upper endpoints 
 the states $\vec{\mu}_p$ and $\vec{\nu}_q$, respectively. Thus $\vec{\mu}_p < \vec{\nu}_q$ and \eqref{eqn:switchbackorder} follows.
\end{proof}

We consider the $U$ and $D$ orbits of the switch-back states associated with a nested loop sequence and show that they have to pass 
through the corresponding endpoints of the confining loop. 

\begin{lemma}[Orbits of switch-back states]\label{lem:switchbackorbit}
Assume that the pair of states $(\vec{\mu},\vec{\nu})$ forms a loop. Let a nested loop sequence be given for which $\vec{\sigma}$ is a switch-back state. 
Then 
 \begin{equation}
 \vec{\nu} \in U^*\vec{\sigma}, \quad \mbox{and} \quad \vec{\mu} \in D^*\vec{\sigma}.
\end{equation}
\end{lemma}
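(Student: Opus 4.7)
The plan is to derive the statement as a direct consequence of the switch-back ordering established in Lemma \ref{lem:sworder} together with the transitivity of the dynamic partial orders $\preceq_U$ and $\preceq_D$ (Def.~\ref{def:dynPO}). The cases $\vec{\sigma} = \vec{\mu}$ and $\vec{\sigma} = \vec{\nu}$ are trivial since $(\vec{\mu},\vec{\nu})$ is by assumption a loop, so we may concentrate on the strictly intermediate switch-back states $\vec{\mu}_1,\dots,\vec{\mu}_p$ and $\vec{\nu}_1,\dots,\vec{\nu}_q$.

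First I would handle a $D$-side switch-back state $\vec{\mu}_i$. By Lemma \ref{lem:sworder} we have $\vec{\mu} \preceq_D \vec{\mu}_1 \preceq_D \cdots \preceq_D \vec{\mu}_i$, which together with the transitivity of $\preceq_D$ immediately yields $\vec{\mu} \preceq_D \vec{\mu}_i$, i.e.\ $\vec{\mu} \in D^*\vec{\mu}_i$ by the defining equivalence \eqref{eqn:dynD}. For the claim $\vec{\nu} \in U^*\vec{\mu}_i$, I would use the construction of the nested loop sequence: the state $\vec{\mu}_i$ appears as the lower endpoint of some sub-loop $(\vec{\mu}_i,\vec{\nu}_k)$ in the chain \eqref{eqn:loopSeq}--\eqref{eqn:loopSeqomega}, with $k \in \{0,1,\dots,q\}$ (where $\vec{\nu}_0 = \vec{\nu}$). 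By definition of a loop this gives $\vec{\nu}_k \in U^*\vec{\mu}_i$, that is, $\vec{\mu}_i \preceq_U \vec{\nu}_k$. Combining with the upper half of the ordering in Lemma \ref{lem:sworder}, $\vec{\nu}_k \preceq_U \cdots \preceq_U \vec{\nu}_1 \preceq_U \vec{\nu}$, transitivity of $\preceq_U$ yields $\vec{\mu}_i \preceq_U \vec{\nu}$, i.e.\ $\vec{\nu} \in U^*\vec{\mu}_i$ via \eqref{eqn:dynU}.

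The treatment of a $U$-side switch-back state $\vec{\nu}_j$ is entirely symmetric: the ordering gives $\vec{\nu}_j \preceq_U \vec{\nu}$ directly, hence $\vec{\nu} \in U^*\vec{\nu}_j$; while $\vec{\nu}_j$ is the upper endpoint of some sub-loop $(\vec{\mu}_l, \vec{\nu}_j)$ in the nested sequence, so $\vec{\mu}_l \in D^*\vec{\nu}_j$, and combining this with $\vec{\mu} \preceq_D \vec{\mu}_l$ from Lemma \ref{lem:sworder} gives $\vec{\mu} \in D^*\vec{\nu}_j$ by transitivity of $\preceq_D$.

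The only mildly technical step is the bookkeeping needed to identify, for each switch-back state $\vec{\mu}_i$ or $\vec{\nu}_j$, the partner endpoint $\vec{\nu}_k$ or $\vec{\mu}_l$ with which it forms a loop in the nested sequence; once that is read off the construction there is no further content. In particular no use of the $\ell$RPM property beyond what has already been absorbed into Lemma \ref{lem:sworder} is required, so I do not anticipate a genuine obstacle.
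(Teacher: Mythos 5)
Your proposal is correct and follows essentially the same route as the paper's own proof: both identify the switch-back state as an endpoint of a sub-loop in the nested sequence, invoke the ordering of Lemma \ref{lem:sworder}, and conclude by transitivity of $\preceq_U$ and $\preceq_D$. No gaps.
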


\begin{proof}
Since $\vec{\sigma}$ is a switch-back state it is either the lower or upper endpoint of at least one loop in the sequences 
\eqref{eqn:loopSeq} or \eqref{eqn:loopSeqomega}. Assume first that it is a lower endpoint, so that 
\[
 \vec{\sigma} = \vec{\mu}_j
\]
for some $j \geq 0$ and the sub loop containing it be of the form $(\vec{\sigma}, \vec{\nu}_k)$, for some corresponding integer $k$. 
By Lemma \ref{lem:sworder} it follows then that $\vec{\mu} \prec_D \vec{\mu_1} \prec_D \cdots \prec_D \vec{\mu_j} = \vec{\sigma}$. 
But this is equivalent to $\vec{\mu} \in D^*\vec{\sigma}$. Likewise, since $(\vec{\sigma}, \vec{\nu}_k)$ forms a loop 
$\vec{\sigma} \prec_U \vec{\nu}_k$ and again from Lemma \ref{lem:sworder} and the transitivity of $\prec_U$ it follows that 
$\vec{\sigma} \prec_U \vec{\nu}$ and hence $\vec{\nu} \in D^*\vec{\sigma}$. 
 
The case where $\vec{\sigma}$ is the upper endpoint of a sub loop in the sequence proceeds along similar lines. 
\end{proof}

By construction, given a loop $(\vec{\mu}, \vec{\nu})$ and a sequence of nested sub loops, the corresponding switch-back states are 
all field-reachable from the endpoints. In preparation for Lemma \ref{lem:switchbackreachable} and Prop.~\ref{prop:algTermination}, it
is useful to introduce the notion of a {\em reduced field-history}. 
\begin{definition}[Reduced field-history of switch-back states]\label{def:reducedFH}
 Let a loop $(\vec{\mu}, \vec{\nu})$ and a sequence of nested sub loops be given for which $\vec{\sigma}$ is a switch-back state. The  
 field-history of $\vec{\sigma}$ will be called a {\em reduced field-history}. It has one of the four forms below, 
 \begin{align*}
  \vec{\sigma} = U^{n_{k}} D^{m_{k-1}} \cdots U^{n_2} D^{m_1} U^{n_1} \vec{\mu} &\equiv U^{n_{k}} \vec{\mu}_{k-1},  \\
  \vec{\sigma} = D^{m_{k}}U^{n_{k}}  \cdots U^{n_2} D^{m_1} U^{n_1} \vec{\mu} &\equiv D^{m_{k}} \vec{\nu}_k,   \\
  \vec{\sigma} = U^{n_{k}} D^{m_{k}} \cdots U^{n_2} D^{m_1}  \vec{\nu} &\equiv U^{n_{k}} \vec{\nu}_{k},   \\
  \vec{\sigma} = D^{m_{k}}U^{n_{k-1}} \cdots U^{n_2} D^{m_1} \vec{\nu} &\equiv D^{m_{k}} \vec{\nu}_{k-1},
 \end{align*}
  for some integer $k > 0$, and denoting its switch-back states  as $\vec{\mu}, \vec{\mu}_1, \ldots, \vec{\mu}_p$ and $\vec{\nu}, \vec{\nu}_1, \ldots, \vec{\nu}_q$, 
  their partial ordering is given by, {\em cf.} \eqref{eqn:switchbackorder}, 
  \begin{equation}
 \vec{\mu} \prec_D \vec{\mu}_1 \prec_D \vec{\mu}_2 \prec_D \cdots  \prec_D \vec{\mu}_p \prec_D \vec{\sigma} \prec_U \vec{\nu}_q \prec_U \cdots 
 \prec_U \vec{\nu}_2 \prec_U \vec{\nu}_1 \prec_U \vec{\nu},    
  \end{equation}
  with $(p,q) \in \{(k-1,k-1), (k-1,k), (k,k-1)\}$.
\end{definition}

From Def.~\ref{def:munuconfined} it follows that any switch-back state $\vec{\sigma}$, 
due to its association with some nested sub loop sequence of $(\vec{\mu},\vec{\nu})$, 
belongs to $\mathcal{R}_{\vec{\mu}, \vec{\nu}}$. We show next that the  converse is true as well: 

\begin{lemma}[Reachable states are switch-back states]\label{lem:switchbackreachable}
Let the pair $(\vec{\mu}, \vec{\nu})$ be the endpoints of a loop. The following two are equivalent:
\begin{itemize}
 \item [(a)] $\vec{\sigma} \in \mathcal{R}_{\vec{\mu}, \vec{\nu}}$.
 \item [(b)] $(\vec{\mu}, \vec{\nu})$ has admits a sequence of nested sub loops containing $\vec{\sigma}$ as a switch-back state.
\end{itemize}
\end{lemma}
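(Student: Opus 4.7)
The direction (b) $\Rightarrow$ (a) is the straightforward one. Suppose $\vec{\sigma}$ is a switch-back state of some nested sub-loop sequence of $(\vec{\mu}, \vec{\nu})$. By Definition \ref{def:reducedFH}, $\vec{\sigma}$ admits a reduced field-history whose intermediate switch-back states $\vec{\nu}_k, \vec{\mu}_k$ satisfy the strict ordering of Lemma \ref{lem:sworder}: each such state lies strictly $\prec_D$-above $\vec{\mu}$ and strictly $\prec_U$-below $\vec{\nu}$, hence is distinct from both endpoints. The remaining transit states of this reduced field-history are intermediate states along the $U$- or $D$-orbits joining consecutive switch-back states, and by Lemma \ref{lem:switchbackorbit} these orbits are confined to the sub-loops in the nested sequence and therefore remain strictly inside $(\vec{\mu}, \vec{\nu})$. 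Thus the reduced field-history is $(\vec{\mu}, \vec{\nu})$-confined in the sense of Definition \ref{def:munuconfined}, and $\vec{\sigma} \in \mathcal{R}_{\vec{\mu}, \vec{\nu}}$.

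For the converse (a) $\Rightarrow$ (b), I proceed by induction on the number $p$ of monotone segments in a $(\vec{\mu}, \vec{\nu})$-confined field-history \eqref{eqn:sigmaFH} of $\vec{\sigma}$. The trivial cases $\vec{\sigma} \in \{\vec{\mu}, \vec{\nu}\}$ are covered by taking $(\vec{\mu}, \vec{\nu})$ itself as the sequence. For $p = 1$, say $\vec{\sigma} = U^{n_1}\vec{\mu}$, the confinement condition forces $\vec{\sigma} \neq \vec{\nu}$ (and $n_1 > 0$ ensures $\vec{\sigma} \neq \vec{\mu}$), so $\vec{\sigma}$ is strictly intermediate on the $U$-boundary of $(\vec{\mu}, \vec{\nu})$; the $\ell$RPM property then yields the sub-loop $(\vec{\mu}, \vec{\sigma})$, and the one-step nested sequence $(\vec{\mu}, \vec{\nu}) \supset (\vec{\mu}, \vec{\sigma})$ realizes $\vec{\sigma}$ as a switch-back state. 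For the inductive step, the plan is to treat the switch-back states $\vec{\nu}_1, \vec{\mu}_1, \vec{\nu}_2, \ldots, \vec{\mu}_{p-1}, \vec{\sigma}$ of \eqref{eqn:sigmaSB} as the candidate endpoints of a nested sub-loop sequence, applying the $\ell$RPM property of Def.~\ref{def:RPMII} at each step to verify that consecutive pairs such as $(\vec{\mu}, \vec{\nu}_1)$, $(\vec{\mu}_1, \vec{\nu}_1)$, $(\vec{\mu}_1, \vec{\nu}_2)$, etc.\ are indeed loops, and that each newly introduced endpoint is strictly intermediate on the corresponding boundary of the preceding sub-loop (with $\vec{\nu}_k \neq \vec{\nu}$ and $\vec{\mu}_k \neq \vec{\mu}$ guaranteed by confinement).

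The main obstacle is that confinement alone does \emph{not} force the strict nesting property: a later switch-back state may overshoot an earlier one, e.g.\ $\vec{\nu}_1 \prec_U \vec{\nu}_{2}$, in which case $(\vec{\mu}_1, \vec{\nu}_{2})$ fails to be a sub-loop of $(\vec{\mu}_1, \vec{\nu}_1)$. To handle this, I show that any confined field-history can be simplified to a strictly nested one without changing the final state $\vec{\sigma}$. Concretely, if $\vec{\nu}_k \prec_U \vec{\nu}_{k+1}$, then by Prop.~\ref{prop:munuorbitLemma} applied to the loop $(\vec{\mu}, \vec{\nu}_{k+1})$ we have $\vec{\nu}_{k+1} \in U^*\vec{\mu}$, so the entire prefix $U^{n_{k+1}} D^{m_k} U^{n_k} \cdots U^{n_1}\vec{\mu}$ leading to $\vec{\nu}_{k+1}$ can be replaced by a single $U^{k'}\vec{\mu}$ reaching $\vec{\nu}_{k+1}$ directly, preserving both confinement (the replacement orbit remains in the $U$-boundary of $(\vec{\mu}, \vec{\nu})$) and the terminal state $\vec{\sigma}$. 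A symmetric simplification handles $\vec{\mu}_{k+1} \prec_D \vec{\mu}_k$. Each simplification strictly reduces the number of monotone segments, so iterating terminates in a confined field-history whose switch-back states are strictly nested. Applying the inductive argument to this simplified history then realizes $\vec{\sigma}$ as the terminal switch-back state of a nested sub-loop sequence of $(\vec{\mu}, \vec{\nu})$.
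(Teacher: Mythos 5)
Your direction (b) $\Rightarrow$ (a) is fine and agrees with the paper, and in (a) $\Rightarrow$ (b) you correctly isolate the crux: a confined field-history need not have nested switch-back states, because a later switch-back can overshoot an earlier one. But your repair of this is where the proof breaks. First, invoking Prop.~\ref{prop:munuorbitLemma} here is circular: in the paper that proposition is deduced \emph{from} Lemmas \ref{lem:switchbackorbit} and \ref{lem:switchbackreachable}, so it is not available inside this proof. Second, and more seriously, the conclusion you extract from it, $\vec{\nu}_{k+1}\in U^*\vec{\mu}$, is false in general. Take a history $\vec{\mu}\to\vec{\nu}_1\to\vec{\mu}_1\to\vec{\nu}_2\to\vec{\mu}_2\to\vec{\nu}_3\to\cdots$ whose prefix through $\vec{\mu}_2$ is nested but with $\vec{\nu}_2\prec_U\vec{\nu}_3\prec_U\vec{\nu}_1$. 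Then $\vec{\nu}_3$ sits on the segment of $U^*\vec{\mu}_1$ strictly between $\vec{\nu}_2$ and $\vec{\nu}_1$; nothing forces that segment to lie on $U^*\vec{\mu}$ (the two $U$-orbits need only merge at $\vec{\nu}_1$), so there need be no $k'$ with $U^{k'}\vec{\mu}=\vec{\nu}_3$ and your global prefix replacement cannot be carried out. Indeed $(\vec{\mu},\vec{\nu}_{k+1})$ need not be a loop at all, so ``Prop.~\ref{prop:munuorbitLemma} applied to the loop $(\vec{\mu},\vec{\nu}_{k+1})$'' has no content; and if it were a loop, $\vec{\nu}_{k+1}\in U^*\vec{\mu}$ would already be half of its definition, so the appeal to the proposition would be vacuous anyway.

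The paper's pruning is strictly local, and that is what makes it work. Locate the \emph{first} index $i$ at which nesting fails; up to that point the switch-backs form a genuine nested sub-loop sequence, so $(\vec{\mu}_{i-1},\vec{\nu}_{i-1})$ is a loop and hence $\vec{\nu}_{i-1}\in U^*\vec{\mu}_{i-1}$. Since also $\vec{\nu}_i=U^{n_i}\vec{\mu}_{i-1}\in U^*\vec{\mu}_{i-1}$, the two states are comparable on a common $U$-orbit, and the failure of $\vec{\nu}_i\prec_U\vec{\nu}_{i-1}$ forces $\vec{\nu}_i\in U^*\vec{\nu}_{i-1}\subseteq U^*\vec{\mu}_{i-2}$. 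One then deletes only the pair $\vec{\nu}_{i-1},\vec{\mu}_{i-1}$ and reconnects $\vec{\mu}_{i-2}$ to $\vec{\nu}_i$ by a single $U$-segment (passing through $\vec{\nu}_{i-1}$); the terminal state $\vec{\sigma}$ is untouched, and the overshooting state $\vec{\nu}_i$ in general survives as an intermediate switch-back that is \emph{not} on $U^*\vec{\mu}$. Each step removes two switch-backs, so the process terminates in a reduced field-history, and Lemma \ref{lem:sworder} together with Def.~\ref{def:reducedFH} then exhibits $\vec{\sigma}$ as a switch-back state of a nested sub-loop sequence. If you replace your global substitution by this local bypass and drop the appeal to Prop.~\ref{prop:munuorbitLemma}, your induction goes through.
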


\begin{proof}
 $(b) \Rightarrow (a)$: This has already been shown: the sequence of nested sub loops defines a $(\vec{\mu}, \vec{\nu})$-confined field-history that is moreover in the reduced form of 
 Def.~\ref{def:reducedFH}, hence 
$\vec{\sigma}$ is $(\vec{\mu},\vec{\nu})$-reachable. 

$(a) \Rightarrow (b)$: Let us start with the special case when $\vec{\sigma}$ is an 
endpoint of the loop, $\vec{\sigma} = \vec{\mu}$ or $\vec{\sigma} = \vec{\nu}$. 
By definition, $\vec{\mu}, \vec{\nu} \in \mathcal{R}_{ \vec{\mu}, \vec{\nu} }$, and in addition   
the endpoints are switch-back states of any loop decomposition. The claim follows. 

Now let us assume that $\vec{\sigma} \in \mathcal{R}_{ \vec{\mu}, \vec{\nu}}$ is not an endpoint of the loop. 
There exists a $(\vec{\mu},\vec{\nu})$-confined history leading to $\vec{\sigma}$. We show that this field-history 
can always be converted into a reduced field-history so that every $(\vec{\mu},\vec{\nu})$-reachable state $\vec{\sigma}$ 
has a reduced field-history. The claim follows then from Lemma \ref{lem:switchbackorbit}. 

Suppose first that the $(\vec{\mu},\vec{\nu})$-confined history
is of the form 
\eqref{eqn:sigmaFH}, with its switch-back states given by \eqref{eqn:sigmaSB}. The case where the field-history 
starts from $\vec{\nu}$, \eqref{eqn:sigmaFH2} proceeds along similar lines and we will therefore omit it. 
Consider the following algorithm:
\begin{itemize}
 \item[(i)] Start from the switch-back state $\vec{\nu}_1$ and find the first switch-back state of the $(\vec{\mu},\vec{\nu})$-confined field 
 history \eqref{eqn:sigmaFH} for which one of the inequalities $\vec{\nu}_i \prec_U \vec{\nu}_{i-1}$, or $\vec{\mu}_i \prec_D \vec{\mu}_{i-1}$ 
 does not hold. 
 \item[(ii)] Due to the $\ell$RPM property Def.~\ref{def:RPM}, $\vec{\nu}_{i-1} \in U^* \vec{\mu}_{i-1}$ and $\vec{\nu}_{i} \in U^* \vec{\mu}_{i-1}$. 
 Therefore, $\vec{\nu}_{i} \in U^* \vec{\nu}_{i-1}$. In other words we do not need the switch-back states $\vec{\nu}_{i-1}$ and $\vec{\mu}_{i-1}$ 
 and we bypass them. The new field-history becomes,
  \begin{equation}
    \vec{\sigma} = U^{n_p} D^{m_{p-1}} \cdots U^{n_{i}'} D^{m_{i-2}}  \cdots D^{m_1} U^{n_1} \vec{\mu},  
    \label{eqn:sigmaFH1}
 \end{equation}
 where $\vec{\nu}_{i-1}$ and $\vec{\mu}_{i-1}$ are no longer switch-back states. 
 \item[(iii)] Repeat the first two steps until the remaining switch-back states satisfy \eqref{eqn:switchbackorder}. 
\end{itemize}
The pruned field-history obtained in this way can be associated naturally with the initial steps of a nested sub loop sequence for which 
$\vec{\sigma}$ is a switch-back state. It is a reduced field-history. 
We have therefore shown that a loop-confined field-history can always be converted into a field-history 
associated with nested sub loop sequence. The claim follows.  
\end{proof}

\begin{proof}[Proposition \ref{prop:munuorbitLemma}]
The proof of the proposition follows from the application of Lemmas \ref{lem:switchbackorbit} and \ref{lem:switchbackreachable}. 
\end{proof}

\begin{proof}[Proposition \ref{prop:nestinglemma}]
 We only prove part (a) of the proposition. Part (b) follows from a similar argument. 
 
 (i) is a direct consequence of the RPM 
 property, Def.~\ref{def:RPM}, applied to the $(\vec{\mu}, \vec{\nu})$ -- loop. 
 
(ii) Let $\vec{\upsilon}$ be an interior node of the 
$(\vec{\mu}, \vec{\sigma})$ -- loop. By Corollary \ref{prop:munuorbitLemma}, we have that 
$\vec{\sigma} \in U^*\vec{\upsilon}$, moreover, since $\vec{\eta} \in U^*\vec{\sigma}$, we also have
\[
   \vec{\eta} \in U^*\vec{\upsilon}.
\]
Let us now assume that $\vec{\upsilon} \in D^*\vec{\eta}$. We show that this implies that 
$D^*\vec{\eta}$ passes through the boundary of the loop $(\vec{\mu},\vec{\sigma})$: by assumption, the pair 
of nodes $(\vec{\upsilon}, \vec{\eta})$ form the endpoints of a loop, but we have $\vec{\upsilon} \prec_U 
\vec{\sigma} \prec_U \vec{\eta}$. Invoking the RPM property it therefore follows that $\vec{\upsilon}  \in D^*\vec{\sigma}$, implying 
that $\vec{\upsilon}$ is a boundary node of the $D$-boundary of the $(\vec{\mu}, \vec{\sigma})$ -- loop. 
This is a contradiction. We therefore conclude that if 
$\vec{\upsilon} \in \mathcal{R}_{\vec{\mu}, \vec{\sigma}}$ and $\vec{\upsilon} \in D^*(U\vec{\sigma})$ then this implies 
that $\vec{\upsilon} \in D^*\vec{\sigma}$. 

\end{proof}

\begin{proof}[Proposition \ref{prop:bondarymon}] 
(i) Apply the Definition \ref{def:psidef} to the boundary states $\vec{\mu}$ and $\vec{\nu}$, noting that they are the endpoints of the loop. (ii) is a direct consequence 
of Prop. \ref{prop:nestinglemma}. In the following we assume that the boundary nodes $\vec{\nu}_i$ and $\vec{\mu}_k$ are 
indexed as in \eqref{eqn:nuidef} and \eqref{eqn:muidef}. Consider the map $\psi_-$ and note that 
when $\vec{\nu}_i = \vec{\nu}$ then there is no boundary point $\vec{\nu}_j$ on the loop such that $\vec{\nu}_i \prec_U \vec{\nu}_j$. 
Consequently, there is nothing to prove. We therefore assume that  $\vec{\nu}_i$ is an intermediate boundary point and suppose that $\psi_-(\vec{\nu}_i) = \vec{\mu}_k$ for some $D$-boundary point $\vec{\mu}_k$.  Consider now the $D$-orbit of $\vec{\nu}_{i+1} = U \vec{\nu}_i$. By Prop.~\ref{prop:nestinglemma}, 
the orbit $D^*\vec{\nu}_{i+1}$ can (a) either contain states on the $D$-boundary of the $(\vec{\mu}, \vec{\nu}_i)$ 
(but none of its interior states), or (b), be   
incident on the $D$-boundary of $(\vec{\mu}, \vec{\nu})$ at a state $\vec{\mu}_{k'} \succ_D \vec{\mu}_k$. In the former case we have 
$\psi_-(\vec{\nu}_i) = \psi_-(\vec{\nu}_{i+1}) = \vec{\mu}_k$, while in the latter case 
$\psi_-(\vec{\nu}_{i+1}) = \vec{\mu}_{k'} \succ_D \vec{\mu}_k = \psi_-(\vec{\nu}_i)$. 
In either case it follows that 
\begin{equation}
 \psi_-(\vec{\nu}_i) \preceq_D \psi_-(\vec{\nu}_{i+1}).
 \label{eqn:iip1}
\end{equation}
By part (i) of the proposition, inequality \eqref{eqn:iip1} holds also when $\vec{\nu}_{i+1} = \vec{\nu}$ is the endpoint. 
Using induction, it is then readily shown that inequality \eqref{eqn:iip1} holds when replacing $i$ 
successively  $i + 1, i+ 2, \ldots, j - 1$, that is, when considering the pair of adjacent boundary states $\vec{\nu}_{i+1}$ and $\vec{\nu}_{i+2}$, $\vec{\nu}_{i+2}$ and $\vec{\nu}_{i+3}$ {\em etc.} Thus the first part of claim (ii) is proven.  
The proof of the second part of (ii) follows by applying a similar argument 
for successive states $\vec{\mu}_i$ and $\vec{\mu}_{i+1}$ on the $D$-boundary of the loop $(\vec{\mu}, \vec{\nu})$. 
\end{proof}

\begin{proof}[Lemma \ref{lem:iminter}] 
(i) $\Rightarrow$ (ii): Since $\vec{\kappa}$ is an intersection point, it lies on both boundaries. Using Def. \ref{def:psidef} it is then shown 
that this implies (ii) (and (iii)). 
The implication (ii) $\Rightarrow$ (iii) and (iii) $\Rightarrow$ (i) follow likewise from Def. \ref{def:psidef}.   
\end{proof}

\begin{proof}[Lemma \ref{lem:invImages}]
 By definition, $\psi^{-1}_-(\vec{\mu}_i)$ is the set of boundary points  on the $U$-boundary that are mapped into 
 $\vec{\mu}_i$ under $\psi_-$. Among these boundary points there must be a least element with respect to the order 
 $\prec_U$. Suppose this is $\vec{\nu}_{k'}$. Now consider the image of the adjacent boundary point 
 $\vec{\nu}_{k'+1} = U \vec{\nu}_{k}$ under $U$. By Prop.~\ref{prop:bondarymon}, either $\psi_-(\vec{\nu}_{k'+1}) = 
 \vec{\mu}_i$, or $\psi_-(\vec{\nu}_{k'+1}) \succ_D \vec{\mu}_i$. In the first case $\vec{\nu}_{k'+1} \in \psi^{-1}_-(\vec{\mu}_i)$, in 
 the second case $\vec{\nu}_{k'+1} \notin \psi^{-1}_-(\vec{\mu}_i)$. The latter implies however that 
 $\vec{\nu}_{s} \notin \psi^{-1}_-(\vec{\mu}_i)$ for any boundary state $\vec{\nu}_{s} \succeq_U \vec{\nu}_{k'+1}$. Hence there 
 must be a $k \geq k'$ such that for all $k' \leq s \leq k$, $\psi_-(\vec{\nu}_{s}) = \vec{\mu}_i$. The pre-image  
 $\psi^{-1}_-(\vec{\mu}_i)$ is thus an interval. The proof of \eqref{eqn:downTriangleLoop_2} proceeds along similar lines and 
 assertion (i) is proven. 

 To prove assertion (ii) of the Lemma, observe that the pair $(\vec{\mu}, \vec{\nu}_{k'})$ forms a loop, since by the RPM 
 property $\vec{\mu} \in D^*\vec{\nu}_{k'}$. From \eqref{eqn:downTriangleLoop} it follows that $\vec{\mu}_i \in D^*\vec{\nu}_{k'}$, 
 but $\vec{\mu}_{i}$ is on the $D$-boundary of the loop $(\vec{\mu}, \vec{\nu}_{k'})$ and hence using the RPM property again, 
 $\vec{\nu}_{k'} \in U^*\vec{\mu}_i$ and hence the pair $(\vec{\mu}_i, \vec{\nu}_k)$ forms a loop. The proof that $(\vec{\mu}_{l'}, \vec{\nu}_j)$
 forms a loop is proven in an analogous manner. 
\end{proof}

\begin{proof}[Lemma \ref{lemma:stacking}]
The situation is illustrated in panel (a) of Fig.~\ref{fig:StackingProof}. By assumption, $\psi_-(\vec{\nu}_k) = \vec{\mu}_i$  
and by the monotonicity of the boundary map $\psi_-$, Prop.~\ref{prop:bondarymon}, it must be that  
$\psi_-(\vec{\nu}_{k+1}) \succ_D \vec{\mu}_i$. Suppose that 
$\psi_-(\vec{\nu}_{k+1}) = \vec{\mu}_l$, for some $l > i$. Let us assume that $l > i + 1$, so that there are boundary points on the $D$-boundary between 
$\vec{\mu}_i$ and $\vec{\mu}_l$. By the RPM property $\vec{\nu}_{k+1} \in U^*\vec{\mu}_l$. This implies that 
\[
\psi_+(\vec{\mu}_l) \preceq_U \vec{\nu}_{k+1}.
\]
Again, using the RPM property on the loop $(\vec{\mu}_l,\vec{\nu}_{k+1})$, it is easily shown that if $\vec{\nu}_k \in U^*\vec{\mu}_l$ this implies that 
$\vec{\nu}_k \in \psi_-^{-1}(\vec{\mu}_l)$, which contradicts our assumption, $\vec{\nu}_k \in \psi_-^{-1}(\vec{\mu}_i)$. Thus $\vec{\nu}_k \notin U^*\vec{\mu}_l$, 
and therefore $\psi_+(\vec{\mu}_l) \succ_U \vec{\nu}_k$. Combining both inequalities, we conclude that $\psi_+(\vec{\mu}_l) = \vec{\nu}_{k+1}$. 

Next, consider the 
orbits $U^*\vec{\mu}_{r}$ for $i < r < l$. Repeating the same argument, it is shown that 
$\psi_+(\vec{\mu}_r) \preceq_U \vec{\nu}_{k+1}$ and 
$\psi_+(\vec{\mu}_r) \succ_U \vec{\nu}_k$. Thus we conclude that $\psi_+(\vec{\mu}_r)= \vec{\nu}_{k+1}$ and in particular, 
$\psi_+(\vec{\mu}_{i+1}) = \vec{\nu}_{k+1}$. 

Along the way we have also shown that $\vec{\nu}_{k+1} \in U^*\vec{\mu}_{i+1}$ and $\vec{\mu}_{i+1} \in D^*\vec{\nu}_{k+1}$. Hence the pair $(\vec{\mu}_{i+1},\vec{\nu}_{k+1})$ are the  endpoints of a loop.  It remains to be shown that the loops $(\vec{\mu}_{i+1},\vec{\nu}_{k+1})$ and $(\vec{\mu},\vec{\nu}_{k})$ 
are disjoint. This is equivalent to showing that there is no field-history confined by the endpoints of one loop, {\em cf.} Defs.~\ref{def:munuconfined} and 
\ref{def:munureachable} that leads to a state of the other loops. As we have shown above, the set of boundary states of the loops are disjoint. By 
Lemma \ref{lem:switchbackreachable} and Prop.~\ref{prop:munuorbitLemma}, any $(\vec{\mu}, \vec{\nu}_{k+1})$-confined field-history that leads from a state of 
one loop to that of the other must pass  through one of the endpoints of the loop of origin and hence enter the destination loop through its corresponding endpoint. 
The order relations \eqref{eqn:NCU} and \eqref{eqn:NCD} in their strict form 
then imply  that there is no field-history from one loop to the other that is field-confined to the endpoints of the loop of origin and hence the two loops 
are disjoint. 
\end{proof}

\begin{proof}[Lemma \ref{lem:stacking}] 
The proof of the Lemma follows from adapting the proof of Lemma \ref{lemma:stacking} and we omit it.   
\end{proof}

\begin{proof}[Theorem \ref{prop:mert}]
 Note that the states $\vec{\nu}_i$ and $\vec{\mu}_j$ in \eqref{eqn:psimim} and \eqref{eqn:psiplm} are either endpoints or intermediate 
 states, hence by Lemma \ref{lem:iminter} they are not intersection states. We distinguish the following cases:
 
 \begin{itemize}
  \item [(a)] $U\vec{\nu}_i = \vec{\nu}$. Then by Lemma \ref{lemma:stacking} we have two disjoint loops, $(\vec{\mu}, \vec{\nu}_i)$ 
  and $(\vec{\mu}_1,\vec{\nu})$ with the possibility included that the latter loop is a singleton, $\vec{\mu}_1 = \vec{\nu}$. These are the 
  loops $\ell_-$ and $\ell_+$ of the proposition.  
  
  \item [(b)] $U\vec{\nu}_i \prec_U \vec{\nu}$, so that $U\vec{\nu}_i = \vec{\nu}_{i+1}$ is an intermediate state. Denote by 
  $\vec{\nu}_{n-1}$ the boundary point for which $U\vec{\nu}_{n-1} = \vec{\nu}$. It follows that $\vec{\nu}_{n-1} \succeq_U \vec{\nu}_{i+1}$. 
  The existence of an intermediate boundary point  $\vec{\nu}_{i + 1}$ implies in turn that the $D$-boundary of the $(\vec{\mu}, \vec{\nu})$-loop 
  has at least one intermediate point and thus $\vec{\mu} \prec_D \vec{\mu}_1 \prec_D \vec{\nu}$. 
  Applying Lemmas \ref{lemma:stacking} and \ref{lem:stacking} we have 
  \begin{align}
   \psi_+(\vec{\mu}_1) &= \vec{\nu}_{i+1}, \\
   \psi_-(\vec{\nu}_{n-1}) &= \vec{\mu}_{1},
  \end{align}
  and hence the pair $(\vec{\mu}_1, \vec{\nu}_{n-1})$ forms a loop $\ell_0$. By Lemma \ref{lemma:stacking} this loop is disjoint from the loop 
  $\ell_-$. By a similar argument, it is shown that $\ell_0$ and $\ell_+$ are disjoint as well. 
   \end{itemize}
  
  We shown next that $\ell_-$ and $\ell_+$ are disjoint. Consider the loop $(\vec{\mu}, \vec{\nu}_{n-1})$ this is the disjoint union of 
  $\ell_-$ and $\ell_0$. By an application of Lemma \ref{lemma:stacking} to $(\vec{\mu}, \vec{\nu}_{n-1})$, 
  it follows that $(\vec{\mu}, \vec{\nu}_{n-1})$ and 
  $\ell_+$ are disjoint as well. Hence $\ell_-$ and $\ell_+$ are disjoint. 
  
  What remains to be shown is that the standard partition procedure has a representation in terms of an ordered tree. The standard 
  partition decomposes the parent loop $(\vec{\mu}, \vec{\nu})$ into the sub loops $\ell_\pm$ and possibly $\ell_0$. We 
  can think of the relation among the loops as one between a parent and its off-springs and represent this relation 
  as a genealogical tree, as depicted in 
  Fig.~\ref{fig:CanonicalPart} (c) and (d). Comparing with panels (a) and (b) of the figure, it is clear that the ordering of 
  off-springs matters. Applying the partition procedure in turn to each of the off-spring loops and continuing in this manner, we 
  obtain an ordered tree. The leaves of this tree are singleton-loops and hence cannot be further decomposed. 
  At the same time they form also the set of states $\mathcal{R}_{\vec{\mu}, \vec{\nu}}$. The immediate parents of these 
  states are either  $2$ or $3$-loops, {\em cf.} Fig.~\ref{fig:LoopComp} (a) and (b). 
\end{proof}

\begin{proof}[Theorem~\ref{thm:planar}]
Consider the tree representation associated with the standard partition of the loop $(\vec{\mu}, \vec{\nu})$. 
From the standard partition, as depicted in Fig.~\ref{fig:CanonicalPart} we see that if the off-spring loops are planar, 
so is the parent loop. Continuing in this manner towards the leaves of the tree, since the set $\mathcal{R}_{\vec{\mu}, \vec{\nu}}$ is finite, 
after a finite number of steps we reach the leaves, whose parent loops are either $2$- or $3$ loops, 
as depicted in Fig.~\ref{fig:LoopComp} (a) and (b). These loops are clearly planar, hence so are any parent loops formed from these. 
\end{proof}

\begin{proof}[Proposition \ref{prop:algTermination}]
 Since the endpoints of sub loop $(\vec{\mu}, \vec{\nu})$ are $(\vec{\mu'}, \vec{\nu'})$-reachable, by Lemma \ref{lem:switchbackreachable} and Def.~\ref{def:reducedFH} 
 there exist reduced field-histories for $\vec{\mu}$ and $\vec{\nu}$. In the first step of the proof we use these field histories to show that 
 the repeated application of step (A1) and (A2) of the algorithm move the endpoints $\vec{\tilde{\nu}}$ and $\vec{\tilde{\mu}}$ of the loop monotonously outwards  
along the orbits $U^*\vec{\nu}$ and $D^*\vec{\mu}$, respectively. This leads to the termination condition. In the second step we argue that the termination condition 
would hold if the algorithm had been started with (A2) instead of (A1). What remains to be shown then is that, given a loop $(\vec{\mu}, \vec{\nu})$, the maximal loops found by Algorithm~\ref{alg:maxloop} does not depend on whether it is started with step (A1) or (A2). We will show this in the third step by proving that any loop $(\vec{\mu}, \vec{\nu})$ evolves under Algorithm~\ref{alg:maxloop} into a unique maximal loop. 

 {\bf Step 1:} Consider the reduced field-history of $\vec{\nu}$ and assume that it is of the form 
 \begin{equation}
  \vec{\nu} = D^{m_{k}} U^{n_{k}} D^{m_{k-1}} \cdots U^{n_2} D^{m_1} U^{n_1} \vec{\mu'},
  \label{eqref:mlnuFH}
 \end{equation}
 for some positive integers, $k, n_1, m_1, \ldots, m_{k-1}, n_k$. 
 This is the first of the four possible forms given in Def.~\ref{def:reducedFH}. The proof for the other three cases follows similar lines and will be therefore omitted. 
 Following Def.~\ref{def:reducedFH}, the switch-back states of the reduced field-history are  
 denoted by $\vec{\mu'}, \vec{\mu}_1, \ldots \vec{\mu}_{k-1}$ and $\vec{\nu'}, \vec{\nu_1}, \ldots, \vec{\nu}_{k-1}$, so that 
  \begin{equation}
 \vec{\mu}_0 \prec_D \vec{\mu}_1  \prec_D \cdots  \prec_D \vec{\mu}_{k-1} \prec_D \vec{\nu} \prec_U \vec{\nu}_{k-1} \prec_U \cdots 
  \prec_U \vec{\nu}_1 \prec_U \vec{\nu'},     
 \label{eqn:ml1}
  \end{equation}
  where we have set $\vec{\mu}_0 = \vec{\mu'}$. 
 Moreover, since $\vec{\mu} \prec_D \vec{\nu}$, it must be that either $\vec{\mu}_{k-1} \preceq_D \vec{\mu} \prec_D \vec{\nu}$ or there exists an integer $ 0 < j < k-1$ such
 that $\vec{\mu}_{j-1} \preceq_D \vec{\mu} \prec_D \vec{\mu}_j$. Thus the partial ordering \eqref{eqn:ml1} with $\vec{\mu}$ included takes one of the two 
 forms 
 \begin{equation}
 \vec{\mu}_0 \prec_D \vec{\mu}_1  \prec_D \cdots  \prec_D \vec{\mu}_{k-1} \preceq_D \vec{\mu} < \vec{\nu} \prec_U \vec{\nu}_{k-1} \prec_U \cdots 
 \prec_U  \vec{\nu}_1 \prec_U \vec{\nu}_0,     
 \label{eqn:muembed1}
 \end{equation}
 or
 \begin{equation}
 \vec{\mu}_0   \prec_D  \cdots \vec{\mu}_{j-1} \preceq_D \vec{\mu} \prec_D \vec{\mu}_j \cdots  \prec_D 
 \vec{\mu}_{k-1} \prec_D\vec{\nu} \prec_U \vec{\nu}_{k-1} \prec_U \cdots 
 \prec_U   \vec{\nu}_0.    
 \label{eqn:muembed2}
 \end{equation}
 Note that with the choice of reduced field-history \eqref{eqref:mlnuFH} and according to \eqref{eqn:loopSeqomega}, we have the corresponding sequence of nested sub-loops 
 \begin{equation}
 (\vec{\mu}_0,\vec{\nu}_0), (\vec{\mu}_0,\vec{\nu}_1), (\vec{\mu_1},\vec{\nu}_1), (\vec{\mu_1},\vec{\nu}_2), \ldots, (\vec{\mu}_{k-1}, \vec{\nu}_k), 
 \label{eqn:loopSeqomegaFH}
\end{equation}  
where we have set $\vec{\nu}_k = \vec{\nu}$. 

\begin{figure}[t!]
  \begin{center}
    \includegraphics[width = 4.2in]{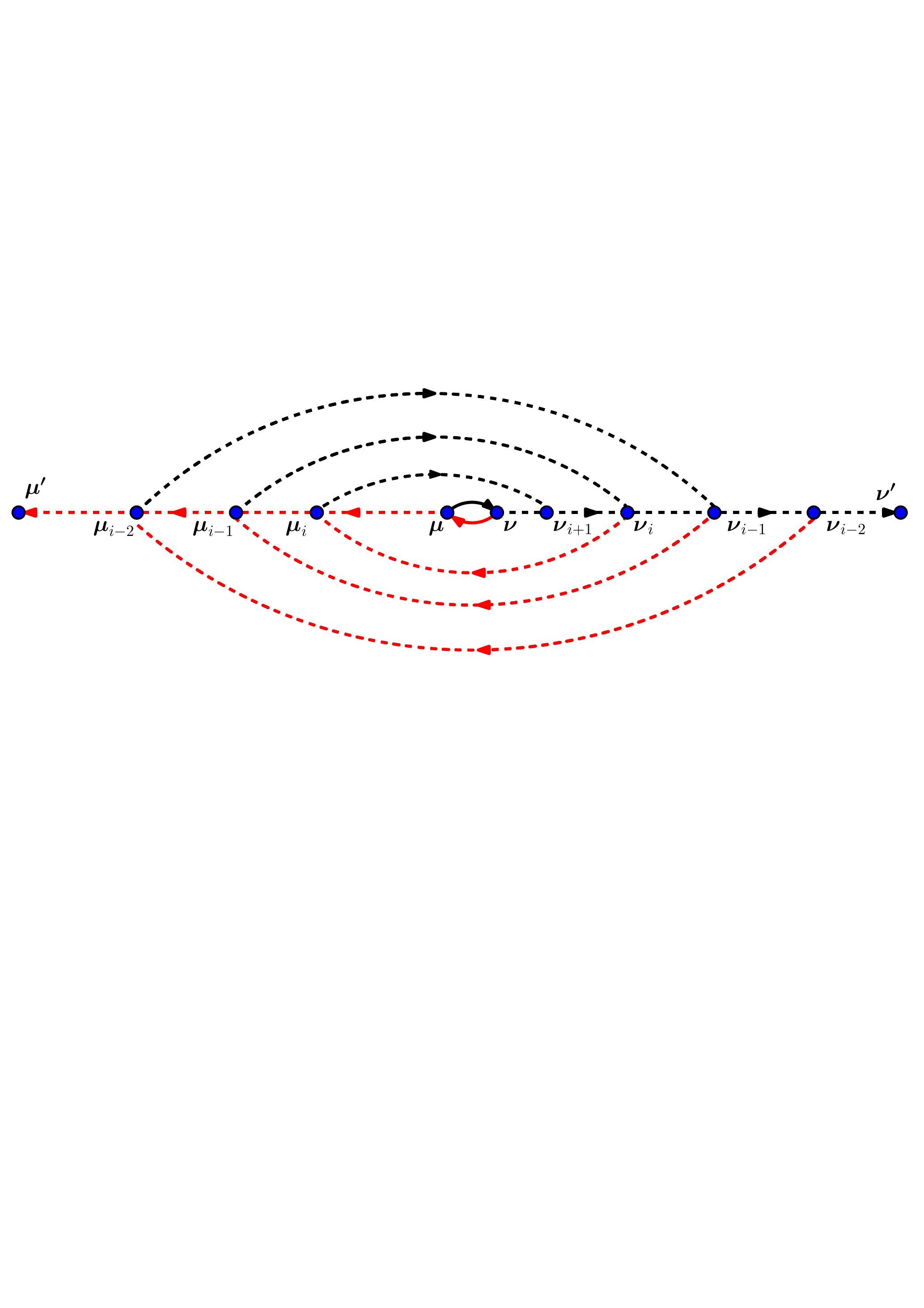} 
  \end{center}
  \caption{Proof of Prop.~\ref{prop:algTermination}. The loop $(\vec{\mu}, \vec{\nu})$ is a sub-loop of a maximal loop $(\vec{\mu'}, \vec{\nu'})$. The 
  states $\vec{\mu}_i$ and $\vec{\nu}_i$ are switch-back states of a reduced field-history of $\vec{\nu}$ starting at  $\vec{\mu'}$. The black and red 
  dashed arrows are $U$- and $D$-orbits, respectively. By Prop.~\ref{prop:munuorbitLemma}, these orbits flow through the end points of the maximal loop 
  $(\vec{\mu'}, \vec{\nu'})$. The figure corresponds to the partial ordering given in \eqref{eqn:muembed1}. Refer to the text for further details.
  } 
  \label{fig:MLProof}
\end{figure}

Referring to Fig.~\ref{fig:MLProof}, suppose that the maximal loop Algorithm \ref{alg:maxloop} has reached the loop $(\vec{\mu}_i, \vec{\nu}_i)$ and assume first that 
this happened at the end of its (A1) step, so that $\vec{\tilde{\mu}} = \vec{\mu}_i$. Then $\vec{\tilde{\nu}}$ necessarily has to satisfy 
\[
  \vec{\nu}_i \preceq_U \vec{\tilde{\nu}} \prec_U \vec{\nu}_{i-1}, 
\]
that is, $\vec{\tilde{\nu}}$ has to lie on the segment of the $U$ orbit starting at $\vec{\nu}_i$ but not containing  $\vec{\nu}_{i-1}$, since the $D$-orbit of the latter, by construction 
of the reduced field-history of $\vec{\mu}$ does not contain $\vec{\mu}_{i-1}$. By Prop.~\ref{prop:nestinglemma} applied to the loop $(\vec{\mu}_{i-1},\vec{\nu}_{i-1})$ and 
$\vec{\nu}_i$ playing the role of $\vec{\sigma}$, there exists a $U$-largest state $\vec{\eta} = \vec{\tilde{\nu}}$ with $ \vec{\nu}_i \preceq_U \vec{\tilde{\nu}} \prec_U \vec{\nu}_{i-1} $, 
whose $D$-orbit contains $\vec{\mu}_{i}$. This could be the state $\vec{\tilde{\nu}} = \vec{\nu}_i$, whose $D$-orbit contains $\vec{\mu}_{i}$. Without loss of generality we may as well assume 
that $\vec{\tilde{\nu}} = \vec{\nu}_i$, since the orbit and incidence relations as depicted in Fig.~\ref{fig:MLProof} remain the same. In other words,  the state $\vec{\nu}_i$ can be replaced by 
$\vec{\tilde{\nu}}$ without changing the incident relations of the $U$ and $D$-orbits shown. Now consider the result of applying the (A2) step. Since $\vec{\mu}_{i-1} \prec_U \vec{\tilde{\nu}}$, it is 
clear that the new lower endpoint $\vec{\tilde{\mu}}$ has to satisfy
\[
\vec{\mu}_{i-2} \prec_D \vec{\tilde{\mu}} \preceq_D \vec{\mu}_{i-1},
\]
since, again by construction, $\vec{\tilde{\nu}} \notin U^*\vec{\mu}_{i-2}$. Again without loss of generality we can replace $\vec{\mu}_{i-1}$ by $\vec{\tilde{\mu}}$ in the figure. We have in 
effect shown that the repeated application of step (A1) and (A2) of the algorithm move the endpoints $\vec{\tilde{\nu}}$ and $\vec{\tilde{\mu}}$ of the loop monotonously outwards respectively 
along the orbits $U^*\vec{\nu}$ and $D^*\vec{\mu}$. These are the horizontal dashed lines of Fig.~\ref{fig:MLProof}. By Prop.~\ref{prop:munuorbitLemma} these orbits 
contain $\vec{\mu'}$ and $\vec{\nu'}$, and since the latter form the 
endpoints of a maximal loop, the algorithm will terminate once these have been reached. 

{\bf Step 2:} If Algorithm~\ref{alg:maxloop} had been defined as starting with step (A2) instead of (A1), the definition of a maximal loop, Def.~\ref{def:loopmarginality} would remain the same and we can use the above line of reasoning to show that given a maximal 
loop $(\vec{\mu'}, \vec{\nu'})$ and a sub-loop $(\vec{\mu}, \vec{\nu})$, Algorithm \ref{alg:maxloop} initiated with (A2) would terminate 
with $(\vec{\mu'}, \vec{\nu'})$. 

{\bf Step 3:} We now prove that given a loop $(\vec{\mu}, \vec{\nu})$, the maximal loops found by Algorithm~\ref{alg:maxloop} does not depend on whether it is started with step (A1) or (A2). Suppose that $(\vec{\mu'}, \vec{\nu'})$ is a loop that contains the sub-loop $(\vec{\mu},\vec{\nu})$ such that the endpoints $\vec{\mu}$ and 
$\vec{\nu}$ are $(\vec{\mu'}, \vec{\nu'})$-reachable. Then any $(\vec{\mu'}, \vec{\nu'})$ reachable pair of states $\vec{\sigma}$ and $\vec{\eta}$ such that $\vec{\sigma} \in U^*\vec{\nu}$ and $\vec{\eta} \in D^*\vec{\mu}$ 
must satisfy
\[
 \vec{\sigma} \preceq_U \vec{\nu}^\prime \ \ \ \mbox{and} \ \ \ \vec{\mu}^\prime \preceq_D \vec{\eta}, 
\]
as follows from Prop. \ref{prop:munuorbitLemma}.  

To show the uniqueness of the maximal loops, we first assume that the loop $(\vec{\mu}, \vec{\nu})$ is both $(\vec{\mu'}_1, \vec{\nu'}_1)$-and $(\vec{\mu'}_2, \vec{\nu'}_2)$-reachable. Given any $(\vec{\mu'}, \vec{\nu'})$-reachable state $\vec{\sigma}$, by Prop. \ref{prop:munuorbitLemma}, 
$\vec{\nu'} \in U^*\vec{\sigma}$ and $\vec{\mu'} \in D^*\vec{\sigma}$. Thus for the maximal loops $(\vec{\mu'}_1, \vec{\nu'}_1)$ and $(\vec{\mu'}_2, \vec{\nu'}_2)$
we have $\vec{\mu'}_1,\vec{\mu'}_2 \preceq_D \vec{\mu}$ and $ \vec{\nu} \preceq_U \vec{\nu'}_1, \vec{\nu'}_2$. Without loss of generality assume that   
that $\vec{\mu'}_2 \preceq_D \vec{\mu'}_1$. Then since the two maximal loops are distinct it must be that either (i) 
$\vec{\mu'}_2 \prec_D \vec{\mu'}_1$, or (ii) $\vec{\mu'}_2 = \vec{\mu'}_1$, in which case we either have $\vec{\nu'}_1 \prec_U \vec{\nu'}_2$ or 
$\vec{\nu'}_2 \prec_U \vec{\nu'}_1$. Thus the pair of lower end points or the pair of upper endpoints must satisfy a strict partial order relation. Now suppose that $\vec{\mu'}_2 \prec_D \vec{\mu'}_1$. This implies that $\vec{\mu'}_1$ is $(\vec{\mu'}_2, \vec{\nu'}_2)$-reachable so $\vec{\mu'}_1$ cannot be the endpoint of a maximal loop. The case where a strict order exists between $\vec{\nu'}_1$ and $\vec{\nu'}_2$ proceeds similarly, and thus if both $(\vec{\mu'}_1, \vec{\nu'}_1)$ and $(\vec{\mu'}_2, \vec{\nu'}_2)$ are maximal loops they must be identical. 
%
%
\end{proof}

\begin{proof}[Lemma \ref{lem:confinement}]
 This is a direct consequence of the $\ell$RPM property. Part (a) of the assertion follows from Proposition \ref{prop:munuorbitLemma} and the 
 compatibility conditions \eqref{eqn:stabilityII} and \eqref{eqn:stabilityIV}. Part (b) follows from \eqref{eqn:FUaction} and \eqref{eqn:FDaction} 
 and part (a), since the loop $(\vec{\mu}, \vec{\nu})$ can only be left through its endpoints and the trapping conditions \eqref{eqn:Fconst} ensure 
 that even if the end points are reached, transitions out of the loop cannot occur. 
\end{proof}

\begin{proof}[Lemma \ref{lem:loopmarginality}]
 Let $\hat{F}^\pm$ be a simple forcing that is $(\vec{\mu}, \vec{\nu})$-traversable, so that 
 \[
  \mathcal{U}[\hat{F}^+]\vec{\mu} = \vec{\nu} \quad \mbox \quad \mathcal{D}[\hat{F}^-]\vec{\nu} = \vec{\mu}.
 \]
It must therefore be that $F^-[\vec{\mu}] < \hat{F}^- < \hat{F}^+ < F^+[\vec{\nu}]$. Now consider the state $\vec{\mu}^{(n)}$. By definition this is the 
state on the $D$-boundary of the loop $(\vec{\mu}, \vec{\nu})$ preceding $\vec{\mu}$, \eqref{eqn:miact}. Since the forcing is $(\vec{\mu}, \vec{\nu})$-traversable, 
the amplitude $\hat{F}^-$ is bounded as 
\begin{equation}
 F^-[\vec{\mu}] < \hat{F}^- \leq F^-[\vec{\mu}^{(n)}].
 \label{eq;Fhatm1}
\end{equation}
In order for the loop $(\vec{\mu}, \vec{\nu})$ to be marginal, we require that for any $\hat{F}^-$ satisfying 
the above inequality, the following inequality holds
\begin{equation}
 \hat{F}^- \leq F^-[\vec{\mu}^{(i)}], \label{eq;Fhatm2}
\end{equation}
for all $1 \leq i \leq n$. We thus conclude that it must be that 
\[
F^-[\vec{\mu}^{(i)}] \geq  F^-[\vec{\mu}^{(n)}], 
\]
for all $i = 1, 2, \ldots, n-1$. The inequalities for the trapping fields $F^+[\vec{\nu}^{(j)}]$ are obtained in a similar manner and we 
omit the proof. 
\end{proof}

\begin{proof}[Proposition~\ref{prop:AQSmarginality}]
$(b) \Rightarrow (a)$: given any loop $(\vec{\mu}, \vec{\nu})$, the associated monotonous sequences $(F^-[\vec{\mu}^{(i)}]_{1 \leq i \leq n}$ 
and $(F^+[\vec{\nu}^{(j)}])_{0 \leq m \leq m-1}$, satisfy conditions \eqref{eqn:FMcond1} and \eqref{eqn:FMcond2} of 
Lemma~\ref{lem:loopmarginality} and hence the loop is marginal. 

$(a) \Rightarrow (b)$: Consider any loop $(\vec{\mu}, \vec{\nu})$. By assumption, this loop is marginal. Let the associated predecessor states 
$(\vec{\mu}^{(i)})_{1 \leq i \leq n}$ and $(\vec{\nu}^{(j)})_{0 \leq m \leq m-1}$ be defined as in Lemma~\ref{lem:loopmarginality}. Thus 
we have the inequalities,  
\begin{align*}
 F^-[\vec{\mu}^{(i)}] &\geq F^-[\vec{\mu}^{(n)}], \quad 1 \leq i \leq n, \\
 F^+[\vec{\nu}^{(j)}] &\leq F^+[\vec{\nu}^{(0)}], \quad 0 \leq j \leq m-1. 
\end{align*}
Consider next the loops $(\vec{\mu},\vec{\nu}_{n-1})$ and 
$(\vec{\mu}_{1},\vec{\nu})$. Their predecessor states are subsets of those of the loop $(\vec{\mu}, \vec{\nu})$. We shall therefore  
retain their labels so that they are given by 
$(\vec{\mu}^{(i)})_{1 \leq i \leq n-1}$ and $(\vec{\nu}^{(j)})_{1 \leq m \leq m-1}$. The marginality of these sub loops implies that 
\begin{align*}
 F^-[\vec{\mu}^{(i)}] &\geq F^-[\vec{\mu}^{(n-1)}], \quad 1 \leq i \leq n-1, \\
 F^+[\vec{\nu}^{(j)}] &\leq F^+[\vec{\nu}^{(1)}], \quad 1 \leq j \leq m-1.
\end{align*}
Proceeding inductively, $(F^-[\vec{\mu}^{(i)}])_{1 \leq i \leq n}$ 
  and 
$(F^+[\vec{\nu}^{(j)}])_{0 \leq j \leq m-1}$ are shown to be non-increasing sequences.
\end{proof}

\begin{proof}[Theorem~\ref{thm:trans}]
 By Lemma \ref{lem:confinement}, the trajectory is confined to the states of the loop $(\vec{\mu}, \vec{\nu})$. This loop can be represented 
 by a tree with the initial state $\vec{\sigma}_0$ as one of its leaves, as defined in Theorem.~\ref{prop:mert}. There is a unique path from the leaf $\vec{\sigma}_0$ to the 
 root -- the loop $(\vec{\mu}, \vec{\nu})$ -- of the tree. Index the sequence of loops along this path by the integer $p = 0, 1, 2, \ldots, q$ so that 
 \begin{align*}
    (\vec{\mu}_0,\vec{\nu}_0) &= (\vec{\sigma}_0,\vec{\sigma}_0). \\
    (\vec{\mu}_q,\vec{\nu}_q) &= (\vec{\mu}, \vec{\nu}).
 \end{align*}
 From the standard partition, Thm.~\ref{prop:mert}, the following are easily 
 shown ({\em cf. } Fig.~\ref{fig:CanonicalPart}): 
 \begin{itemize}
  \item [(i)] $\vec{\mu}_{p+1} \preceq_D \vec{\mu}_p$,
  \item [(ii)] $\vec{\nu}_{p} \preceq_U \vec{\nu}_{p+1}$,
  \item [(iii)] two consecutive loops,  $(\vec{\mu}_p, \vec{\nu}_p)$ and $(\vec{\mu}_{p+1}, \vec{\nu}_{p+1})$, 
  do have at most one common endpoint.
 \end{itemize}
 Thus the sequences $(\vec{\mu}_p)$ and $(\vec{\nu}_p)$ are non-increasing and non-decreasing with respect to the partial order 
 $\prec_D$ and $\prec_U$, respectively. From the properties of the 
 trapping fields, \eqref{eqn:stabilityII} and \eqref{eqn:stabilityIV}, it follows then that the sequences $F^-(\vec{\mu}_p)$ and $F^+(\vec{\nu}_p)$ 
 are non-increasing, respectively, non-decreasing. 
 
 As a result, there exists a {\em smallest} $p$ such that the corresponding loop $(\vec{\mu}_p, \vec{\nu}_p)$ traps the forcing $F_t$, implying that 
 the following holds:  
 \begin{equation}
  F^-(\vec{\mu}_p) < \hat{F}^- < \hat{F}^+ < F^+(\vec{\nu}_p). 
  \label{eqn:ForceComp}
 \end{equation}
 Assume now that the loop partition of the trapping loop $(\vec{\mu}_p, \vec{\nu}_p)$ yields three loops. Label these as  $\ell^-_{p-1}, \ell^0_{p-1}$ and 
 $\ell^+_{p-1}$, for the left, middle and right loop of the partition. The case when the loop $(\vec{\mu}_p, \vec{\nu}_p)$ has only 
 two off-spring loops proceeds similarly. Since the off-spring loops are disjoint, $\vec{\sigma}_0$ must belong to one and only one of these loops. 
 The key observation underlying the remainder of the proof is to 
 note that once the off-spring loop is left, for all later times $t$ the system trajectory $\vec{\sigma}_t$ is confined to the {\em boundary} of 
 some initial condition-independent loop 
 that contains this off-spring loop. A trajectory confined to the boundary of some loop has to be periodic. What remains to be shown then is 
 that due to marginal stability the boundary of such a confining loop is reached after a transient of at most one period. 
Let us now consider each of the three cases one by one.

\noindent {\bf Case $\vec{\sigma}_0 \in \ell^-_{p-1}$:}
  This implies that 
  \begin{align*}
  \vec{\mu}_{p-1} &= \vec{\mu}_p, \\ 
  \vec{\nu}_{p-1} &\prec_U \vec{\nu}_p, 
  \end{align*}
  Consequently, 
  from \eqref{eqn:ForceComp} we see that when $F_t \geq F^+(\vec{\nu}_{p-1})$, the state trajectory $\vec{\sigma}_t$ must leave the loop 
  $(\vec{\mu}_{p-1}, \vec{\nu}_{p-1})$ sometime during the {\em first} period and while the field is still rising. 
  From Prop.~\ref{prop:munuorbitLemma} we know that it can leave this loop only through 
  the upper endpoint $\vec{\nu}_{p-1}$. Thus under the action of the increasing segment of $F_t$ the state vector eventually leaves 
  the loop $\ell^-_{p-1}$. Note however that once this happens, $\vec{\sigma}_t$ is confined to the $U$-boundary of the loop 
  $(\vec{\mu}_{p-1}, \vec{\nu}_p) = (\vec{\mu}_{p}, \vec{\nu}_p)$, {\em cf.} Fig.~\ref{fig:CanonicalPart}. 
  
  The state $\vec{\sigma}_t$ will evolve along the $U$-boundary of this loop until it reaches the first 
   boundary state $\vec{\nu'}$ for which  $\hat{F}^+ <  F^+(\vec{\nu'})$. The existence of such a state is guaranteed by condition \eqref{eqn:ForceComp}.
  By the RPM property, the pair $(\vec{\mu}_p, \vec{\nu'})$ forms a loop. Once $F_t$ starts to decrease, the state vector will move along 
  the $D$-boundary of this loop until it reaches the state $\vec{\mu}'$ at $\hat{F}^-$ and $\vec{\mu}_p \preceq_D \vec{\mu}'$.
  By the RPM property, the pair $(\vec{\mu}', \vec{\nu'})$ forms a loop.
  The subsequent increment of $F$ now causes the 
  state vector to move along the $U$-boundary of the loop $(\vec{\mu}', \vec{\nu'})$. Since $(\vec{\mu}', \vec{\nu'})$ is a sub loop of 
  $(\vec{\mu}_p, \vec{\nu'})$, and the state $\vec{\nu'}$ has been visited at the initial rise of the force to $\hat{F}^+$, the marginal stability 
  property ensures that the state $\vec{\nu'}$ will be revisited by the time the force has become again $\hat{F}^+$.  
  A periodic limit-cycle has therefore set in before the second forcing cycle completed.

  \noindent {\bf Case $\vec{\sigma}_0 \in \ell^+_{p-1}$:} This implies that
  \begin{align*}
     \vec{\mu}_{p} &\prec_D \vec{\mu}_{p-1}, \\ 
      \vec{\nu}_{p-1} &= \vec{\nu}_p. 
  \end{align*}
  By the time the force has reached its maximum value in the first period of the forcing, the trajectory will have reached a 
  state $\vec{\sigma_t} \in \ell^+_{p-1}$. The subsequent force decrease to $\hat{F}^-$ will move the system trajectory to 
  a state $\vec{\mu'} \prec_D \vec{\mu}_{p-1}$ which is on the $D$-boundary of the loop $(\vec{\mu}_p, \vec{\nu}_p)$. 
  The next increase of the force to the value $\hat{F}^+$ move the system to a state $\vec{\nu}'$. By the RPM property, 
  the pair $(\vec{\mu}', \vec{\nu'})$ forms a loop and due to the marginal stability property,
  the system   returns to the state $\vec{\mu}'$ upon decrease of the force to $\hat{F}^-$. From there on the trajectory will trace out the 
  boundary of the loop $(\vec{\mu'}, \vec{\nu}')$ and hence a limit-cycle has been attained. 

  \noindent {\bf Case $\vec{\sigma}_0 \in \ell^0_{p-1}$:} This implies that 
  \begin{align*}
  D\vec{\mu}_{p-1} &= \vec{\mu}_p, \\ 
  U\vec{\nu}_{p-1} &= \vec{\nu}_p, 
  \end{align*}
  as the standard partition glues the upper and lower endpoints of the middle loop $\ell^0_{p-1}$ to the upper endpoint $\vec{\nu}_p$ of the right loop 
  $\ell^+_{p-1}$ and the lower endpoint $\vec{\mu}_p$ of the left loop $\ell^-_{p-1}$, respectively. 
  Sine the loop $\ell^0_{p-1}$ is non-trapping, there are three possibilities: (i) only the upper endpoint is non-trapping, 
  (ii) only the lower endpoint is non-trapping, (iii) both endpoints 
  are non-trapping. The proof of (i) and (ii) proceeds along the same lines as that  of the cases $\vec{\sigma}_0 \in \ell^-_{p-1}$, 
  and $\vec{\sigma}_0 \in \ell^+_{p-1}$, respectively. We therefore omit these proofs and consider 
  only possibility (iii) where both endpoints of $\ell^0_{p-1}$ are non-trapping. As the force is increasing in the first cycle, 
  the state vector will leave the loop $\ell^0_{p-1}$ through its upper endpoint and arrive at $\vec{\nu}_p$, where it will remain 
  until the reaches its maximum value $F^+_0$, since $(\vec{\mu}_p, \vec{\nu}_p)$ is trapping. 
  On the  the subsequent decrease of the 
  driving force, the trajectory must move along the $D$-boundary of the loop, passing through the state $\vec{\mu}_{p-1}$ and reaching the subsequent 
  state $(\vec{\mu}_{p}$, once the minimum value $F^-_0$ is attained.  
  As a result of the driving the trajectory traverses thus the boundary of the 
  loop  $(\vec{\mu}_p, \vec{\nu}_p)$ and a limit cycle will have set in by the time the force attains its maximum for the second time and the 
  state $\vec{\nu}_p$ is reached. 
\end{proof}

\begin{figure}[t!]
  \begin{center}
    \includegraphics[width = 3.8in]{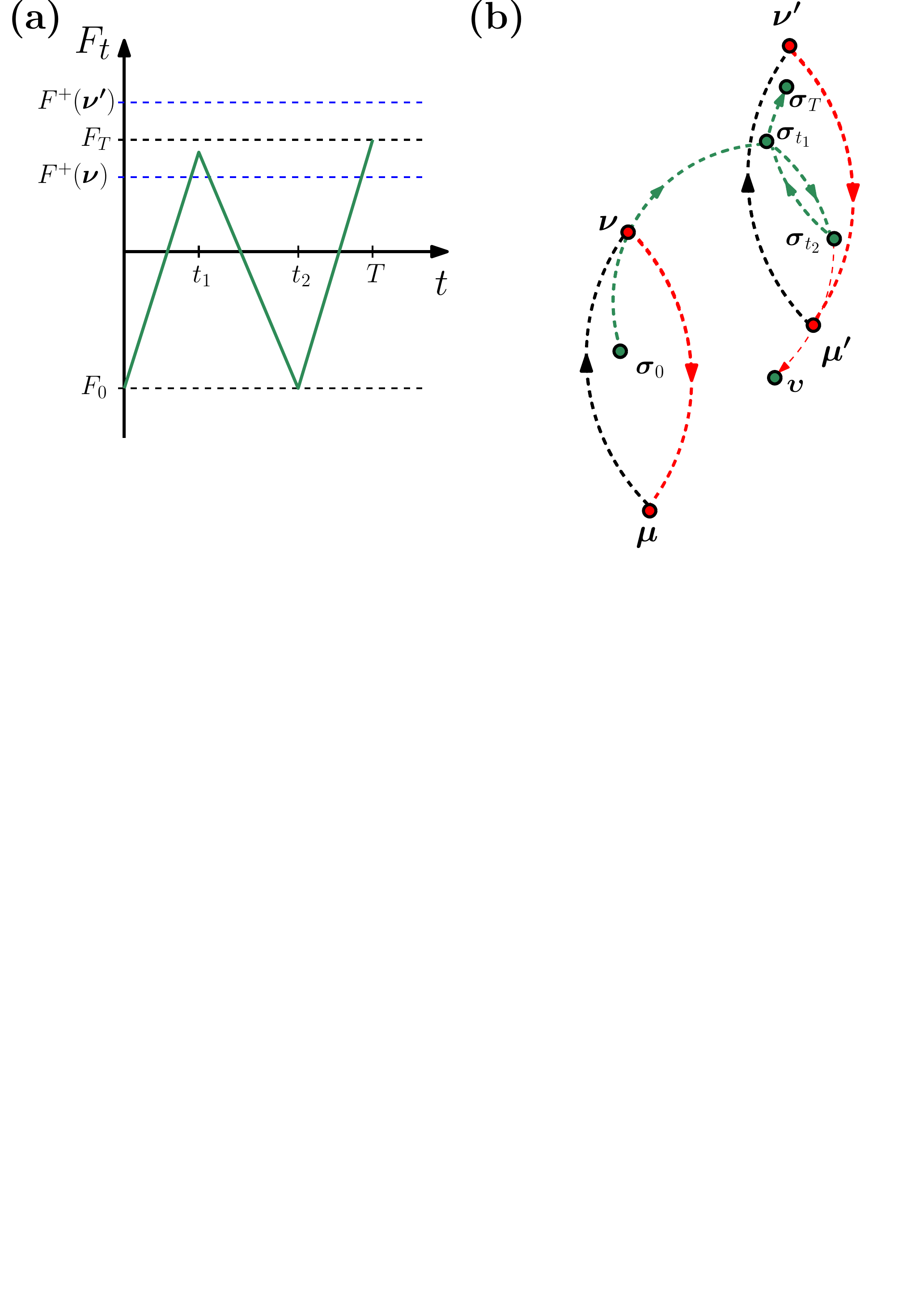} 
  \end{center}
  \caption{The No-passing property and trapping maximal loops, corresponding to part (a) of the proof of Thm.~\ref{thm:NPInterloop}. The 
  system starts out in state $\vec{\sigma}_0$ belonging to the maximal loop $(\vec{\mu}, \vec{\nu}$. A force $(F_t)_{0 \leq t \leq T}$,
  monotonously increasing from $F_0$ to $F_{t_1}$ and then decreasing back to $F_0$ a time $T$ is applied. Panel (b) shows the 
  possible system trajectory. Since $F_{t_1} > F^+(\vec{\nu})$, $\vec{\sigma}_t$ must leave the maximal loop $(\vec{\mu}, \vec{\nu}$ at 
  some time $t \leq t_1$ and enter the maximum loop containing $\vec{\sigma}_{t_1}$. With a subsequent decrease of the force during 
  $t_1 < t < t_2$ the system follows the orbit $D^*\vec{\sigma}_{t_1}$ until reaching the state $\vec{\sigma}_{t_2}$. The NP property 
  prevents this orbit to leave the loop, and it therefore must be that $F^-(\vec{\mu'}) \leq F_0$. 
  } 
  \label{fig:NPproof}
\end{figure}

\begin{proof}[Theorem~\ref{thm:NPInterloop}]
 The idea of the proof is very similar to the RPM proof of \cite{Sethna93}. The key additional ingredient is that once a maximum loop 
 $(\vec{\mu}, \vec{\nu})$ has been left through one of its endpoints, a field-reversal will not cause the trajectory to re-enter it. 
 We then use the no-passing property to construct a situation where once a maximal loop has been entered it cannot be left, showing that 
 the only way in which this can be assured is via conditions (a) or (b) of the theorem. The situation is illustrated in 
 Fig.~\ref{fig:NPproof}. We will only prove part (a) of the theorem, since the proof of (b) follows a similar reasoning. 

 Suppose we start out with a state $\vec{\sigma}_0$ belonging to the maximal loop $(\vec{\mu}, \vec{\nu})$ and suppose we apply a forcing 
 $(F_t)_{0 \leq t \leq T}$, as depicted in Fig.~\ref{fig:NPproof}(a), rising first to a value $F_{t_1} > F^+(\vec{\nu})$ at $t_1$, then lowering 
 back to $F_0$ at $t_2$ before increasing to some final value $F_T > F^+(\vec{\nu})$. Thus by the time $t_1$ the trajectory $\vec{\sigma}_t$  
 will have left the maximal loop $(\vec{\mu}, \vec{\nu})$ through its upper endpoint. Since $F_t$ is increasing up to time $t_1$, this means 
 that for $0 \leq t \leq t_1$, 
 the trajectory follows the orbit $U^*\vec{\sigma}_0$. Suppose that the state $\vec{\sigma}_{t_1}$ reached at time $t_1$ 
 belongs to the maximal loop $(\vec{\mu'}, \vec{\nu'})$, so that $\vec{\sigma}_{t_1} \in \mathcal{R}_{\vec{\mu'}, \vec{\nu'}}$. It follows  
 that
 \begin{equation}
  F^+(\vec{\nu'}) \geq F^+(\vec{\sigma}_{t_1}) > F_{t_1},
  \label{eqn:NPproofa1}
 \end{equation}
 and $(\vec{\mu'}, \vec{\nu'})$ can indeed trap the state at $t_1$. The subsequent field reduction 
 will cause the trajectory to follow the orbit $D^*\vec{\sigma}_{t_1}$. We then use the no-passing property to show that at time $t_2$ the 
 trajectory could not have passed through the lower endpoint $\vec{\mu'}$ of this maximal loop and hence it must be that 
 $F^-(\vec{\mu'}) \leq F_0$. Since the initial state is arbitrary and we could in particular have chosen $\vec{\sigma}_0 = \vec{\mu}$ as an 
 initial states and thus 
 $F_0 = F^-(\vec{\mu})$, it follows that
 \begin{equation}
  F^-(\vec{\mu'}) \leq F^-(\vec{\mu}).
  \label{eqn:NPproofa2}
 \end{equation}
  Equations \eqref{eqn:NPproofa1} and \eqref{eqn:NPproofa2} are assertion (a) of the Theorem.   
  
  Finishing the proof of (a) we now show that NP implies $F^-(\vec{\mu'}) \leq F_0$. Suppose we have two additional forcings 
  $(F^{(1)}_t)_{0 \leq t \leq T}$ and $(F^{(2)}_t)_{0 \leq t \leq T}$ such that 
  \begin{align*}
   F^{(1)}_t &= \min_{t \leq t' \leq T}  F_{t'}, \\
   F^{(2)}_t &= \max_{0 \leq t' \leq t}  F_{t'}.
  \end{align*}
  So that for all $0 \leq t \leq T$
  \[
   F^{(1)}_t \leq F_t \leq F^{(2)}_t.
  \]
  with $F^{(1)}_0 = F_0 = F^{(2)}_0$ and $F^{(1)}_T = F_T = F^{(2)}_T$. By construction both forcings are non-decreasing. 
  Apply these to $\vec{\sigma}_0$ and denote the resulting trajectories 
  $\vec{\sigma}^{(1)}_t$  and $\vec{\sigma}^{(1)}_t$. Both of these trajectories must follow the orbit $U^*\vec{\sigma}_0$, since 
  the forces are non-decreasing. It thus follows that $\vec{\sigma}^{(1)}_T = \vec{\sigma}^{(2)}_T$. From the NP property we moreover 
  find that $\vec{\sigma}_T = \vec{\sigma}^{(1)}_T$, so that all three trajectories end up in the same state at time $T$. Since 
  $(\vec{\mu'}, \vec{\nu'})$ is a maximal loop, for $t_1 \leq t \leq t_2$ the trajectory $\vec{\sigma}_t$ could not have left the 
  maximal loop. To see this note that  $t_1 \leq t \leq t_2$, $\vec{\sigma}_t$ follows the orbit $D^*\vec{\sigma}_{t_1}$. If 
  $\vec{\sigma}_{t_2} \prec_D \vec{\mu'}$ then the subsequent force increment in $t_2 \leq t \leq T$, which will follow the orbit
  $U^*\vec{\sigma}_{t_2}$ cannot return to the maximal loop, since if it did, we would have found a loop $(\vec{\sigma}_{t_2}, \vec{\nu'})$ 
  containing the maximal loop $(\vec{\mu'}, \vec{\nu'})$, which is a contradiction. The only possibility is that 
  $\vec{\mu'} \preceq_D \vec{\sigma}_{t_2}$ and therefore $F^-(\vec{\mu'}) \leq F_0$.   
\end{proof}

\end{document}